\tikzstyle{morphism}=[fill=white, draw=black, shape=rectangle]
\tikzstyle{medium box}=[fill=white, draw=black, shape=rectangle, minimum width=0.8cm, minimum height=0.9cm]
\tikzstyle{large morphism}=[fill=white, draw=black, shape=rectangle, minimum width=1.7cm, minimum height=1cm]
\tikzstyle{bn}=[fill=black, draw=black, shape=circle, inner sep=1.5pt]
\tikzstyle{state}=[fill=white, draw=black, regular polygon, regular polygon sides=3, minimum width=0.8cm, shape border rotate=180, inner sep=0pt]
\tikzstyle{medium state}=[fill=white, draw=black, regular polygon, regular polygon sides=3, minimum width=1.3cm, inner sep=0pt, shape border rotate=180]
\tikzstyle{large state}=[fill=white, draw=black, regular polygon, regular polygon sides=3, minimum width=2.2cm, shape border rotate=180, inner sep=0pt]
\tikzstyle{wn}=[fill=white, draw=black, shape=circle, inner sep=1.5pt]
\tikzstyle{wide state}=[fill=white, draw=black, shape=isosceles triangle, minimum width=0.8cm, shape border rotate=270, inner sep=1.4pt, minimum height=0.5cm, isosceles triangle apex angle=80]
\tikzstyle{evalold}=[fill=white, draw=black, shape=isosceles triangle, minimum width=1.4cm, shape border rotate=90, inner sep=1.4pt, minimum height=0.4cm, isosceles triangle apex angle=110]
\tikzstyle{eval}=[fill=white, draw=black, shape=rectangle, minimum width=1.4cm, minimum height=0.55cm, inner sep=1.4pt, font={$\eval$}]
\tikzstyle{arrow}=[->]
\tikzstyle{dashed box}=[-, dashed]
\tikzstyle{mapsto}=[{|->}]
\tikzstyle{double wire}=[-, double]
\tikzstyle{protected}=[-, preaction={{ultra thick,white,draw}}]
\tikzstyle{ambient fill}=[-, draw=none, fill={rgb,255: red,245; green,220; blue,255}, tikzit draw={rgb,255: red,210; green,130; blue,255}] 
\newtheorem{theorem}{Theorem}[section]
\newtheorem{corollary}[theorem]{Corollary}
\newtheorem{lemma}[theorem]{Lemma}
\newtheorem{proposition}[theorem]{Proposition}
\newtheorem{definition}[theorem]{Definition}
\theoremstyle{definition}
\numberwithin{equation}{section}
\newcommand\xqed[1]{%
  \leavevmode\unskip\penalty9999 \hbox{}\nobreak\hfill
  \quad\hbox{#1}}
\newcommand{\demo}{\xqed{$\triangle$}}
 \declaretheoremstyle[
   spaceabove=\topsep, spacebelow=\topsep,
   postheadspace=.5em plus .5em minus .2em,
   numberlike=theorem,
   qed=\demo
 ]{mythmstyle}
 \declaretheorem[style=mythmstyle]{example}
 \declaretheorem[style=mythmstyle]{remark}
\newcommand{\N}{\mathbb{N}}
\newcommand{\R}{\mathbb{R}}
\definecolor{amber}{rgb}{1.0, 0.75, 0.0}
\newcommand\mtiny[1]{\mbox{\tiny\ensuremath{#1}}}
\newcommand{\parlabel}[1]{\textbf{#1.} }
\newcommand{\newemph}[1]{\emph{#1}}
\let\originalleft\left
\let\originalright\right
\renewcommand{\left}{\mathopen{}\mathclose\bgroup\originalleft}
\renewcommand{\right}{\aftergroup\egroup\originalright}
\newcommand{\moonrel}{\gtrdot}
\newcommand{\brel}{\mathord{ \moonrel_{\mtiny B}} } 
\newcommand{\brelop}{\mathrel{\brel}} 
\newcommand{\mrel}{\mathord{ \moonrel }} 
\newcommand{\mrelop}{\mathrel{\mrel}} 
\newcommand{\eval}{\mathrm{eval}} 
\newcommand{\cred}{\mathord{\rightharpoondown}} 
\newcommand{\credop}{\mathrel{\cred}} 
\newcommand{\oplaxcred}{\mathord{\rightharpoonup}} 
\newcommand{\oplaxcredop}{\mathrel{\oplaxcred}} 
\newcommand{\prog}{\mathbb{P}}		
\newcommand{\imrel}{\mathord{ \moonrel^{\mathrm{im}}_{\mtiny B}} }
\newcommand{\imrelop}{\mathrel{\imrel}}
\newcommand{\sngltn}{\mathord{\ast}}			
\newcommand{\pwrset}{\mathcal{P}}
\newcommand{\brlsets}{\mathcal{B}}
\newcommand{\cop}{\mathrm{copy}}
\newcommand{\discard}{\mathrm{del}}
\newcommand{\id}{\mathrm{id}}
\DeclareMathOperator{\dom}{\mathrm{dom}}
\DeclareMathOperator{\im}{im}
\newcommand{\Beh}{\mathrm{Sha}}		
\newcommand{\behim}[1]{\overline{#1}}
\newcommand{\cat}[1]{\mathsf{#1}}
\newcommand{\cC}{\mathcal{A}}
\renewcommand{\det}{\mathrm{det}}	
\newcommand{\cCdet}{\cC_\det}
\newcommand*{\coloniff}{\mathrel{\vcentcolon\Longleftrightarrow}}
\newcommand{\ph}{\mathord{\rule[-0.05em]{0.6em}{0.05em}}}		
\newcommand{\spen}{\xi}
\newcommand{\spsp}{\mathrm{spec}}
\newcommand{\intr}{\mathrm{int}}
\newcommand{\suppincop}{\mathsf{si}}	
\newcommand{\suppinc}[1]{
	\def\relstate{#1}%
	\ifx\relstate\empty
		\mathsf{si}%
	\else
		\suppincop_{#1}%
	\fi
}
\newcommand{\Suppop}{\mathsf{Supp}}
\newcommand{\Supp}[1]{
	\def\relstate{#1}%
	\ifx\relstate\empty
		S%
	\else
		\Suppop_{#1}%
	\fi
}
\newcommand{\suppprojop}{\mathsf{sp}}
\newcommand{\suppproj}[1]{
	\def\relstate{#1}%
	\ifx\relstate\empty
		\mathsf{sp}%
	\else
		\suppprojop_{#1}%
	\fi
}
	\let\abs\relax
	\let\norm\relax
	\DeclarePairedDelimiter{\abs}{\lvert}{\rvert}
	\DeclarePairedDelimiter{\norm}{\lVert}{\rVert}
	\DeclarePairedDelimiterXPP{\pnorm}[2]{}{\lVert}{\rVert}{_{#1}}{#2}
		\let\oldabs\abs
		\def\abs{\@ifstar{\oldabs}{\oldabs*}}
		\let\oldnorm\norm
		\def\norm{\@ifstar{\oldnorm}{\oldnorm*}}
		\let\oldpnorm\pnorm
		\def\pnorm{\@ifstar{\oldpnorm}{\oldpnorm*}}
\providecommand{\given}{} 
\newcommand{\SetSymbol}[1][]{%
    \nonscript\;\,#1\vert
    \allowbreak
    \nonscript\;\,
    \mathopen{}
}
\DeclarePairedDelimiterX{\Set}[1]{\{}{\}}{%
    \renewcommand{\given}{\SetSymbol[\delimsize]}
    #1
}
    \let\oldSet\Set
    \def\Set{\@ifstar{\oldSet}{\oldSet*}}
\DeclarePairedDelimiterX{\Family}[1]{(}{)}{%
    \renewcommand{\given}{\SetSymbol[\delimsize]}
    #1
}
    \let\oldFamily\Family
    \def\Family{\@ifstar{\oldFamily}{\oldFamily*}}
	\newcommand{\hyphenationsetting}{%
		\emergencystretch=0pt	
		\tolerance=2000		
		\pretolerance=1000		
		\righthyphenmin=4		
		\lefthyphenmin=4		
	}
\begin{document}

\title{A Framework for Universality in Physics, Computer Science, and Beyond}
\date{}
\author{Tom\'a\v{s} Gonda}
\email{tomas.gonda@uibk.ac.at}
\orcid{0000-0002-1531-0058}
\affiliation{Institute for Theoretical Physics, University of Innsbruck, Austria}
\author{Tobias Reinhart}
\email{tobias.reinhart@uibk.ac.at}
\orcid{0000-0003-2993-5275}
\affiliation{Institute for Theoretical Physics, University of Innsbruck, Austria}
\author{Sebastian Stengele}
\email{sebastian.stengele@tum.de}
\orcid{0000-0002-1173-5067}
\affiliation{Department of Mathematics, Technical University of Munich, Germany}
\author{Gemma De les Coves}
\email{gemmadelescoves@gmail.com}
\orcid{0000-0002-0977-7727}
\affiliation{Institute for Theoretical Physics, University of Innsbruck, Austria}
\maketitle

\newcommand{\authorsforheader}{Gonda, Reinhart, Stengele, and De les Coves}
\newcommand{\paperdoi}{https://doi.org/10.46298/compositionality-6-3}
\newcommand{\receiveddate}{2023-08-04}
\newcommand{\accepteddate}{2024-01-05}

	\begin{abstract}
		Turing machines and spin models share a notion of universality according to which some simulate all others. 
		Is there a theory of universality that captures this notion? 
		We set up a categorical framework for universality which includes as instances universal Turing machines, universal spin models, NP completeness, top of a preorder, denseness of a subset, and more. 
		By identifying necessary conditions for universality, we show that universal spin models cannot be finite.
		We also characterize when universality can be distinguished from a trivial one and use it to show that universal Turing machines are non-trivial in this sense. 
		Our framework allows not only to compare universalities within each instance, but also instances themselves.
		We leverage a Fixed Point Theorem inspired by a result of Lawvere to establish that universality and negation give rise to unreachability (such as uncomputability).  
		As such, this work sets the basis for a unified approach to universality and invites the study of further examples within the framework. 
	\end{abstract}

	\tableofcontents

\section{Introduction}
\label{sec:Introduction}

	Turing machines form a cornerstone of computer science with vast theoretical and technological consequences.
	They are a model of computation that formalizes information processing that can be carried out on manifold physical systems.  
	Equivalent models include $\lambda$-calculus, partial recursive functions, tag machines, Post machines, counter machines, and cellular automata \cite{Sv93,De09d,Wo02,Mo11}.  
	Naively, one may expect that to solve a specific information-theoretic \emph{problem}, it is necessary to build a tailor-made Turing machine (a \emph{particular solution}).
	However, the existence of a universal Turing machine (a \emph{universal solution}) invites a wholly new perspective: 
	One may invest in the development of a universal machine and use it to tackle any information-theoretic problem, i.e.\ to compute any function of interest. 
	This is possible because, by definition, a universal Turing machine can \emph{simulate} any other Turing machine as long as it is provided with a \emph{program} carrying the requisite instructions.
	As a consequence, instead of building machines, one would be writing programs. 
	Universality thus goes hand in hand with programmability and gives rise to a new kind of engineering{\,\textemdash\,}software engineering. 
	In this light, there is a pragmatic aspect to universality as it allows to trade resources of mechanical engineering for resources of software engineering.
	
	Spin systems are versatile toy models of complex systems \cite{Ho14b,So00,Th18}, which have also been shown to exhibit universality \cite{De16b,Re23}. 
	More concretely, a spin system (the \emph{particular solution}) is a composite system with local interactions and an energy function.
	Individual degrees of freedom are called spins and each interaction term contributes to the global energy additively by a real number that depends on the spin configuration. 
	The behavior of a spin system (the network-theoretic \emph{problem}) is thus a real-valued function of its spin configurations. 
	It was recently discovered that there are families of spin systems, such as the two-dimensional (2D) Ising model with fields, that are able to \emph{simulate} any other spin system. 
	In this sense, the 2D Ising model with fields is a universal spin model (the \emph{universal solution}).
	``Programming'' a given spin system amounts to fixing the values of certain spins and coupling strengths of a suitably chosen 2D Ising spin system. 
	
	In both situations, universal solution(s) can be programmed to solve any problem of interest, by simulating a particular solution to it. 
	This gives rise to the aforementioned shift in perspective{\,\textemdash\,}from building particular solutions to programming universal ones.
	Besides potential pragmatic significance, it can also be of conceptual importance.
	For example, in the case of computability, the shift allows the study of algorithms with respect to a universal Turing machine, which leads to notions such as algorithmic complexity, algorithmic randomness, and algorithmic probability \cite{Li2008}.
	One could imagine similar practical and conceptual developments for universality in spin systems, which was discovered much more recently.
	For instance, to physically implement the behavior of an arbitrary spin system, it is in principle sufficient to have the means to build 2D systems with tunable Ising interactions. 
	In addition, one could devise a complexity theory of spin systems based on the type of instructions for the universal spin model that they require.\footnotemark{}
	\footnotetext{In other works \cite{St21,Re21c}, spin models have been cast as formal languages and classified in the Chomsky hierarchy, which is an unrelated notion of complexity for spin systems.}
	The existence of universal spin models also invites the development of a novel theory of universality classes (in the sense of emergent properties of a given spin model around the critical point, see e.g.\ \cite{Ni11}). 
	Such a theory would be phrased in terms of the parameters of the universal spin model instead of properties of the given spin model, such as the lattice dimension, the number of spin levels, or its symmetries.  

	In light of these considerations, we ask: 
	\begin{quote}
		What is a meaningful structure of universality and is there a widely applicable theory behind it?
	\end{quote}
	To undertake this investigation, we are guided by the following two questions:
	\begin{enumerate}
		\item What do the examples of Turing machines and spin models have in common? 
		\item Can the theory apply to other scenarios where we expect to identify this structure?
	\end{enumerate}

	In this article, we present a formal attempt to study universality primarily inspired by the above two examples of universality. 
	To do so, we abstract features of the two examples to build a mathematical framework with universalities of Turing machines and spin models as special cases. 
	Its language can serve to develop some of the general theory of universality and to investigate other situations beyond the two above.
	
	Specifically, an instance of our framework consists of
	\begin{itemize}
		\item an ambient category $\cC$ specifying the systems of interest and relations between them, as well as the collection $T$ of all particular solutions, and
		\item a behavior structure specifying relevant problems as maps of type $C \to B$, which correspond to computable functions and behaviors of spin systems respectively in the two aforementioned examples.
	\end{itemize}
	A detailed discussion of motivations behind these features can be found in \cref{sec:motivation}.

	We formalize a set of solutions (that may or may not be universal) as a \emph{simulator} (\cref{def:simulator}), which describes a way in which ``programs'' can be used to solve relevant problems.
	That is, each program is compiled to produce a particular solution in $T$ and then evaluated to produce a map of type $C \to B${\,\textemdash\,}the problem it purports to solve. 
	\emph{Universality} (\cref{def:univ sim}) is a property of a simulator within an instance of the framework. 
	It arises if, for every target behavior that has a particular solution in $T$, we can find a corresponding program whose compiled behavior matches the target one. 
	
	Similar categorical frameworks, which however are developed to study computability and its compositional features, are those of monoidal computers\footnotemark{} \cite{pavlovic2018monoidal,pavlovic2023programs} and of Turing categories \cite{Co08d}. 
	\footnotetext{We spell out the connection to the framework of monoidal computers in \cref{sec:monoidal_computer}.}%
	On the other hand, general notions of universality have been explored from a conceptually related point of view in \cite{De09d,Wo02}. 
	
	\parlabel{Goals}
	The purpose of our framework is to offer a path to \emph{unification} of scenarios where universality plays a role.
	Its abstract language (supplemented with an interpretation) is meant to provide a dictionary facilitating the transfer of results between these scenarios.
	We illustrate this with an abstract expression of undecidability (or, more precisely, unreachability) within the framework itself. 
	Consequently, we give meaning to the notion of unreachability whenever the relevant assumptions are satisfied. 

	It is key that various ``universality scenarios'' can be \emph{instantiated as special cases} of the framework.
	Besides Turing machines and spin systems, we instantiate NP-completeness in computational complexity theory as well as several mathematical structures, such as dense subsets of topological spaces, maximal elements of ordered sets, and universal Borel sets.
	However, many examples of universality, which we expect to be fruitfully expressed in our framework, can for now only be instantiated in uninteresting ways{\,\textemdash\,}e.g.\ as being the top element of a preorder.
	We return to such instances in the \nameref{sec:Conclusions}.
	
	Another purpose is one of knowledge-organization.
	Apart from unification, we also gain a better \emph{understanding} of the individual universality scenarios.
	The interpretation attached to elements of the framework highlights the role that different parts of a given statement of universality play.
	Moreover, its abstract nature requires one to be transparent about the assumptions being made.
	In this sense, instantiating spin systems in the framework is not merely a restatement of the results of \cite{De16b}. 
	Rather, it is a complementary work that serves to situate universality of spin models in a broader context.
	
	Finally, the framework can provide \emph{new ideas and concepts}.
	Certainly, the notion of a simulator is a basic one.
	At a higher level, we establish a hierarchy of universal simulators that can be used to compare manifestations of universality within the same ambient category.
	For example, we show that a singleton simulator describing a universal solution (e.g.\ via a single universal Turing machine) is ``better'' than a so-called trivial simulator describing the collection of all particular solutions (e.g.\ via all Turing machines).

	\parlabel{Overview of this work}
	In the following section, we define the basic building blocks of our framework.
	Two versions thereof, one more abstract and the other more concrete appear in the crucial \cref{def:instance,def:intrinsic_beh}.
	They specify the relevant data needed to instantiate the framework.
	These include the notion ambient category from \cref{sec:ambient} and the notion of behavior structure from \cref{sec:behavior_structure}.
	The central actresses of each instance are simulators, defined in \cref{sec:simulators}. 
	
	\Cref{sec:Reductions} first defines universality of simulators in \cref{sec:universality_def}. 
	Besides introducing a singleton universal simulator associated to each universal Turing machine (\cref{ex:TM univ sim}), we also devote ample space to the details of universal spin models within our framework in \cref{sec:spinmodel}, and present other more mathematical examples in \cref{sec:math_examples}. 
	In \cref{sec:no-go_theorems}, we prove a no-go theorem for universality,  which can be used to obtain necessary conditions for universality of a given simulator.
	We exemplify this by showing that a universal spin model has to contain infinitely many spin systems (\cref{ex:nogo_spin}).
	
	 \Cref{sec:Morphisms} introduces a hierarchy of simulators via a simulator category (\cref{def:simulator_category}), which we use to compare universal simulators in \cref{sec:parsimony}. 
	We prove two results, \cref{thm:morph_stronger,thm:s 2 id}, which provide conditions under which a morphism between two simulators does and does not exist respectively.
	As mentioned above, we use them to establish a strict ordering between two universal simulators in the context of Turing machines (\cref{ex:trivial2universal,ex:universal2trivial}).
	
	In \cref{sec:Undecidability}, we study unreachability, i.e.\ the existence of problems (maps $C \to B$) with no particular solution in $T$.
	For Turing machines, this corresponds to uncomputability or undecidability. 
	By an abstract diagonal argument due to Lawvere, a specific kind of unreachability follows if there is a map $B \to B$ that has no fixed points (such as a negation on truth values).
	In particular, this result can be interpreted as the implication (cf.\ \cref{thm:Lawvere})
	\begin{center}
		negation $\implies$ unreachability (for universal solutions in $P \cong C$)
	\end{center}
	whenever there is an isomorphism between programs $P$ and contexts $C$.
	If $T$ is also isomorphic to $C$, then we obtain unreachability from negation. 
	To show this, we use \cref{thm:Lawvere}, which generalizes Lawvere's Fixed Point Theorem \cite{La69b}.
	As one of its applications, we show in \cref{ex:no_total_UTM} the known fact that there is no universal Turing machine that is also total.
	
	We also show the following connection to universality of simulators: 
	Given a universal simulator with a problem without a universal solution in $P$ (so-called unreachability for $P$), there is also no particular solution in $T$ for this problem.
	This can be interpreted as the implication (cf.\ \cref{prop:weak pt surj})
	\begin{center}
		unreachability (for $P$) + universality $\implies$ unreachability (for particular solutions in $T$),
	\end{center}
	so that combining the two results can be viewed as (cf.\ \cref{thm:fix point sim})
	\begin{center}
		negation + universality $\implies$ unreachability
	\end{center}
	and this holds even if $T$ and $C$ are a priori unrelated.

	In the somewhat technical \cref{sec:Functors}, we introduce ways to relate instances of the framework. 
	In particular, \cref{thm:simfun} shows that given a functor between the ambient categories of two instances, there is a corresponding relation of their respective simulator categories.
	We expect this notion to be valuable when contrasting instances of universality between disciplines.
	
	Finally, in \cref{sec:Conclusions}, we reflect on the current state of our framework and point towards future developments on the horizon. 
			
	\parlabel{On the use of category theory}
	At an intuitive level, much of the content can be understood without any knowledge of category theory. 
	To aid in this, we use the graphical language of string diagrams, which are to be read from bottom to top.
	A process like
	\begin{equation}
		\tikzfig{box}
	\end{equation}
	can be thought of as a (partial, potentially multi-valued) \emph{function} $f$ with input $A$ and \mbox{output $X$}.
	See \cref{sec:category_theory,sec:additional_def} for more details and additional definitions. 
	Each wire in a string diagram can be viewed as a \emph{set} (or, equivalently, the identity function).
	Using this intuitive interpretation of string diagrams, one can grasp the main concepts of the paper without any background on category theory.
	The main exception to this are
	\begin{itemize}
		\item \cref{sec:behavior_structure}, where we need category-theoretic concepts to spell out  the details of how to translate from the abstract perspective to the more intuitive set-like interpretation, and
		\item \cref{sec:Functors}, where we use functors as a way to compare instances of our framework.
	\end{itemize}
	
	\parlabel{Acknowledgements}
	TG, SS and TR acknowledge funding from the Austrian Science Fund (FWF) via the START Prize Y1261-N.
	SS acknowledges funding from the Deutsche Forschungsgemeinschaft (DFG, German Research Foundation) CRC TRR 352.

\section{The Set-Up}
\label{sec:Set-Up}
	
	In order to ground our discussion of universality in \cref{sec:Reductions}, we first spell out the background assumptions.
	We start by giving an informal introduction of our approach to the concept of universality (\cref{sec:motivation}).
	This includes motivations for the upcoming definitions that are based on the examples we intend to capture with our framework{\,\textemdash\,}universal Turing machines and universal spin models, in particular.
	The subsequent \cref{sec:ambient} defines the so-called target--context category. 
	Specifying its data is precisely the task of laying out the relevant background assumptions, which is why we refer to its role in the framework as the `ambient category'.
	Distinction in the phenomenology of universality (in the sense of our framework) for Turing machines (TMs) and for spin models arises \emph{because} they are formalized using distinct ambient categories.
	
	Note that `to be a universal TM' is not a property of the set of TMs, but of a particular TM.
	Similarly, universality in our framework cannot be directly attached to the ambient category.
	Rather, it is a property of simulators, whose definition is given in \cref{sec:simulators}. 
	
	In \cref{sec:behavior_structure}, we use relations to model morphisms of the ambient category and provide a more concrete account thereof.
	\Cref{sec:intrinsic_beh} discusses a special case of this construction, which is then used in \cref{sec:Undecidability} to connect universality with undecidability. 
	
	\begin{remark}[Levels of abstraction]\label{rem:levels_abstraction}
		We provide three variants on the structure that an ambient category may carry, each being a special case of the previous one.
		\begin{enumerate}
			\item \label{rem:levels_abstraction:i} Target--context category (\cref{fig:target--context-category}), described in \cref{sec:ambient}, is the most minimal and most abstract.
			\item \label{rem:levels_abstraction:ii} Target--context category with behaviors (\cref{fig:behavior-structure}), described in \cref{sec:behavior_structure}, is much more complicated, but it also comes with a concrete relational interpretation.
			\item \label{rem:levels_abstraction:iii} Target--context category with intrinsic behaviors (\cref{fig:intrinsic-behavior}), described in \cref{sec:intrinsic_beh}, makes only one additional assumption. 
		\end{enumerate}
		The reason for introducing these distinctions is that, for many purposes, reasoning with an abstract target--context category is much simpler. 
		However, some constructions and results, such as the simulator category (\cref{def:simulator_category}) or the existence of fixed points (\cref{thm:fix point sim}), require the additional assumptions of target--context categories with (intrinsic) behaviors.
	\end{remark}

	\begin{table}[t!]\centering
		\begin{tabular}{c|c|c|c|c} 
			\thead{Name of the \\ preorder}
			& \thead{Symbol}
			& \thead{Underlying set}
			& \thead{Data it \\ depends on}
			& \thead{Level of \\ abstraction} 
		\\ \hline 
			\hyperref[def:instance]{Ambient relation}
			& $\mrel$ 
			& \makecell{morphisms in \\ $\cC(A,T\otimes C)$} 
			& \makecell{Ambient \\ category $\cC$}
			& \makecell{Any \hyperref[rem:levels_abstraction:i]{target--context} \\ \hyperref[rem:levels_abstraction:i]{category (TCC)}}
		\\ 
			\makecell{\hyperref[def:beh_structure]{Behavioral relation}}
			& $\brel$ 
			& \makecell{behaviors in $\behim{B}$} 
			& \makecell{Set $\behim{B}$ of \\ behaviors} 
			& \makecell{Any \hyperref[rem:levels_abstraction:ii]{TCC with} \\ \hyperref[rem:levels_abstraction:ii]{behaviors}}
		\\ 
			\hyperref[def:brel]{Imitation relation}
			&  $\imrel$ 
			&  \makecell{relations in \\ $\cat{Rel}(A,\behim{B})$} 
			&  \makecell{Behavioral \\ relation $\brel$}
			&  \makecell{Any \hyperref[rem:levels_abstraction:ii]{TCC with} \\ \hyperref[rem:levels_abstraction:ii]{behaviors}}
		\\ 
			\makecell{\hyperref[def:mrel]{Ambient} \\ \hyperref[def:mrel]{imitation relation}}
			& $\mrel$
			& \makecell{morphisms in \\ $\cC(A,T\otimes C)$} 
			& \makecell{\hyperref[def:beh_structure]{Behavior} \\ \hyperref[def:beh_structure]{structure}} 
			&  \makecell{Any \hyperref[rem:levels_abstraction:ii]{TCC with} \\ \hyperref[rem:levels_abstraction:ii]{behaviors}}
		\\ 
			\makecell{\hyperref[def:c_red_rel]{(Op)lax context-} \\ \hyperref[def:c_red_rel]{-reduction relation}}
			& \makecell{lax: $\cred$ \\ oplax: $\oplaxcred$}
			& \makecell{morphisms in \\ $\cC(A,T)$} 
			& \makecell{Ambient \\ category $\cC$}
			& \makecell{Any \hyperref[rem:levels_abstraction:i]{TCC}}
		\\ 
		\end{tabular}
		\caption{Summary of preorders appearing in our framework. 
			The third column gives the set whose elements can be compared with the respective relation.
			The fourth column tells us whether one needs to specify any additional data in order to construct the preorder relation in question.
			The last column indicates what is the most general version of our framework (\cref{rem:levels_abstraction}) in which the preorder can be used.}
		\label{tab:summary_relations}
	\end{table}
	
	\subsection{Motivation For the Set-Up}\label{sec:motivation}
	
		\parlabel{Basic idea (targets $T$)}
		In the broadest sense, we take the universality of a certain object $u$ to mean that $u$ is \emph{all-encompassing}. 
		In particular, universal is a notion relative to a set $T$ of \emph{all} particular solutions, and an \emph{encompassing} order \mbox{relation $\cred$} that specifies how the universal solution $u$ may be used to recover each particular solution \mbox{in $T$}. 
		We refer to elements of $T$ as targets.
		
		When studying computability, $T$ may stand for the set of all Turing machines, while $t_1 \credop t_2$ denotes that Turing machine $t_1$ simulates Turing machine $t_2$.
		A Turing machine $u \in T$ is universal if it is all-encompassing in the sense that for all $t \in T$, it satisfies $u \credop t$.
		In other words, such a $u$ is the top element of the simulation preorder (cf.\ \cref{ex:cofinal,rem:encompass}).
		The set $T$ together with the relation $\cred$ form the most basic set-up for statements of universality.
		
		\parlabel{Additional structure (compiler $\bm{s_T}$, context reduction $\bm{s_C}$, and evaluation $\bm{\eval}$)}
		Such a simple structure is, however, not sufficient for the analysis we strive for. 
		Consider the example of universal spin models \cite{De16b} (cf.\ \cref{sec:spinmodel}).
		While the targets of interest are spin systems, a universal spin model is not itself another spin system, but a \emph{collection} of spin systems. 
		To specify it, we may consider another set $P$ of parameters and a map
		\begin{equation}
			s_T \colon P \to T
		\end{equation}
		called the \emph{compiler}, whose image comprises all the spin systems in a spin model. 
		Only upon fixing the relevant parameters to a $p \in P$, a spin system $s_T(p)$ is specified.
		
		For Turing machines, on the other hand, a universal compiler can be a constant function whose image is a universal Turing machine $u$.
		To say that $u$ can simulate any other Turing machine $t$, there should be a description $r(t)$ of $t$, which we can provide to $u$, so that $u$ `acts as' $t$ on the rest of its input.
		Namely, we need to turn the pair of $r(t)$ together with an arbitrary input of $t$ into a suitable input of $u$.
		Denoting the set of input strings by $C$,\footnotemark{} this is a function of type
		\begin{equation}
			s_C \colon P \times C \to C,
		\end{equation}
		called the \emph{context reduction}.
		We interpret $P$ as a set of ``programs''.
		\footnotetext{Later, $C$ is interpreted more generally to describe \emph{contexts} in which targets in $T$ manifest their behavior.}%
		For instance, we can choose 
		\begin{equation}\label{eq:pairing}
			s_C \bigl( r(t),c \bigr) \coloneqq \langle r(t), c \rangle 
		\end{equation}
		where $c$ is an arbitrary input string and $\langle \ph , \ph \rangle$ is a pairing function\footnotemark{} that allows $u$ to distinguish $r(t)$ from $c$. 
		\footnotetext{By a pairing function $C \times C \to C$ we mean one with a computable left inverse that allows the recovery of the original pair of strings it encodes.}%
		We think of two Turing machines as equivalent if they implement the same partial function from input to outputs.
		Suppose we are given a function 
		\begin{equation}
			\eval \colon T \times C \to B
		\end{equation}
		that describes the result of evaluating a Turing machine in $T$ on a given input in $C$ to produce an output in a set $B$.\footnotemark{}
		\footnotetext{Later, $B$ is interpreted more generally as a set of possible \emph{behaviors} of some target $t \in T$ in the context $c \in C$.}%
		Then two Turing machines $t_1, t_2 \in T$ behave equivalently if, for any input, they produce the same output, i.e.\ if we have
		\begin{equation}\label{eq:equivalence_of_targets}
			\forall c \in C \quad \eval (t_1, c) = \eval (t_2, c).
	 	\end{equation}
	 	This condition makes sense only if the Turing machines halt $t_1$ and $t_2$ halt on the input $c$ and thus produce some output in finite time.
	 	If one demands that equivalent Turing machines need to have the same domain on which they halt, we can add this information in the $B$, but this makes the evaluation function uncomputable (see \cref{ex:TM_behaviors,ex:TM intrinsic}).
		
		\parlabel{Simulators and their universality}
		Putting these elements together, we can express universality in a more nuanced fashion.
		Universality is a property of the pair of a compiler $s_T$ and a context reduction $s_C$, which we combine into a single function $s \colon {P \times C \to T \times C}$ called a \emph{simulator} via
		\begin{equation}\label{eq:simulator_def}
			s(p,c) \coloneqq \bigl( s_T(p) , s_C(p,c) \bigr).
		\end{equation}	
		A simulator is termed \emph{universal} if there exists a map $r \colon T\to P$ selecting, for any target $t \in T$, a program $r(t)$ such that the behavior of $s(r(t),c) \in T \times C$ coincides with that of $(t,c)$ for every context $c\in C$.
		This means that, for all $c$ and all $t$, we have
		\begin{equation}\label{eq:univ_concrete}
			\eval \Bigl( s_T \bigl( r(t) \bigr) , s_C \bigl(r(t), c \bigr) \Bigr) = \eval (t, c).
		\end{equation}
		We call such an $r$ a \emph{reduction}.
	
		Whenever the image of $s_T$ is a single element $u$ of $T$, $s$ is a \emph{singleton} simulator, so that we can speak of $u$ itself as a \emph{universal target} in $T$.
		This is the case for Turing machines{\,\textemdash\,}there exists a universal Turing machine $u$.
		
		Note that we demand that the output of the compiler does not depend on the context, as expressed by \cref{eq:simulator_def}. 
		For instance, to simulate the behavior of a desired spin system $t$, we may choose a suitable spin system $s_T(r(t))$ from the universal spin model, but this choice should not be adapted in response to the context $c$ (which is, in this case, a spin configuration \mbox{of $t$}). 
	 	
		\parlabel{Behavioral relation $\brel$}
		While \cref{eq:equivalence_of_targets} is an acceptable notion of equivalence between Turing machines with identical domains, its abstract version does not cover all situations of interest to us.
		For the purposes of spin system simulations (\cref{sec:spinmodel}), behaviors include information about a specified energy threshold $\Delta$.
		When comparing the target spin system to be simulated with another one that is meant to achieve the simulation, we are generally only interested in the fact that their possible energy levels agree below the threshold of the target system.
		This is easier to model with an asymmetric relation $\brel$ among the elements of the set $B$ of behaviors.
		In a generic instance of our framework, we thus weaken \cref{eq:univ_concrete} to only hold up to this preorder relation $\brel$.
		
		We interpret relation 
		\begin{equation}\label{eq:beh_relation_of_targets}
			\eval (t_1, c_1) \brelop \eval (t_2, c_2)
		\end{equation}
		as saying that the left behavior (of $t_1$ in context $c_1$) is at least as informative as the one on the right-hand side.
		By informativeness, we refer to the information contained in the behavior that is relevant for the user of the framework.
		
		\parlabel{Ambient category}
		In this article we take an abstract approach and formulate universality purely in terms of simulators and their properties. 
		Among the relevant properties is whether $s$ is a singleton simulator as well as the complexity of implementing simulations described by a given simulator in practice.
		The latter amounts to a restriction on the allowed functions for the compiler $s_T$, the context reduction $s_C$, the reduction $r$, or other elements of the framework. 
		We can impose this restriction by specifying an \emph{ambient category} $\cC$ (more details in \cref{sec:ambient}) as part of the definition of an instance of our framework.
		Its morphisms are processes that satisfy any of the desired restrictions we wish to impose.
		
		\parlabel{Relational semantics}
		In \cref{rem:multi-valued_context_reduction}, we argue that to meaningfully capture universality of spin models, we need to consider context reductions $s_C$ that are multi-valued. 
		For this reason, our framework uses relations (i.e.\ partial, multi-valued functions) to give concrete interpretation to the abstract morphisms in the ambient category.
		Such relational semantics is provided by a so-called \emph{shadow functor} assigning a relation to each process in the ambient category.
		One of the main challenges that arises from using multi-valued functions is that a relation $\brel$ among behaviors does not extend to a unique relation among \emph{sets of behaviors}.
		Our choice of doing so{\,\textemdash\,}the \emph{imitation relation} on the power set $\pwrset(B)$ is chosen so that the interpretation of a universal spin model matches that of our notion of a universality of the relevant spin system simulators.
		See \cref{sec:spinmodel} for details of spin system simulators and \cref{rem:imitation_meaning} for a detailed interpretation of the imitation relation.
	
	\subsection{Ambient Category}\label{sec:ambient}
	
		An ambient category describes the basic building blocks of conceivable simulators and eventually also of ways to map between them.
		We assume that it is a symmetric monoidal category $\cC$ (see \cite[definition 8]{Coecke2009}), and therefore comes equipped with a tensor product $\otimes$ and a unit object $I$.
		Intuitively, a symmetric monoidal category has, on top of sequential composition $\circ$, also parallel composition $\otimes$ of morphisms (and thus also of objects), depicted as
		\begin{equation}
			\tikzfig{tensor}
		\end{equation}
		which is such that string diagrams such as 
		\begin{equation}
			\tikzfig{process}
		\end{equation}
		correspond to morphisms in the category. 
		The meaning of these string diagrams is invariant under planar rearrangement that preserves connectivity, including the crossing of wires implemented by distinguished swapping morphisms
		\begin{equation}
			\tikzfig{swap}
		\end{equation}
		for all objects $Y$ and $Z$.
		In addition, it is equipped with a unit object $I$ that satisfies $I \otimes A = A = A \otimes I$ and for this reason $I$ is ommited from the diagrams.
	
		We further postulate that for every object $A \in \cC$, there are two morphisms, copying ${\cop_A \colon A \to A \otimes A}$ and deletion $\discard_A \colon A \to I$, with the following string diagramatic representation
		\begin{equation}\label{eq:copy_delete}
			\tikzfig{copy_delete}
		\end{equation}
		that satisfy 
		\begin{equation*}\label{eq:comonoid}
			\tikzfig{comonoid}
		\end{equation*}
		and thus make each $A$ into a commutative comonoid. 
		Additionally, copying and deletion morphisms are compatible with tensor products, i.e.\ 
		\begin{equation}\label{eq:copy_product}
			\tikzfig{copy_product}
		\end{equation}
		hold for all $A,X \in \cC$.
		Finally, we also require $\discard_I = \id_I$. 
		
		Categories satisfying the above properties have been called CD (as an abbreviation of copy-delete) categories \cite{cho2017disintegration,piedeleu2023introduction}, 
		as well as \textbf{gs-monoidal} \cite{gadducci1996algebraic,fritz2022free,fritz2022lax}, 
		where ``gs'' is an abbreviation of garbage-sharing, as $\discard_A$ can be interpreted as a morphism that discards $A$, and $\cop_A$ as sharing the information carried by $A$ to two parties.
		The pair of $\cop_A$ and $\discard_A$ for each $A$ is also referred to as a data service in \cite{pavlovic2018monoidal}. 
		A stronger version of gs-monoidal categories, carrying morphisms dual to $\cop$ and $\discard$, has been studied as cartesian bicategories \cite{carboni1987cartesian,carboni2008cartesian} and used as generalized categories of relations.
		
		Similarly, an important example of gs-monoidal categories in our framework is $\cat{Rel}$, the category of sets and relations. 
		
		\begin{example}[Category of relations $\cat{Rel}$]\label{ex:Rel}
			Objects of $\cat{Rel}$ are sets and its morphisms of type ${A \to X}$ are relations, i.e.\ subsets of the product $A \times X$.
			Sequential composition is given, for two relations $f \colon A \to X$ and $g \colon X \to Y$, by 
			\begin{equation}
				(a,y) \in g \circ f  \quad \coloniff \quad  \exists \, x \in X \; : \;  (a,x) \in f \text{ and } (x,y) \in g.
			\end{equation}
			We also write $f(a)$ for the set of all $x$ in $X$ such that $(a,x) \in f$ holds.
			
			We use the cartesian product of sets as the tensor product on $\cat{Rel}$ to make it into a symmetric monoidal category whose unit object $I$ is the singleton set $\{ \sngltn \}$.
			$\cat{Rel}$ is also a gs-monoidal category.
			In particular, each copy morphism is defined as
			\begin{equation}
				(a_1, a_2) \in \cop_A(a)  \quad \coloniff \quad  a = a_1 = a_2
			\end{equation}	
			and a discarding morphism is the full relation $\discard_A = A \times I$.
			In other words,
			\begin{equation}
				\discard_A(a) = \{\sngltn\}
			\end{equation}
			holds for all $a \in A$.
		\end{example}	
		For any morphism $f \colon A \to X$ in $\cat{Rel}$, its domain $\dom(f)$ is the set of elements of $A$ for which $f(a)$ is non-empty.
		We can think of $\dom(f)$ as a relation of type $A \to A$ itself{\,\textemdash\,}one that is given by the restriction of $\id_A$ to the elements where $f$ is defined:
		\begin{equation}\label{eq:domain_rel}
			\dom(f) \, (a) = \begin{cases} \{a\} & \text{if } f(a) \neq \emptyset \\ \emptyset & \text{if } f(a) = \emptyset \end{cases}
		\end{equation}
		This is a special case of the following concept of a domain that can be defined in any gs-monoidal category.
		This is relevant for us because, e.g.\ in the context of computable functions ($\cat{Tur}^\intr$ from \cref{ex:TM}), we can think of the points in the domain as inputs for which the corresponding Turing machine halts.
		\begin{definition}[\cite{fritz2022lax}]\label{def:domain}
			For any morphism $f \colon A \to X$ in a gs-monoidal category, its \textbf{domain} $\dom(f) \colon A \to A$ is defined as
			\begin{equation}\label{eq:domain}
				\tikzfig{domain}
			\end{equation}
		\end{definition}
		
		It is clear from \eqref{eq:domain_rel} that restricting a relation $f$ to its domain yields the same relation, i.e.\ we have
		\begin{equation}\label{eq:normalized_rel}
			f \circ \dom(f) = f
		\end{equation}
		for every morphism $f$ in $\cat{Rel}$.
		This is not true in general{\,\textemdash\,}it fails for un-normalized probability distributions for instance (see \cref{rem:unnormalized}).
		The next definition follows the terminology of \cite{di2023evidential}.
		
		\begin{definition}\label{def:normalized}
			We say that a morphism $f$ in a gs-monoidal category is \textbf{quasi-total} if it satisfies
			\begin{equation}\label{eq:normalized}
				\tikzfig{normalized}
			\end{equation}
			A gs-monoidal category is termed \textbf{quasi-total} itself if all of its morphisms are quasi-total.
		\end{definition}

		More generally, asking whether the restriction of a morphism $f$ to the domain of another morphism $g$ coincides with $g$ defines a preorder relation among the quasi-total morphisms.
		Taking inspiration from works on restriction categories \cite[section 2.1.4]{cockett2002restriction}, we express it as follows.
		\begin{definition}\label{def:dom_eq}
			Consider two morphisms $f,g \colon A \to X$ in a gs-monoidal category.
			We say that \textbf{$\bm{f}$ agrees with $\bm{g}$ on the domain of $\bm{g}$}, denoted $f \sqsupseteq g$, if we have
			\begin{equation}\label{eq:restriction_order}
				\tikzfig{eq_on_domain}
			\end{equation}
		\end{definition}
		\Cref{eq:restriction_order} is a string diagrammatic expression of 
		\begin{equation}\label{eq:restriction_order_2}
			  f \circ \dom(g) = g.
		\end{equation}
		
		While quasi-total morphisms do not constitute a subcategory of a generic gs-monoidal category, there are important subcategories, all of whose morphisms are quasi-total.
		We now introduce two such subcategories{\,\textemdash\,}those of functional and of total morphisms respectively.
		\begin{definition}\label{def:functional}
			If a morphism $f \colon A \to X$ in a gs-monoidal category satisfies
			\begin{equation}\label{eq:deterministic}
				\tikzfig{deterministic}
			\end{equation}
			then we say that $f$ is \textbf{functional}.
		\end{definition}
		For instance, in $\cat{Rel}$ from \cref{ex:Rel}, functional morphisms are precisely the partial functions.
		Specifically, if we think of relations as partial, multi-valued functions, \cref{eq:deterministic} in $\cat{Rel}$ amounts to
		\begin{equation}
			\Set*[\big]{ (x,x') \in X \times X  \given x, x' \in f(a) } = \Set*[\big]{ (x,x) \in X \times X \given x \in f(a) }
		\end{equation}
		for every $a \in A$.
		This is equivalent to requiring $f$ to be (at most) single-valued.
		\begin{remark}[Functional $\neq$ single-valued]\label{rem:unnormalized}
			In categories other than $\cat{Rel}$, the two notions of being functional as in \cref{eq:deterministic} and of being single-valued need not coincide.
			For example, in the category of finite sets (as objects) and real matrices (as morphisms), consider the vector $p \coloneqq \left( 1/2, 0 \right)^T$.
			While it is single-valued, $p$ does not satisfy \cref{eq:deterministic}.
			This is precisely because $p \circ \dom(p) = \left(1/4, 0 \right)^T$ is not equal to $p$.
			Indeed, $p$ would satisfy \cref{eq:deterministic} if its right-hand side was replaced by the same morphism precomposed with $\dom(f)$.
			Under the assumption that $f \circ \dom(f) = f$ holds for all morphisms in the ambient category, the two ways to define functional morphisms, with and without $\dom(f)$ on the right-hand side of \cref{eq:deterministic}, are equivalent.
		\end{remark}
		
		\begin{definition}\label{def:total}
			If we have
			\begin{equation}\label{eq:normalization}
				\tikzfig{normalization}
			\end{equation}
			then $f$ is said to be a \textbf{total} morphism.
		\end{definition}
			In $\cat{Rel}$, total morphisms are total relations.
			Indeed, \cref{eq:normalization} in this category says 
			\begin{equation}
				\forall \, a \in A \; : \;  f(a) \neq \emptyset,
			\end{equation}
			because the right-hand side, $\discard_A$, is a relation that assigns, to each $a$, the unique non-empty subset of the singleton set $I$.	
		\begin{definition}\label{def:deterministic}
			A morphism that is both functional and total is termed \textbf{deterministic}.
			We denote the subcategory of $\cC$ consisting of deterministic morphisms  by $\cCdet$.
		\end{definition}
		It follows from the definitions that for any gs-monoidal $\cC$, the deterministic subcategory $\cCdet$ is also gs-monoidal.
		Since a relation is deterministic if and only if it is a function, we have that $\cat{Rel}_\det$ coincides with $\cat{Set}${\,\textemdash\,}the category of sets and functions.
		
		A morphism with trivial input, i.e.\ of type $I \to A$ for some object $A$, is called a \textbf{state}. 
		In $\cat{Rel}$, states correspond to subsets of the set $A$, while deterministic states are elements of $A$. 
		
		As mentioned in \cref{sec:motivation}, we interpret the hom-sets of $\cC$ (collections of morphisms of a particular type) as specifying the admissible processes in the situation of interest.
		For instance, when studying the universality of Turing machines, they may be the computable functions, while for completeness in computational complexity they may correspond to certain reductions. 
		
		Importantly, within $\cC$ we also identify an object $T$ of \textbf{targets} of interest (e.g.\ Turing machines or spin systems) and an object $C$ of \textbf{contexts} (e.g.\ input strings to Turing machines).
		Moreover, for any object $A$, the hom-set $\cC(A,T \otimes C)$ comes equipped with a preorder $\moonrel$ that is preserved by precomposition.
		Its interpretation can vary, but we think of it as an abstract version of a relation between pairs of a target and a context as in \eqref{eq:beh_relation_of_targets}.
		We are now ready to define a target--context category (see \cref{fig:target--context-category}). 
		
		\begin{figure}[t]\centering
			\includegraphics[width=.9\columnwidth]{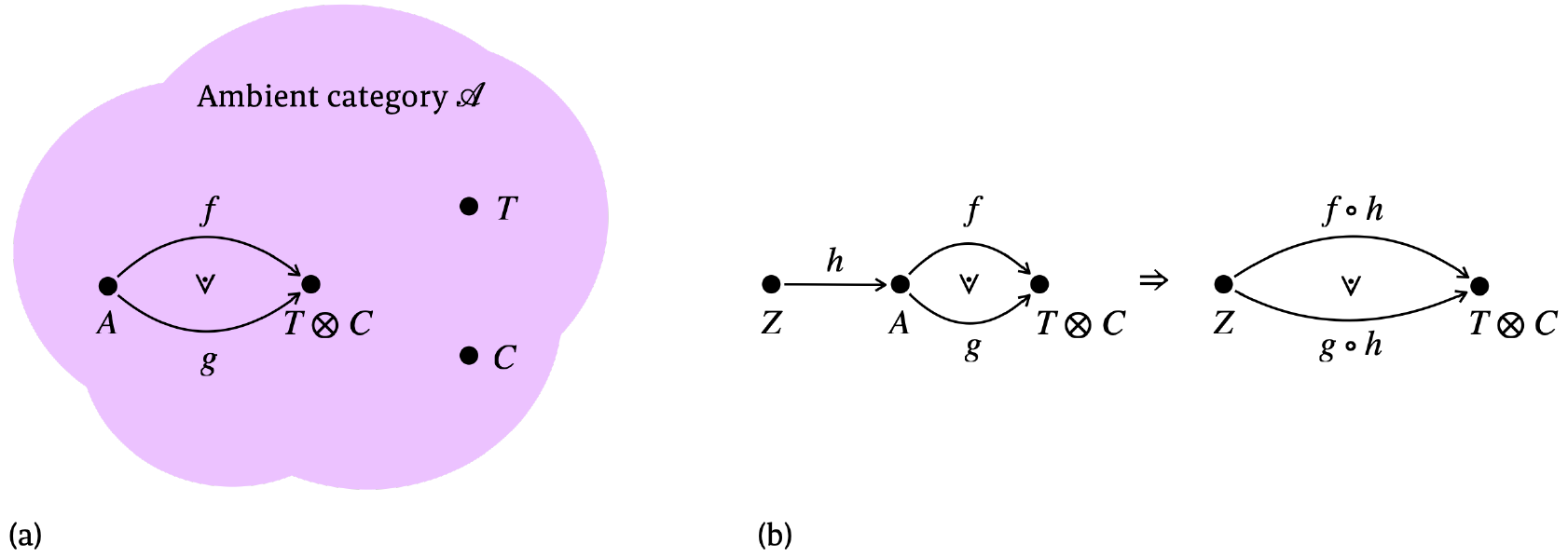}
			\caption{(a) A target--context category consists of an ambient category $\cC$ (which is quasi-total gs-monoidal) with distinguished objects $T$ (targets) and $C$ (contexts), as well as an ambient preorder relation $\mrel$ on every set $\cC(A,T \otimes C)$.  
				(b) This preorder is preserved by precomposition. }
			\label{fig:target--context-category}
		\end{figure}
		
		\begin{definition}\label{def:instance} 
			A \textbf{target--context category} $(\cC,T,C,\mrel)$ is a quasi-total gs-monoidal category $\cC$ with distinguished objects $T$ and $C$, as well as a preorder $\mrel$, called the \textbf{ambient relation}, on every set $\cC(A,T \otimes C)$, such that we have
			\begin{alignat}{4}
				\label{eq:mrel_domain} f &{}\sqsupseteq g   \quad  &&\implies \quad  & f &{}\mrelop g , 
				\intertext{where $\sqsupseteq$ is the restriction--to--domain preorder (\cref{def:dom_eq}), and} 
				\label{eq:mrel_precomp} f &{}\mrelop g  \quad  &&\implies \quad & f \circ h &{}\mrelop g \circ h
			\end{alignat}
			for arbitrary morphisms $f, g \colon A \to T \otimes C$ and $h \colon Z \to A$ of $\cC$.
		\end{definition}
		
		\begin{example}[Ambient categories for Turing machines]\label{ex:TM}
			A key example of universality is that of a universal Turing machine. 
			Let us construct two target--context categories, $\cat{Tur}$ and its variant $\cat{Tur}^\intr$, that serve the purpose of modelling these.
			The only distinction between the two is in the ambient relation $\mrel$.
			We will get back to this point in \cref{rem:TM_exampleS_diff}. 
			The construction of $\cat{Tur}$ follows \cite[example 3.2.1]{Co08d}.
			
			First, fix an arbitrary finite alphabet $\Sigma$ with at least two symbols. 
			We consider Turing machines, which have $\Sigma$ both as the input and the output alphabet.
			They compute partial functions of type $\Sigma^\star \to \Sigma^\star$ where $\Sigma^\star\coloneqq \cup_{n\geq 0} \Sigma^n$ denotes the set of all finite strings formed over $\Sigma$.
			By choosing a bijection between the set of Turing machines of this type and $\Sigma^\star$, we can think of $\Sigma^\star$ itself as the set of Turing machines.
			
			 Objects of the ambient category for both $\cat{Tur}$ and $\cat{Tur}^\intr$, denoted by $\cat{Comp}(\Sigma^\star)$, are $\Sigma^\star$, the singleton set $I$, and all finite cartesian products thereof.
			 Morphisms in $\cat{Comp}(\Sigma^\star)$ are partial computable functions. 
			 This defines a cartesian restriction category \cite{Co08d} and thus also a quasi-total gs-monoidal category with tensor product given by the cartesian product $\times$. 
			
			 Finally, to obtain a target--context category from $\cat{Comp}(\Sigma^\star)$ we take $T$ and $C$ to be $\Sigma^\star$.
			 Note that we interpret $(t,c) \in T \times C$ as consisting of a Turing machine $t$ and an input string $c$.
			 
			 Given any object $A$ of $\cat{Tur}$, the ambient relation $\mrel$ on the hom-set $\cat{Tur}(A, \Sigma^\star \times \Sigma^\star)$ is defined as follows:
			$f \mrelop g$ holds if and only if for all $a$ in the domain $\dom(g)$ of $g$, the value of $f(a)$ is defined and for the two pairs of a Turing machine and a string{\,\textemdash\,}namely $(t_{f}, c_{f}) \coloneqq f(a)$ and $(t_{g}, c_{g}) \coloneqq g(a)${\,\textemdash\,}it holds that either 
			\begin{itemize}
			  	\item both computations{\,\textemdash\,}running $t_{f}$ on $c_{f}$ and $t_{g}$ on $c_{g}${\,\textemdash\,}halt with equal outputs, or
			  	\item neither computation halts. 
			\end{itemize}
			 
			The only change for the second target--context category $\cat{Tur}^\intr$ is that we do not require that the Turing machine $t_{g}$ halts on the input $c_{g}$ whenever $t_{f}$ halts on $c_f$.
			To be more precise, in $\cat{Tur}^\intr$, $f \mrelop g$ holds if and only if for all $a$ in the domain $\dom(g)$ of $g$, $a$ is also in the domain of $f$ and either
			\begin{itemize}
			  	\item both computations{\,\textemdash\,}running $t_{f}$ on $c_{f}$ and $t_{g}$ on $c_{g}${\,\textemdash\,}halt with equal outputs, or
			  	\item running $t_{g}$ on $c_{g}$ does not halt. \qedhere
			\end{itemize}
		\end{example}
		
		\begin{remark}[Differences between the two ambient categories]\label{rem:TM_exampleS_diff}
			Both $\cat{Tur}$ and $\cat{Tur}^\intr$ model computations by Turing machines. 
			They differ only in their ambient relation.	
			More precisely, considering turing machines $t$, $t'$ as deterministic morphisms of type $I \to C$, the ambient relation $t \times \id_C \mrelop t' \times \id_C$ holds in $\cat{Tur}$ precisely if $t$ and $t'$ compute the same partial function. 
			In contrast, the ambient relation above holds in $\cat{Tur}^\intr$ if $t$ computes a so-called extension of \mbox{$t'$ \cite{En11}}.
			Consequently, $\cat{Tur}$ and $\cat{Tur}^\intr$ offer a slightly different treatment of universality. 
			As we explain in \cref{ex:TM univ sim}, (singleton) universal simulators in $\cat{Tur}$ can be associated with universal Turing machines, while (singleton) universal simulators in $\cat{Tur}^\intr$ can be associated with extensions of universal Turing machines.
			In this sense, $\cat{Tur}$ is a more appropriate framework instance to model Turing machine universality in the usual sense.			
			On the other hand, $\cat{Tur}^\intr$ has the beneficial property of intrinsic behavior structure (see \cref{ex:TM intrinsic}), which is also the reason for the notation we use to refer to it.
			Thanks to this property, we can use $\cat{Tur}^\intr$ to study consequences of universality in \cref{sec:Undecidability}.
			Since every universal simulator in $\cat{Tur}$ is also a universal simulator in $\cat{Tur}^\intr$ (see \cref{ex:TM_mrel_functor}), this approach automatically gives us also consequences of universality for $\cat{Tur}$.
		\end{remark}
	
		\Cref{ex:TM} helps us give an interpretation to the ambient relation $\mrel$. 
		It expresses that for relevant inputs $a${\,\textemdash\,}those for which both $f$ and $g$ produce a behavior in terms of an output string{\,\textemdash\,}behaviors of $f$ and $g$ coincide.
		Provided such interpretation of the ambient relation $\mrel$ and an interpretation of $\cC$-morphisms and of the objects $T$ and $C$ of targets and context respectively that matches \cref{sec:motivation}, we can think of each target--context category as a concrete \emph{instance} of our framework for (studying) universality.
		The following technical lemma says that multiplying by a scalar $w$ necessarily moves morphisms down in the $\mrel$ order. 
		In $\cat{Rel}$, for example, a scalar is either the identity $\id_I$ or the empty relation.
		In the latter case, the lemma states that every relation (of the appropriate type) is more informative than the empty relation (of the same type). 
		This lemma is used later in the proof of \cref{thm:s 2 id}.
		\begin{lemma}\label{lem:mrel_dom}
			Any $w\in \cC(I, I)$ and $f \in \cC(A,T\otimes C)$ in a target--context category satisfies 
			\begin{equation}\label{eq:mrel_dom}
				\tikzfig{mrel_dom}
			\end{equation}
		\end{lemma}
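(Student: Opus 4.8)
The statement to prove is, writing $w \cdot f$ for the scalar multiple of $f$ by $w$ (i.e.\ $f$ tensored with $w \colon I \to I$, under the identification $I \otimes (T \otimes C) = T \otimes C$), that $f \mrelop (w \cdot f)$. The plan is to obtain this from the defining implication \eqref{eq:mrel_domain}: since $\sqsupseteq$ entails $\mrel$, it suffices to check that $f$ agrees with $w \cdot f$ on the domain of $w \cdot f$, that is, $f \circ \dom(w \cdot f) = w \cdot f$ (cf.\ \cref{def:dom_eq} and \eqref{eq:restriction_order_2}).

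First I would compute $\dom(w \cdot f)$. A scalar $w \in \cC(I,I)$ slides freely through any string diagram{\,\textemdash\,}formally, using that $\discard$ is compatible with tensor products and $\discard_I = \id_I$, one gets $\discard_{T \otimes C} \circ (w \cdot f) = w \cdot (\discard_{T \otimes C} \circ f)${\,\textemdash\,}so feeding this into \cref{def:domain} and pulling $w$ past the copy morphism $\cop_A$ yields $\dom(w \cdot f) = w \cdot \dom(f)$.

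Next, post-composing with $f$ and sliding the scalar out once more gives $f \circ \dom(w \cdot f) = f \circ (w \cdot \dom(f)) = w \cdot (f \circ \dom(f))$. Because the ambient category of a target--context category is normalized (\cref{def:instance,def:normalized}), every morphism{\,\textemdash\,}in particular $f${\,\textemdash\,}satisfies $f \circ \dom(f) = f$; hence $f \circ \dom(w \cdot f) = w \cdot f$, which is precisely $f \sqsupseteq (w \cdot f)$. Invoking \eqref{eq:mrel_domain} then concludes $f \mrelop (w \cdot f)$, as desired.

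The argument is short, and the only point requiring a little care is the bookkeeping of the coherence and unit isomorphisms that make ``$w \cdot f$'' a morphism of type $A \to T \otimes C$ and that justify sliding the scalar around; diagrammatically, however, treating $w$ as a closed subdiagram with no free wires makes each of the three displayed equalities an immediate rearrangement, so I would run the whole proof in string-diagram form. Note that neither the precomposition axiom \eqref{eq:mrel_precomp} nor any property of $T$ and $C$ beyond their being objects of $\cC$ is needed.
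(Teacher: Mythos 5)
Your proof is correct and follows essentially the same route as the paper's: both reduce the claim to $f \sqsupseteq (w \otimes f)$ via implication \eqref{eq:mrel_domain}, compute $\dom(w \otimes f) = w \otimes \dom(f)$ (the paper does this by factoring $\dom(w \otimes f) = \dom(w) \otimes \dom(f)$ and showing $\dom(w) = w$, you by sliding the scalar through the definition of the domain{\,\textemdash\,}the same calculation), and then conclude with normalization of $f$. No gaps.
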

		\begin{proof}
			Let us prove $f \sqsupseteq (w \otimes f)$, from which relation \eqref{eq:mrel_dom} follows by the assumed implication \eqref{eq:mrel_domain}.
			First, observe that we have 
			\begin{equation}\label{eq:mrel_I}
				\tikzfig{mrel_I}
			\end{equation}
			The first equality is the definition of $\dom(w)$ (\cref{eq:domain}), the second follows since $\discard_I = \id_I$ and the last follows by the properties of $\cop$ and $\discard$ (\cref{eq:copy_delete}).
			Next we use that by the properties of $\cop$ and $\discard$ we have 
			\begin{equation}
				\dom(w \otimes f) = \dom(w) \otimes \dom(f) = w \otimes \dom(f)
			\end{equation}
			where the second equality is by \eqref{eq:mrel_I}.
			Finally, we show the requisite $f \sqsupseteq (w \otimes f)$ by computing 
			\begin{equation}
				f \circ \dom(w \otimes f) = w \otimes \bigl( f \circ \dom(f) \bigr) = w \otimes f
			\end{equation}
			where we use the assumption that every morphism is quasi-total (\cref{def:normalized}) in the second equality.
		\end{proof}
	
	\subsection{Simulators}\label{sec:simulators}
	
		We now turn our attention to simulators. 
		To specify a simulator for a given ambient category, we need to provide
		\begin{enumerate}
			\item an object $P$ of \textbf{programs}, and
			\item two morphisms of $\cC$: the \textbf{compiler} $s_T \colon P \to T$ and the \textbf{context reduction} $s_C \colon {P \otimes C \to C}$. 
		\end{enumerate}
		\begin{definition}\label{def:simulator}
			Let $(\cC,T,C,\mrel)$ be a target--context category.
		 	A \textbf{simulator} $s$ is a morphism $s \colon P \otimes C \to T \otimes C$ of $\cC$ that can be written as
			\begin{equation}\label{eq:simulator}
				\tikzfig{simulator}
			\end{equation}
			and satisfies
		 	\begin{equation}\label{eq:simulator_dom}
				\tikzfig{simulator_dom}
			\end{equation}
			where $s_T \colon P \to T$ is a functional morphism.
		\end{definition}
		\begin{remark}
Given any decomposition of $s$ as in \cref{eq:simulator}, not necessarily satisfying equations \eqref{eq:simulator_dom}, we can define $s'_T$ and $s'_C$ via\footnotemark{}
			\footnotetext{Strictly speaking, the compiler $s'_T$ is not defined by the left equation of \eqref{eq:simulator_rest}.
				However, as long as there exists a total state $c \colon I \to C$, it is uniquely specified by precomposing this equation with $\id_P {}\otimes{} c$.
				This applies in every example we consider.}%
		 	\begin{equation}\label{eq:simulator_rest}
				\tikzfig{simulator_rest}
			\end{equation}
			for which both \cref{eq:simulator} and \eqref{eq:simulator_dom} are satisfied, and for which $s'_T$ is functional because every domain of a quasi-total morphism is functional (\cref{lem:dom_fun}). 
			Specifically, \cref{eq:simulator} for the primed $s'_T$ and $s'_C$ on its right-hand side follows by
			\begin{equation*}
				\tikzfig{simulator_prime}
			\end{equation*}
			where the first equality is by \eqref{eq:simulator_rest}, the second one uses the associativity of copying, and the last one follows by the assumed decomposition of $s$ and the fact that both the compiler and the context reduction are quasi-total in the sense of \cref{def:normalized}.
			The appropriate versions of equations \eqref{eq:simulator_dom} are obtained by deleting $C$ and $T$ respectively in the assumed decomposition of $s$ into $s_T$ and $s_C$.
			
			In this sense, conditions \eqref{eq:simulator_dom} play no role in identifying which morphisms of type ${P \otimes C \to T \otimes C}$ constitute a simulator.
			Instead, they are used to pick out a distinguished decomposition into the compiler $s_T$ and the context reduction $s_C$:  
			For each simulator, there is a unique compiler and a unique context reduction, namely those given by equations \eqref{eq:simulator_dom}.
		\end{remark}
		
		The demand that the compiler be a functional morphism is a technical condition needed to establish equations \eqref{eq:processing_hit_2} and \cref{thm:s 2 id}.
		On the other hand, the splitting of $s$ into $s_T$ and $s_C$ as in diagram \eqref{eq:simulator} ensures that the target returned by $s$ (that is, the output of $\discard_C {}\circ{} s$) is independent of the context.
			
		\begin{definition}\label{def:trivial_sim}
			One simulator that always exists is the \textbf{trivial simulator}, which uses $T$ as a description of itself.
			It is given by the identity morphism ${\id \colon T \otimes C \to T \otimes C}$, where we identify $P \coloneqq T$, i.e.\ programs are given by targets.
		\end{definition}
		
		The trivial simulator plays a central role in our definition of universal simulators in \cref{sec:universality_def}. 
		In particular, it is universal by definition.
		If we think of the image of $s$ as the collection of target--context pairs it allows one to use, then the trivial simulator encompasses all targets in an arbitrary context by making use of any target in any context. 
		In that sense, it is particularly non-parsimonious (cf.\ \cref{sec:parsimony,ex:universal2trivial}).
		
		\begin{definition}\label{def:singleton_sim}
			A simulator $s$ is called a \textbf{singleton} simulator if there exists a functional state $t \colon I \to T$ such that its compiler can be written as:
			\begin{equation}
				\tikzfig{singleton_sim}
			\end{equation}
		\end{definition}
		Intuitively, a singleton simulator has a constant compiler. 
		If such a simulator is universal, it only uses a single target state $t$ to simulate all other targets. 
		
		\begin{example}[Singleton simulator for Turing machines]\label{ex:snglt_sim_TM}
			By definition of $\cat{Tur}$ from \cref{ex:TM},
			Turing machines are in one-to-one correspondence with elements of $T$ and hence with deterministic states $t \colon I \to T$.
			Given any Turing machine $t$ and any partial computable function ${\langle \ph, \ph \rangle \colon C \times C \to C}$, there is thus a singleton simulator
			\begin{equation}
				\tikzfig{point_sim}
			\end{equation}
			We denote simulators of this kind by $s_t$.
			Such simulators are of particular importance when $t$ is a universal Turing machine which comes equipped with a specific pairing function $\langle \ph, \ph \rangle$, as $s_t$ then captures the universality of $t$ in the sense of \cref{def:univ sim}.
			This is further explained in \cref{ex:TM univ sim}.
		\end{example}
	
		Further examples of simulators can be found in \cref{sec:Reductions}. 
				
	\subsection{Behavior Structure}\label{sec:behavior_structure}
		
		While property \eqref{eq:mrel_precomp} of the preorder suffices for the purposes of the general theory developed in \cref{sec:Reductions,sec:Morphisms}, we need a more specific definition in \cref{sec:Undecidability} when discussing generalizations of Lawvere's Fixed Point Theorem.
		This subsection introduces the notion of a behavior structure that is a formal account of the intuition presented in \cref{sec:motivation}.
		The reader may wish to refer to \cref{tab:summary_relations} for an overview of the preorder relations that appear throughout.
		
		Recall that we aim to avoid distinguishing targets in $T$ that have equivalent behavior in every context. 
		Moreover, we want to think of behaviors intuitively as elements of a set $B$.
		The specification of a behavior for a given target $t \in T$ and context $c \in C$ would then be a function $T \times C \to B$.
	
		Furthermore, even if behaviors $b_1, b_2 \in B$ are not equivalent, we may be able to say that behavior $b_1$ ``subsumes all aspects'' of behavior $b_2$.
		This amounts to a (not necessarily symmetric) relation $\brel$ on $B$.
		Even though the following definition is more general than this intuitive idea, we connect them in \cref{sec:pointed_shadow} using the so-called pointed shadow functor.
		Keeping this intuitive set-based notion in mind suffices to understand the usage of behavior structures in the remainder of this article.
		Likewise, it is not crucial to know in detail what a ``lax gs-monoidal functor'' (\cref{def:gs-monoidal_functor}) is.
		
		\begin{figure}[t]\centering
			\includegraphics[width=.75\columnwidth]{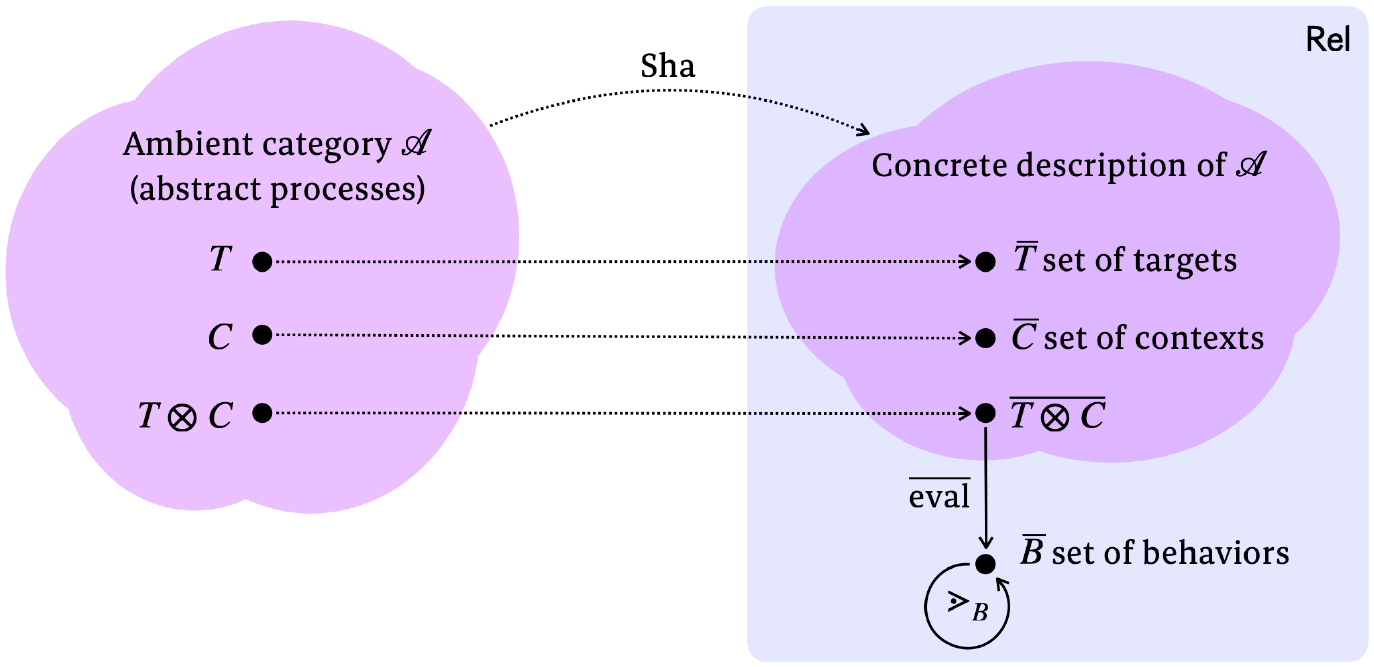}
			\caption{While the ambient category $\cC$ describes abstract processes, their ``observable consequences'' are manifested as relations between sets provided by the functor $\Beh$. To provide a behavior structure is to specify the latter (\cref{def:beh_structure})}
			\label{fig:behavior-structure}
		\end{figure}
	
		\begin{definition}\label{def:beh_structure}
			Let $\cC$, $T$ and $C$ be as in \cref{def:instance}.
			To provide a \textbf{behavior structure} $(\Beh, \behim{B}, \behim{\eval}, \brel)$ is to give a lax gs-monoidal functor $\Beh \colon \cC \to \cat{Rel}$, an object $\behim{B} \in \cat{Rel}$, a function 
			\begin{equation}
				\behim{\eval} \colon \Beh(T \otimes C) \to \behim{B},
			\end{equation}
			and a preorder relation $\brel$, called \textbf{behavioral relation}, on the set $\behim{B}$.
		\end{definition}
		
		Intuitively, we can think of $\Beh$ as a way to assign, to each object $A \in \cC$ and each morphism $f$ in $\cC$, a particular set denoted $\behim{A} \coloneqq \Beh(A)$ and a relation denoted $\behim{f} \coloneqq \Beh(f)$ respectively.\footnotemark{}
		\footnotetext{Even though we write $\behim{\eval}$ for the evaluation function and $\behim{B}$ for the set of behaviors, we \emph{do not} assume that there is a morphism $\eval$ and an object $B$ in $\cC$ whose images under $\Beh$ are $\behim{\eval}$ and $\behim{B}$ respectively (cf.\ intrinsic behavior structure in \cref{sec:intrinsic_beh}).}%
		Additionally, having a lax gs-monoidal functor means, in short, that there is a consistent way to relate each set $\behim{X} \times \behim{Y}$ to the set $\behim{X \otimes Y}$ (see \cref{def:gs-monoidal_functor} for more details).
	
		\begin{remark}[Role of the shadow functor]
			The \newemph{shadow functor} $\Beh$ provides a way to relate the intrinsic description of processes as abstract elements of the ambient category $\cC$ and their concrete description as certain (partial, multi-valued) functions{\,\textemdash\,}morphisms of $\cat{Rel}$.
			The distinction between these two descriptions allows us to 
			\begin{itemize}
				\item introduce constraints on the permitted morphisms of $\cC$ (such as constraints on their computability or scaling properties), independently of 
				
				\item the evaluation function $\behim{\eval}$, which tells us the ``observable'' behavior{\,\textemdash\,}an element of $\behim{B}${\,\textemdash\,}of a target $\behim{t} \in \behim{T}$ in some context $\behim{c} \in \behim{C}$.
			\end{itemize}
			One can contrast this set-up with the notion of intrinsic behavior structure (\cref{sec:intrinsic_beh}) where evaluation is a morphism of $\cC$, and therefore has to comply with any of these constraints. 
		\end{remark}
	
		\begin{figure}[t]\centering
			\includegraphics[width=.6\columnwidth]{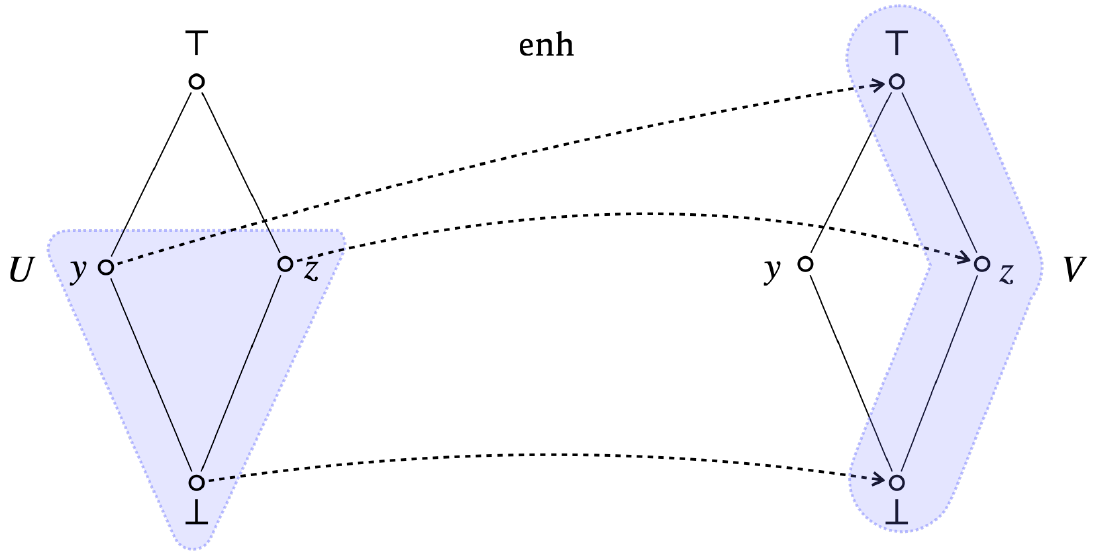}
			\caption{An example of an enhancement map $\textrm{enh} \colon U\to V$ where $U$ and $V$ are subsets of $X= \{\top, y, z,\bot\}$, and where $\succeq$ is generated by the Hasse diagram depicted with solid lines.
			The map takes any element in $U$ to an element in $V$ which is above in the $\succeq$ relation. 
			Note that in this example $\mathrm{enh}$ is not a degradation map (see \cref{fig:degradation}).}
			\label{fig:enhancement}
		\end{figure}
	
		\begin{figure}[t]\centering
			\includegraphics[width=.6\columnwidth]{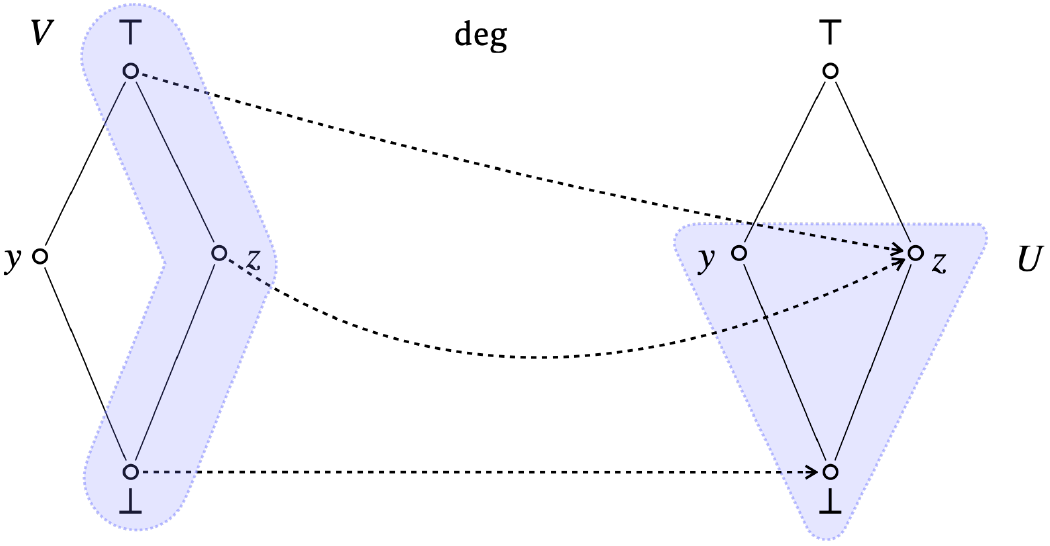}
			\caption{An example of a degradation map $\textrm{deg} \colon V\to U$ where $U$ and $V$ are subsets of $X= \{\top, y,z, \bot\}$,  and where $\succeq$ is generated by the Hasse diagram depicted with solid lines.
			The degradation map takes any element in $V$ to an element in $U$ which is below in the $\succeq$ relation. 
			Note that in this example $\mathrm{deg}$ is not an enhancement map (see \cref{fig:enhancement}).}
			\label{fig:degradation}
		\end{figure}
		
		Given a behavior structure we obtain a preorder relation $\mrel$ on the hom-set $\cC(A,T \otimes C)$ for any $A \in \cC$, as required by \cref{def:instance}.
		Before we give its details in \cref{def:mrel}, we introduce two order-theoretic concepts: enhancements that ``move elements upward'' (\cref{fig:enhancement}) and the dual notion of degradations that ``move elements downward'' (\cref{fig:degradation}).
		\begin{definition}\label{def:enh_deg}
			Let $(X,\succeq)$ be a preordered set with subsets $U$ and $V$.
			Consider two functions $f \colon U \to V$ and $g \colon V \to U$.
			We say that 
			\begin{itemize}
				\item $f$ is an \textbf{enhancement map} if $f(u) \succeq u$ holds for all $u \in U$.
				\item $g$ is a \textbf{degradation map} if $v \succeq g(v)$ holds for all $v \in V$.
			\end{itemize} 
		\end{definition}
		
		\begin{remark}[Interpretation of enhancement and degradation]
			In the first author's PhD thesis \cite[section 2.4]{Gonda2021}, enhancement and degradation maps are used to construct two ways of extending a given preorder from a set $X$ to its power set.
			The two orderings of subsets of $X$ carry distinct interpretation.
			If an enhancement map of type $U \to V$ exists, then every element of $U$ is below some element of $V$.
			On the other hand, if a degradation map of type $V \to U$ exists, then every element of $V$ is above some element of $U$.
			Both of these statements assert that $V$ is, in some sense, above $U$.
			They are nevertheless distinct.
			In our case, we use the intersection of these orderings in \cref{def:brel} to conform to the intended interpretation of the imitation relation $\mrel$ (\cref{rem:imitation_meaning}).
		\end{remark}
		
		\begin{definition}\label{def:brel}
			Consider an ordered set $(X, \succeq)$.
			Given two relations $\nu,\mu \in \cat{Rel}(A,X)$, we say that $\nu$ \textbf{imitates} $\mu$, denoted $\nu \succeq^{\rm im} \mu$, if for every element $a \in \dom(\mu)$ of the domain of $\mu$ (thought of as a subset of $A$), both of the following conditions hold:
			\begin{enumerate}
				\item There exists an enhancement map $\mathrm{enh}_a \colon \mu(a) \to \nu(a)$.
				\item There exists a degradation map $\mathrm{deg}_a \colon \nu(a) \to \mu(a)$.
			\end{enumerate}
			This defines the \textbf{imitation relation} $\succeq^{\rm im}$ on the set $\cat{Rel}(A,X)$ of relations for any $A$.
		\end{definition}
		\begin{remark}[Imitation relation for functions]\label{ex:imitation_functions}
			In the special case when both $\nu$ and $\mu$ are partial functions, $\nu$ imitates $\mu$ if and only if $\nu(a) \succeq \mu(a)$ holds for every $a$ in the domain of $\mu$.
			This is because, for singleton sets $\{n\}, \{m\} \subseteq X$, the following are equivalent \cite[\mbox{appendix A}]{gonda2023monotones}:
			\begin{enumerate}
				\item \label{it:preorder_fun} $n \succeq m$.
				\item \label{it:enhorder_fun} There exists an enhancement map $\{m\} \to \{n\}$.
				\item \label{it:degorder_fun} There exists a degradation map $\{n\} \to \{m\}$.\qedhere
			\end{enumerate}
		\end{remark}
		
		If $(X, \succeq)$ is the set of behaviors specified by a behavior structure, we obtain an imitation order $\imrel$ on any hom-set $\cat{Rel}(A,\behim{B})$.
		In \cref{def:mrel} and \cref{lem:brel_precomp}, we use it to show that every behavior structure gives rise to a target--context category.
		Ambient categories arising in this way thus come equipped with a more concrete description of the ambient relation $\mrel$ than a generic target--context category (\cref{def:instance}) does.
		
		\begin{definition}\label{def:mrel}
			Let $(\Beh, \behim{B}, \behim{\eval}, \brel)$ be a behavior structure.
			Provided $f,g \in \cC(A,T \otimes C)$, we say that $\bm{f}$ \textbf{ambient-imitates} $\bm{g}$, denoted $f \mrelop g$, if the relation $\behim{\eval} \circ \behim{f}$ imitates $\behim{\eval} \circ \behim{g}$ according to the imitation relation $\imrel$.
			This defines the ambient imitation relation $\mrel$ on any hom-set of the form $\cC(A,T \otimes C)$.
		\end{definition}
		\Cref{fig:behavior-relation} depicts the elements appearing in this definition.
		To understand its motivation, we first need to introduce universality of simulators in \cref{def:univ sim}.
		We thus return to the interpretation in \cref{rem:imitation_meaning}, which motivates the seemingly complicated definition of the ambient imitation relation.
	
		\begin{figure}[t]\centering
			\includegraphics[width=.9\columnwidth]{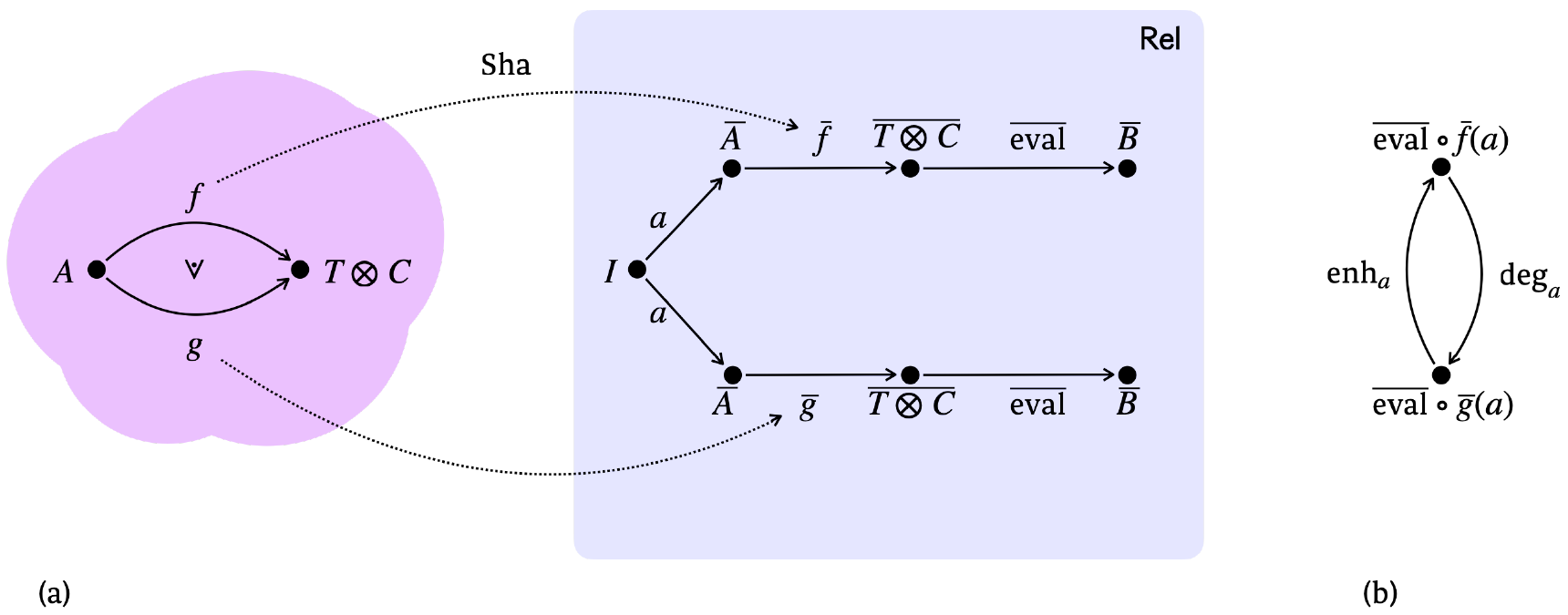}
			\caption[behavior-relation]{(a) The ambient imitation preorder $\mrel$ (\cref{def:mrel}) between morphisms of type $A \to T \otimes C$ in the ambient category (left-hand side) is defined via the imitation preorder of their shadows ($\behim{f}$ and $\behim{g}$) in the category of relations, when composed with the evaluation function.
			Evaluation is used to generate a ``behavior-valued relation'' of type $\behim{A} \to \behim{B}$. 	
			(b) Behaviors of shadows ($\behim{\eval}\circ\behim{f}$ and $\behim{\eval}\circ\behim{g}$) are related by the imitation relation $\imrel$ (\cref{def:brel}) precisely if there are $a$-dependent enhancement and degradation maps as specified in the figure.}
			\label{fig:behavior-relation}
		\end{figure}
	
		\begin{definition}\label{def:instance_beh}
			A \textbf{target--context category with behaviors} $(\cC,T,C,\Beh,\behim{B},\behim{\eval},\brel)$ is a target--context category $(\cC,T,C,\mrel)$ in which the preorders $\mrel$ are the ambient imitation relations from \cref{def:mrel}.
		\end{definition}
		
		\begin{example}[Behavior structure for Turing machines] \label{ex:TM_behaviors}
			In \cref{ex:TM}, we define target--context categories $\cat{Tur}$ and $\cat{Tur}^\intr$ for Turing machines. 
			In particular, they come with a preorder on each hom-set of morphisms whose output is $T \otimes C$.
			
			Both are target--context categories with behaviors, as we now spell out.
			\begin{enumerate}
				\item \emph{Behavior structure for $\cat{Tur}^\intr$}: 
					The candidate set $\behim{B}$ of behaviors is the set $\Sigma^\star$ of all finite strings, and the candidate $\behim{\eval}$ is the partial function that maps a pair $(t,c)$ to the output $t(c)$ of Turing machine $t$ when evaluated on input $c$.
					If we take $\brel$ to be the equality relation on $\behim{B}$ and $\Beh$ to be the inclusion functor of $\cat{Tur}$ into $\cat{Rel}$, then this choice makes $\cat{Tur}$ into a target--context category with behaviors.
					One can verify that the resulting ambient imitation relation coincides with the relation given in \cref{ex:TM}.
				\item \emph{Behavior structure for $\cat{Tur}$}:
					Here, satisfying the relation $f \mrelop g$ places an additional requirement.
					Namely, the domains of such $f$ and $g$ necessarily coincide.
					For this reason, the behaviors are different, as they also need to include information about the domain.
					While there are multiple implementations, one is to choose $\behim{B} \coloneqq \Sigma^\star \times \pwrset(\Sigma^\star)$, so that behaviors are pairs of a string and a set of strings.
					We think of the former as the reading of the output tape of a Turing machine, and the latter as the set of input strings for which a Turing machine halts.
					That is, $\behim{\eval}$ in this case maps a pair $(t,c)$ to the pair ${\bigl( t(c), \dom(t) \bigr) \in \behim{B}}$.
					Once again, $\brel$ is the equality relation on $\behim{B}$ and $\Beh$ is the inclusion functor of $\cat{Tur}$ into $\cat{Rel}$.
			\end{enumerate}
			Moreover, since $\brel$ is equality in both cases, \cref{prop:beh_equality} applies and we can characterize the ambient imitation relation via $\sqsupseteq$ as in equivalence \eqref{eq:beh_equality}.
		\end{example}

		The following two results show that ambient imitation relations indeed satisfy the assumptions of a target--context category.
			
		\begin{lemma}\label{lem:brel_precomp}
			Ambient imitation relations are preorders preserved by precomposition.
		\end{lemma}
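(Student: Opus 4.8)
The plan is to reduce the whole statement to two facts about the \emph{imitation} relation $\imrel$ on hom-sets of $\cat{Rel}$: that $\imrel$ is a preorder, and that it is stable under precomposition with an arbitrary relation. This reduction is legitimate because, by \cref{def:mrel}, $f \mrelop g$ holds exactly when $\behim{\eval}\circ\behim{f} \imrelop \behim{\eval}\circ\behim{g}$, and because $\Beh$ is a functor, so $\behim{f\circ h} = \behim{f}\circ\behim{h}$ and hence $\behim{\eval}\circ\behim{f\circ h} = (\behim{\eval}\circ\behim{f})\circ\behim{h}$. Thus $\mrel$ on $\cC(A,T\otimes C)$ is the preimage of $\imrel$ on $\cat{Rel}(\behim{A},\behim{B})$ along the map $f \mapsto \behim{\eval}\circ\behim{f}$, and precomposing by $h$ on the $\cC$-side corresponds to precomposing by $\behim{h}$ on the $\cat{Rel}$-side.

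First I would check that $\imrel$ is a preorder on each $\cat{Rel}(Y,\behim{B})$. Reflexivity: for any relation $\nu$ and any $a \in \dom(\nu)$, the identity function on the set $\nu(a)$ is at once an enhancement and a degradation map, since $\brel$ is reflexive; so $\nu \imrelop \nu$. Transitivity: given $\nu \imrelop \mu$ and $\mu \imrelop \lambda$, fix $a \in \dom(\lambda)$; the enhancement map $\lambda(a) \to \mu(a)$ witnessing $\mu \imrelop \lambda$ forces $\mu(a) \neq \emptyset$, so $a \in \dom(\mu)$ and $\nu \imrelop \mu$ supplies both an enhancement map $\mu(a) \to \nu(a)$ and a degradation map $\nu(a) \to \mu(a)$. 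Composing the two enhancement maps (and using transitivity of $\brel$) gives an enhancement map $\lambda(a) \to \nu(a)$, and composing the two degradation maps in the opposite order gives a degradation map $\nu(a) \to \lambda(a)$; as $a$ was arbitrary, $\nu \imrelop \lambda$. Pulling these back along $f \mapsto \behim{\eval}\circ\behim{f}$ shows $\mrel$ is reflexive and transitive on every $\cC(A,T\otimes C)$.

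For preservation under precomposition I would show that $\nu \imrelop \mu$ in $\cat{Rel}(A,\behim{B})$ implies $\nu\circ k \imrelop \mu\circ k$ for any relation $k \colon Z \to A$, and then specialise to $k = \behim{h}$. Fix $z \in \dom(\mu\circ k)$ and note $(\mu\circ k)(z) = \bigcup_{a\in k(z)}\mu(a)$ and $(\nu\circ k)(z) = \bigcup_{a\in k(z)}\nu(a)$. For the enhancement direction, use the axiom of choice to pick, for each $x \in (\mu\circ k)(z)$, an index $a_x \in k(z)$ with $x \in \mu(a_x)$; then $a_x \in \dom(\mu)$, so $x \mapsto \mathrm{enh}_{a_x}(x) \in \nu(a_x) \subseteq (\nu\circ k)(z)$ is an enhancement map, assembled from the fibrewise ones. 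Once one has the analogous degradation map $(\nu\circ k)(z) \to (\mu\circ k)(z)$, one checks it is $\brel$-decreasing and concludes $f\circ h \mrelop g\circ h$.

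The main obstacle is exactly the degradation direction. Naively assembling $\mathrm{deg}_{a}$ fibrewise requires, for each $y \in (\nu\circ k)(z)$, an index $a_y \in k(z)$ with $y \in \nu(a_y)$ that moreover lies in $\dom(\mu)$ (so that $\mathrm{deg}_{a_y}$ is available and $y \mapsto \mathrm{deg}_{a_y}(y) \in \mu(a_y) \subseteq (\mu\circ k)(z)$); a priori a point $a \in k(z)$ may have $\nu(a) \neq \emptyset$ but $\mu(a) = \emptyset$, in which case no such witness is visible. Resolving this is the step that genuinely uses the hypotheses rather than treating $\nu,\mu$ as unstructured relations: one must exploit that $\nu = \behim{\eval}\circ\behim{f}$ and $\mu = \behim{\eval}\circ\behim{g}$ are evaluated shadows of $\cC$-morphisms with $f \mrelop g$, tracking $\dom(\behim{\eval}\circ\behim{f})$ against $\dom(\behim{\eval}\circ\behim{g})$ to see that the fibres of $\nu\circ k$ over $z$ are already exhausted by fibres over points of $\dom(\mu)$. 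With that in hand, the degradation map is built by the same choice-plus-assembly recipe as the enhancement map, and the lemma follows.
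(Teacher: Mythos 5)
Your overall route is the same as the paper's: reduce everything to the statement that the imitation relation $\imrel$ on $\cat{Rel}(A,\behim{B})$ is a preorder stable under precomposition, using functoriality of $\Beh$ to transport precomposition by $h$ to precomposition by $\behim{h}$. Your verification of reflexivity and transitivity is correct (and more explicit than the paper, which asserts the preorder part without proof), and your enhancement-direction argument{\,\textemdash\,}choose for each $x\in(\mu\circ k)(z)$ a witness $a_x\in k(z)$ with $x\in\mu(a_x)$, note $a_x\in\dom(\mu)$, and assemble the fibrewise maps{\,\textemdash\,}is exactly the paper's construction. The problem is the degradation direction, which you correctly isolate as the crux but do not close. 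Your proposed repair{\,\textemdash\,}that because $\nu=\behim{\eval}\circ\behim{f}$ and $\mu=\behim{\eval}\circ\behim{g}$ with $f\mrelop g$, the fibres of $\nu\circ k$ over $z$ are ``already exhausted by fibres over points of $\dom(\mu)$''{\,\textemdash\,}does not follow from anything in \cref{def:brel} or \cref{def:mrel}: the relation $f\mrelop g$ imposes conditions only at points of $\dom(\behim{\eval}\circ\behim{g})$, and $f$ is entirely unconstrained elsewhere (indeed implication \eqref{eq:mrel_domain} is designed precisely so that proper extensions of $g$ lie above $g$). Concretely, take $A=\{a_1,a_2\}$ with $\brel$ the discrete order on $\{x,y\}$, set $\mu(a_1)=\nu(a_1)=\{x\}$, $\mu(a_2)=\emptyset$, $\nu(a_2)=\{y\}$, and $k(z)=\{a_1,a_2\}$. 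Then $\nu\imrelop\mu$, yet $(\nu\circ k)(z)=\{x,y\}$ admits no degradation map to $(\mu\circ k)(z)=\{x\}$, so fibrewise assembly cannot succeed and your claimed exhaustion of fibres fails.

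To be fair to you, the paper's own proof of \cref{lem:brel_precomp} is thin at exactly this spot: it constructs $\mathrm{enh}$ explicitly and then writes ``and similarly for the degradation map'', but the ``similar'' construction requires, for each $b\in(\nu\circ\xi)(z)$, a witness $w_b\in\xi(z)\cap\dom(\mu)$ with $b\in\nu(w_b)$, which the example above shows need not exist. So you have located a genuine difficulty rather than invented one. But as written your proposal does not resolve it, and resolving it requires more than the stated hypotheses supply{\,\textemdash\,}e.g.\ an additional assumption controlling $\nu$ outside $\dom(\mu)$, or a modification of the degradation clause in \cref{def:brel}{\,\textemdash\,}so the proof remains incomplete.
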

		\begin{proof}
			The claim follows if we can show that the imitation relation itself is preserved by precomposition (in $\cat{Rel}$).
			Let us therefore argue that for relations $\nu,\mu \in \cat{Rel}(A, X)$ and $\xi \in \cat{Rel}(Z,A)$, we have
			\begin{equation}\label{eq:imrel_precomp}
				\nu \succeq^{\rm im} \mu  \quad \implies \quad  \nu \circ \xi \succeq^{\rm im} \mu \circ \xi.
			\end{equation}
			
			To establish the consequent of implication \eqref{eq:imrel_precomp}, consider an arbitrary $z \in Z$ that is in the domain of $\mu \circ \xi$.
			There is necessarily at least one element of $\xi(z) \subseteq A$ that is also in the domain \mbox{of $\mu$}.
			Let us denote the (necessarily non-empty) intersection by
			\begin{equation}
				W \coloneqq \xi(z) \cap \dom(\mu)
			\end{equation}
			where we think of $\dom(\mu)$ as a subset of $A$.
			
			Assuming that $\nu$ imitates $\mu$, we have enhancement and degradation maps
			\begin{align}
				\mathrm{enh}_w &\colon \mu(w) \to \nu(w)  &  \mathrm{deg}_w &\colon \nu(w) \to \mu(w)
			\end{align}
			for each $w \in W$.
			This allows us to pick enhancement and degradation maps
			\begin{align}
				\mathrm{enh} &\colon \mu \circ \xi(z) \to \nu \circ \xi(z)  &  \mathrm{deg} &\colon \nu \circ \xi(z) \to \mu \circ \xi(z).
			\end{align}
			For example, for any $b \in \mu \circ \xi(z)$, there is at least one $w_b$ such that $b$ is an element of $\mu(w_b)$ so that we can define
			\begin{equation}
				\mathrm{enh}(b) \coloneqq \mathrm{enh}_{w_b}(b)
			\end{equation} 
			and similarly for the degradation map.
			While $\mathrm{enh}$ and $\mathrm{deg}$ defined as such are not unique, we nevertheless obtain the desired statement that $\nu \circ \xi$ imitates $\mu \circ \xi$.
		\end{proof}
		
		We now show that if a morphism $f$, when restricted to the domain of $g$, coincides with $g$ (\cref{def:dom_eq}), then $f$ also behaviorally imitates $g$.
		\begin{proposition}[Morphisms behaviorally imitate restrictions to their domain]\label{prop:beh_restriction}
			Consider a behavior structure $(\Beh, \behim{B}, \behim{\eval}, \brel)$ and let $\mrel$ be the corresponding ambient imitation relation.
			Then we have
			\begin{equation}\label{eq:beh_restriction}
				f \sqsupseteq g  \quad \implies \quad  f \mrelop g
			\end{equation}
			 for all $f,g \in \cC(A, T \otimes C)$. 
		\end{proposition}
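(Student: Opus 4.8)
The plan is to transport the hypothesis $f \sqsupseteq g$ across the shadow functor $\Beh$ and then verify the imitation relation of \cref{def:brel} pointwise. By \cref{def:dom_eq} (cf.\ \eqref{eq:restriction_order_2}) the hypothesis unwinds to the equation $f \circ \dom(g) = g$ in $\cC$, and since $\Beh$ is a functor, applying it gives
\begin{equation*}
	\behim{f} \circ \Beh\bigl(\dom(g)\bigr) = \behim{g}
\end{equation*}
in $\cat{Rel}$. Postcomposing with the evaluation function and abbreviating $\nu \coloneqq \behim{\eval} \circ \behim{f}$ and $\mu \coloneqq \behim{\eval} \circ \behim{g}$ yields $\nu \circ \Beh(\dom(g)) = \mu$. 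Recall that by \cref{def:mrel} the target conclusion $f \mrelop g$ is precisely the assertion $\nu \imrelop \mu$ in $\cat{Rel}(\behim{A}, \behim{B})$.

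The key step is the observation that $\Beh(\dom(g)) \in \cat{Rel}(\behim{A}, \behim{A})$ is a \emph{partial identity}, meaning $\Beh(\dom(g))(a) \subseteq \{a\}$ for every $a \in \behim{A}$. This is where the lax gs-monoidal structure of $\Beh$ enters: $\dom(g)$ is built from $g$ together with $\cop$ and $\discard$ via the diagram \eqref{eq:domain}, a lax gs-monoidal functor relates copying and discarding in $\cC$ to copying and discarding in $\cat{Rel}$, and in $\cat{Rel}$ every relation valued in the unit object $I = \{\sngltn\}$ is contained in the discard relation $\discard_I$; unwinding the definition of $\dom$ with the laxators then forces $\Beh(\dom(g))$ to be contained in the partial identity $\dom(\behim{g})$. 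Granting this, the equation $\nu \circ \Beh(\dom(g)) = \mu$ shows two things. First, for every $a \in \dom(\mu)$, and since $\dom(\mu) \subseteq \dom(\behim{g})$, we must in fact have $\Beh(\dom(g))(a) = \{a\}$ — otherwise $\mu(a)$ would be empty, contradicting $a \in \dom(\mu)$. Second, and consequently, $\mu(a) = \nu(a)$ as subsets of $\behim{B}$ for each such $a$.

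It then remains to produce the enhancement and degradation maps demanded by \cref{def:brel}. For each $a \in \dom(\mu)$ the sets $\mu(a)$ and $\nu(a)$ are literally equal, so the identity function $\mu(a) \to \nu(a)$ is simultaneously an enhancement map and a degradation map, since $b \brelop b$ holds for every $b$ by reflexivity of the preorder $\brel$. This witnesses $\nu \imrelop \mu$, i.e.\ $f \mrelop g$, completing the argument.

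I expect the only genuine obstacle to be the claim, in the second paragraph, that $\Beh(\dom(g))$ is a partial identity: it requires carefully tracking how the laxators of the lax gs-monoidal functor $\Beh$ act on the diagram \eqref{eq:domain} — equivalently, invoking the general fact that lax gs-monoidal functors into $\cat{Rel}$ send domain idempotents to (sub-relations of) domain idempotents. Everything downstream — the bookkeeping $\dom(\mu) \subseteq \dom(\behim{g})$ and the use of identity enhancement/degradation maps — is routine, and together with \cref{lem:brel_precomp} this establishes that ambient imitation relations satisfy both axioms of a target--context category.
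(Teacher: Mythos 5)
Your argument is correct and follows essentially the same route as the paper: apply the shadow functor to $g = f \circ \dom(g)$, deduce $\behim{f}(a) = \behim{g}(a)$ for all $a$ in the domain of $\behim{g}$, and take identities as the enhancement and degradation maps. The one step you flag as the "genuine obstacle"{\,\textemdash\,}that $\Beh(\dom(g))$ is (contained in) the partial identity $\dom(\behim{g})${\,\textemdash\,}is precisely what the paper establishes, in the stronger form $\Beh(\dom(g)) = \dom(\behim{g})$, as \cref{prop:sha_dom}, so your sketch closes exactly where the paper invokes that proposition.
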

		\begin{proof}
			The relation $f \sqsupseteq g$, i.e.\ equality of $g$ and $f$ within the domain of $g$, says
			\begin{equation}
				g = f \circ \dom(g).
			\end{equation}
			Applying $\Beh$ to this equation, and using first its functoriality and then \cref{prop:sha_dom}, we obtain
			\begin{equation}\label{eq:dom_restriction_order}
				\behim{g} = \behim{f} \circ \behim{\dom(g)} = \behim{f} \circ \dom \bigl( \behim{g} \bigr).
			\end{equation}
			
			This means that for any element $a\in \dom(\behim{g})$ of the domain of $\behim{g}$ (thought of as a subset of $\behim{A}$), $\behim{g}(a)=\behim{f}(a)$ holds, and thus we also have
			\begin{equation}
				\behim{\eval}\circ \behim{g}(a)=\behim{\eval}\circ\behim{f}(a).
			\end{equation}
			Hence the identity $\id_{\behim{A}}$ is both an enhancement map of type
			\begin{equation}
				\behim{\eval}\circ\behim{g}(a) \to \behim{\eval}\circ\behim{f}(a)
			\end{equation}
			and a degradation map of type
			\begin{equation}
				\behim{\eval}\circ\behim{f}(a) \to \behim{\eval}\circ\behim{g}(a).
			\end{equation}
			This is precisely the required $f \mrelop g$ by \cref{def:mrel}.
		\end{proof}
		
		In many examples of ambient categories, we also have a converse of a slightly stronger version of implication \eqref{eq:beh_restriction} (i.e.\ the right-to-left implication in \eqref{eq:beh_equality}).
		This often happens because the behavioral relation $\brel$ is equality on the set $\behim{B}$ of behaviors.
		\begin{proposition}\label{prop:beh_equality}
			Consider a behavior structure $(\Beh, \behim{B}, \behim{\eval}, =)$ and let $\mrel$ be the corresponding ambient imitation relation.
			Then we have
			\begin{equation}\label{eq:beh_equality}
				f \mrelop g  \quad \iff \quad  \behim{\eval} \circ \behim{f} \,\sqsupseteq\, \behim{\eval} \circ \behim{g}
			\end{equation}
			 for all $f,g \in \cC(A, T \otimes C)$. 
		\end{proposition}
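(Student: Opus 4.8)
The plan is to reduce both sides of \cref{eq:beh_equality} to one and the same pointwise condition on the relations $\nu \coloneqq \behim{\eval}\circ\behim{f}$ and $\mu \coloneqq \behim{\eval}\circ\behim{g}$, both of type $\behim{A}\to\behim{B}$ in $\cat{Rel}$. Specifically, I will show that each of the two sides is equivalent to asking that $\nu(a)=\mu(a)$ for every $a$ in the domain $\dom(\mu)$ of $\mu$ (viewed as a subset of $\behim{A}$), after which the claimed equivalence is immediate.

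For the right-hand side, recall from \cref{eq:restriction_order_2} that $\nu \sqsupseteq \mu$ unwinds to the equation $\nu\circ\dom(\mu)=\mu$ in $\cat{Rel}$. Using the explicit form \cref{eq:domain_rel} of the domain of a relation, for each $a\in\behim{A}$ the left-hand side of this equation evaluates to $\nu(a)$ when $a\in\dom(\mu)$ and to $\emptyset$ otherwise, while $\mu(a)=\emptyset$ precisely when $a\notin\dom(\mu)$. Hence the equation $\nu\circ\dom(\mu)=\mu$ holds if and only if $\nu(a)=\mu(a)$ for all $a\in\dom(\mu)$, which is the desired pointwise condition.

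For the left-hand side, $f\mrelop g$ means, by \cref{def:mrel}, that $\nu$ imitates $\mu$ in the sense of \cref{def:brel}, where the ambient order is now the behavioral relation, which is equality. So for each $a\in\dom(\mu)$ one needs an enhancement map $\mathrm{enh}_a\colon\mu(a)\to\nu(a)$ and a degradation map $\mathrm{deg}_a\colon\nu(a)\to\mu(a)$. With equality as the order, the defining condition $\mathrm{enh}_a(u)=u$ forces $\mathrm{enh}_a$ to be the inclusion, so such a (well-defined) map exists exactly when $\mu(a)\subseteq\nu(a)$; dually, a degradation map exists exactly when $\nu(a)\subseteq\mu(a)$. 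Therefore $\nu$ imitates $\mu$ if and only if $\nu(a)=\mu(a)$ for every $a\in\dom(\mu)$, i.e.\ the same pointwise condition as above, and \cref{eq:beh_equality} follows.

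I do not expect a real obstacle here; this is essentially bookkeeping. The two points that need a little care are (i) checking that the quantifier ``$a\in\dom(\mu)$'' genuinely coincides on the two sides{\,\textemdash\,}on the left it is built into \cref{def:brel} as the domain of the relation being imitated, namely $\mu=\behim{\eval}\circ\behim{g}$, and on the right it is the $\dom$ appearing in $\nu\circ\dom(\mu)=\mu${\,\textemdash\,}and (ii) the observation that, when the order is equality, a function between two subsets of $\behim{B}$ respecting the order must fix elements pointwise, hence is an inclusion, so that its existence is equivalent to the corresponding containment. For the easy implication $\Leftarrow$ one could alternatively argue as in the proof of \cref{prop:beh_restriction}, using that the identity map is simultaneously an enhancement and a degradation map whenever $\nu(a)=\mu(a)$.
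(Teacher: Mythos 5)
Your proof is correct and follows essentially the same route as the paper's: the key observation in both is that when the behavioral preorder is equality, an enhancement map $\mu(a)\to\nu(a)$ exists iff $\mu(a)\subseteq\nu(a)$ and a degradation map exists iff $\nu(a)\subseteq\mu(a)$, so the imitation relation collapses to the restriction-to-domain relation $\sqsupseteq$. The only cosmetic difference is that you unwind $\sqsupseteq$ pointwise via \cref{eq:domain_rel}, whereas the paper states the subset characterization for general $U,V\subseteq\behim{B}$ and identifies $=^{\rm im}$ with $\sqsupseteq$ directly.
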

		\begin{proof}	
			Note that for two subsets $U$ and $V$ of the preordered set $(\behim{B}, =)$, we have that 
			\begin{itemize}
				\item there exists and enhancement $U \to V$ if and only if $U$ is a subset of $V$, and
				\item there exists a degradation $V \to U$ if and only if $V$ is a subset of $U$.
			\end{itemize}
			Therefore, the associated imitation relation $\nu =^{\rm im} \mu$ between $\mu, \nu \in \cat{Rel}(A,\behim{B})$ is precisely the restriction to domain relation $\nu \sqsupseteq \mu$ from \cref{def:dom_eq}.
			By \cref{def:mrel}, we thus have
			\begin{equation}
				f \mrelop g  \quad \iff \quad   \behim{\eval} \circ \behim{f} \,=^{\rm im}\, \behim{\eval} \circ \behim{g} \quad \iff \quad  \behim{\eval} \circ \behim{f} \,\sqsupseteq\, \behim{\eval} \circ \behim{g}
			\end{equation}
			which is what we aimed to show.
		\end{proof}
		
	\subsection{Intrinsic Behavior Structure}\label{sec:intrinsic_beh}

		In \cref{def:mrel} of a behavior structure, the evaluation function is (potentially) outside of the image of the shadow functor, see also \cref{fig:behavior-structure}.
		In such case, to compare morphisms with respect to the ambient imitation relation $\mrel$, we must consider their shadows.
			
		\begin{figure}[t]\centering
			\includegraphics[width=.7\columnwidth]{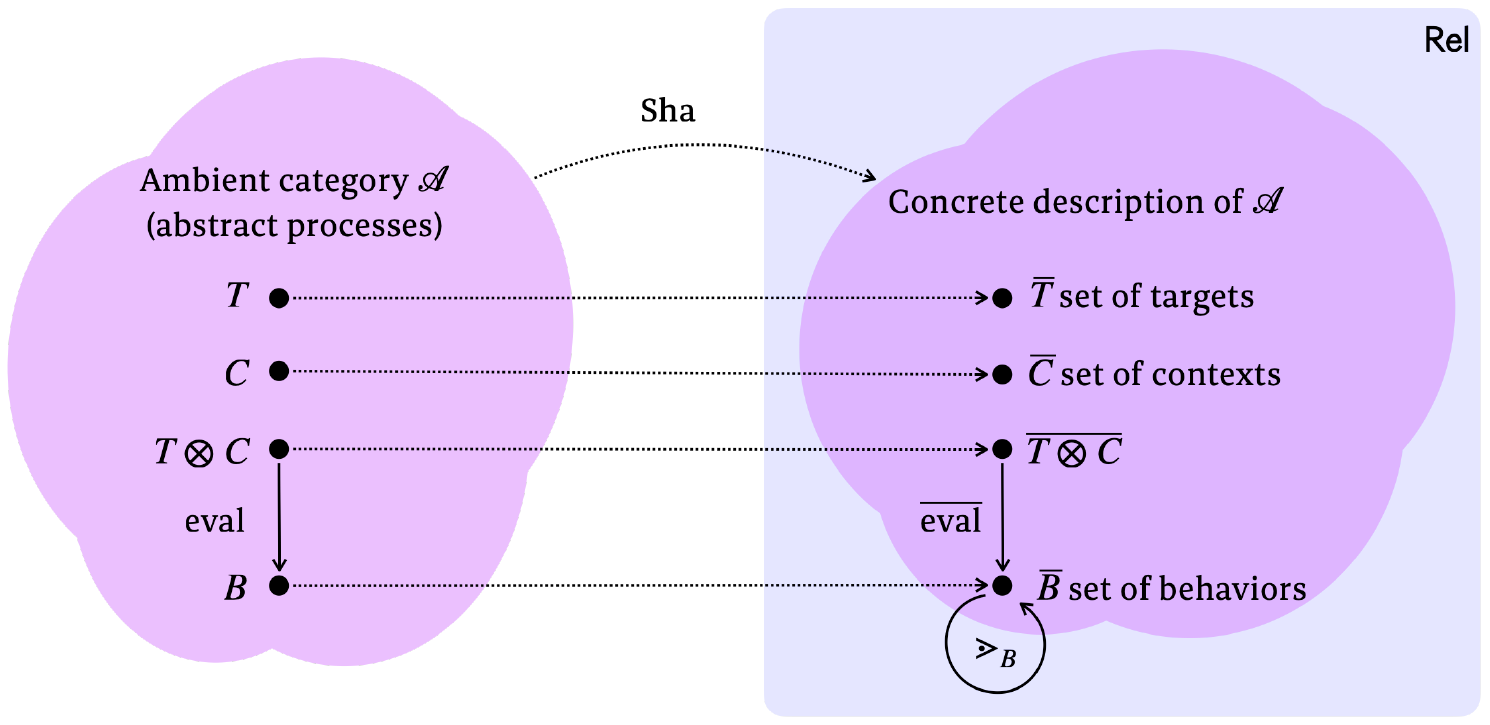}
			\caption{In an intrinsic behavior structure, the evaluation function $\eval$ comes from $\cC$ (compare with \cref{fig:behavior-structure}). }
			\label{fig:intrinsic-behavior}
		\end{figure}
		
		The alternative to that is if the evaluation function has a corresponding representation in the ambient category{\,\textemdash\,}a morphism $\eval$ whose shadow is the function $\behim{\eval}$ (see \cref{fig:intrinsic-behavior}).
		This is particularly useful if the shadow functor is faithful, but it can be used also more generally.
		For instance, a sufficient condition for $f$ to behaviorally imitate $g$ is that their evaluations in $\cC$ coincide, i.e.\ we have $\eval \circ f = \eval \circ g$, which can be shown within $\cC$ itself without considering the shadows.
		\begin{definition}\label{def:intrinsic_beh}
			Consider a target--context category with behaviors given by the tuple $(\cC,T,C,\Beh,\allowbreak \behim{B}, \behim{\eval}, \brel)$ as in \cref{def:instance_beh}.
			We say that $\cC$ has \textbf{intrinsic behaviors} if there exists an object $B \in \cC$ and a morphism ${\eval \colon T \otimes C \to B}$ such that we have
			\begin{equation}
				\behim{B} \coloneqq \Beh(B)  \quad \text{ and } \quad  \behim{\eval} \coloneqq \Beh(\eval).
			\end{equation}
		\end{definition}
		
		\begin{example}[Turing machines with intrinsic behavior structure] \label{ex:TM intrinsic}
			In \cref{ex:TM_behaviors}, we construct the behavior structures for $\cat{Tur}$ and $\cat{Tur}^\intr$, the two target--context categories for Turing machines. 
			Here, following \cite[example 3.2.1]{Co08d}, we illustrate that $\cat{Tur}^\intr$ has an intrinsic behavior structure in the sense of \cref{def:intrinsic_beh}.
			
			In order for the behavior structure from \cref{ex:TM_behaviors} to be intrinsic, one has to show that $\behim{\eval}$ is itself a morphism (denoted by $\eval$) of $\cat{Tur}^\intr${\,\textemdash\,}a (partial) computable function.
			This can be proven as a \emph{consequence} of the existence of a universal Turing machine.
			The computability of $\eval$ is precisely the statement that there exists a Turing machine $u$ that takes an arbitrary Turing machine $t$ as part of its input and reproduces its behavior on the rest of the input. 
			This argument further relies on another assumption, namely that the enumeration of Turing machines used to identify $T$ with the set of strings $\Sigma^\star$ in \cref{ex:TM} allows one to construct such a universal Turing machine $u$.
			While not every enumeration would work, a suitable one can be chosen.
			Such enumerations are often called admissible or acceptable numberings \cite{So99}.
			
			On the other hand, the above argument does not hold for the target--context category $\cat{Tur}$.
			This is because the halting problem for Turing machines is famously not computable. 
			Indeed, if the evaluation function from $\cat{Tur}$ was computable, i.e.\ if one could decide for every Turing machine and every input whether it is in the domain of the Turing machine or not, then one could also decide the halting problem. \qedhere
		\end{example}
	
	\begin{remark}
		In \cref{app:intrinsification} we prove that given any target--context category with behaviors (\cref{def:instance_beh}) we can construct a target--context category with intrinsic behaviors (\cref{def:intrinsic_beh}) that is based on the same ambient category, same targets and same contexts.
		This constructions is such that whenever the ambient relation $f \mrelop g$ holds in the one with intrnsic behaviors, then we also have the corresponding ambient relation in the former target--context category.
	\end{remark}

\section{Reductions and Universality of Simulators}
\label{sec:Reductions}

	We are now ready to define the universality of a simulator (\cref{def:univ sim}).
	Then we provide several examples of universal simulators: universality of simulators for Turing machines (\cref{ex:TM univ sim}), in computational complexity (\cref{ex:NP_complete}) and universality of spin models  (\cref{sec:spinmodel}). 
	In \cref{sec:math_examples}, we discuss a selection of examples that showcase how universal simulators recover other concepts from mathematics, such as universal sets, dense subsets, and cofinal subsets. 
	Finally, in \cref{sec:no-go_theorems}, we provide a no-go theorem for universality, i.e.\ we find conditions under which a simulator cannot be universal.  
	
	\subsection{Definition of Universality}\label{sec:universality_def}
	
		The upcoming definition of universality (\cref{def:univ sim}) hinges on the notion of a reduction. 
		Intuitively, reductions are transformations of simulators that only have access to the programs.
		By relating programs of a simulator $s$ to those of another one $s'$, they cannot enlarge the range of targets that a simulator encompasses.

		\begin{definition}\label{def:reduction}
			Consider two simulators $s$, $s'$ with programs given by $P$ and $P'$ respectively.
			A \textbf{reduction} $r^* \colon s \to s'$ of simulators is specified by a functional morphism $r \colon P' \to P$ of $\cC$ such that 
			\begin{equation}\label{eq:reduction}
				\tikzfig{reduction}
			\end{equation}
			holds.
		\end{definition}
		\begin{definition}\label{def:lax_reduction}
			A \textbf{lax reduction} is one for which we only require that $s \circ r$ behaviorally imitates $s'$, i.e.\ that \cref{eq:reduction} holds with the ambient imitation relation $\mrel$ in place of equality.
			On the other hand, for an oplax reduction it is $s'$ that behaviorally imitates $s \circ r$.
		\end{definition}
		\begin{definition}\label{def:univ sim}
			A simulator $s$ is a \textbf{universal simulator} if there exists a lax reduction ${r^* \colon s \to \id}$ from $s$ to the trivial simulator, i.e.\ such that we have
			\begin{equation}\label{eq:univ_sim_def_2}
				\tikzfig{univ_sim_def_2}
			\end{equation}
		\end{definition}
		A generic universal simulator is one that is able to reproduce the behavior of the trivial simulator.
		In particular, the trivial simulator is always universal, because $\mrel$ is a reflexive relation (\cref{lem:brel_precomp}).
	
		\begin{remark}[Uses of the word universal] 
			While \cref{def:univ sim} defines the notion of a universal simulator, note that `universal' is used in many other ways (universal Turing machine, universal construction, universal spin models, etc).
			As we show in \cref{ex:TM univ sim,sec:spinmodel,ex:cofinal}, some of these correspond to universal simulators. 
		\end{remark}
		
		\begin{remark}[Interpretation of the imitation relation]\label{rem:imitation_meaning}
			Having introduced the key concept in our framework{\,\textemdash\,}universality of simulators{\,\textemdash\,}we now return to the motivation for the ambient imitation relation (\cref{def:mrel}).
			First of all, note that the requirement that the imitation relation $\imrel$ from \cref{def:brel} only be tested for inputs within the domain of $g$ translates to the fact that we only care about those pairs of a target $t \in \cC_\det(I,T)$ and a context $c \in \cC_\det(I,C)$ which give rise to \emph{some behavior}, i.e.\ that the set
			\begin{equation}
				U \coloneqq \behim{\eval} \circ \bigl( \behim{t \otimes c} \bigr)
			\end{equation}
			in $\behim{B}$ is non-empty.
			Let us also define another set of behaviors via the left-hand side of \eqref{eq:univ_sim_def_2}:
			\begin{equation}
				V \coloneqq \behim{\eval} \circ \behim{s} \circ \bigl( \behim{(r \, t) \otimes c} \bigr)
			\end{equation}
			Relation \eqref{eq:univ_sim_def_2} then says that
			\begin{enumerate}
				\item \label{it:enh} there exists an enhancement map $\mathrm{enh} \colon U \to V$, and
				\item \label{it:deg} there exists a degradation map $\mathrm{deg} \colon V \to U$.
			\end{enumerate}
			
			The enhancement map ensures that, for each target behavior $u \in U$, there is a behavior $\mathrm{enh}(u) \in V$ of the compiled version
			\begin{equation}\label{eq:univ_sim_def_4}
				\tikzfig{univ_sim_def_4}
			\end{equation}
			of the program $r\circ t \in \cC_\det(I,P)$ in context
			\begin{equation}\label{eq:univ_sim_def_5}
				\tikzfig{univ_sim_def_5}
			\end{equation}
			that subsumes the behavior of $t$ in context $c \colon I \to C$.
			In other words, relation \eqref{eq:univ_concrete} holds.
		
			On the other hand, the degradation map ensures that any behavior in $v \in V$, if manifested by $s \circ r$, will subsume at least one of the target behaviors in $U$. 
			While condition \ref{it:enh}, when applied to all target--context pairs, can be read as
			\begin{center}
				$s \circ r$ is sufficient to behaviorally subsume $\id_{T \otimes C}$,
			\end{center}
			condition \ref{it:deg} says that
			\begin{center}
				 $s \circ r$ is faithful in its simulation of all behaviors manifested by target--context pairs.
			\end{center} 
			We require the latter because we think of situations in which the agent implementing the simulator may not have control over its output and over the specific behavior $v \in V$ that is manifested{\,\textemdash\,}a valid simulation should therefore succeed independently of these hidden variables.
		\end{remark}
	
		\begin{example}[Universal Turing machine as a universal simulator]\label{ex:TM univ sim}
			Consider the ambient category $\cat{Tur}$ from \cref{ex:TM}.
			Let $u$ be a universal Turing machine.
			This means that there exist total computable functions $r \colon T \to \Sigma^\star$ and $\langle \ph , \ph  \rangle \colon T \times C \to C$; such that, for all Turing machines $t \in T$ and all input strings $c \in C$, running $t$ on $c$ returns the same output as running $u$ on $\langle r(t), c \rangle$.
			
			Consider the singleton simulator $s_u$ defined by $u$ and $\langle \ph , \ph \rangle$ (see \cref{ex:snglt_sim_TM}). 
			Programs for this simulator are strings from $\Sigma^\star$.
			Then $s_u$ is a universal simulator in the sense of \cref{def:univ sim} if and only if $u$ is a universal Turing machine.
			The lax reduction to the trivial simulator is provided by the aforementioned $r$, since we have
			\begin{equation}
				\eval \bigl( s_u(r(t),c) \bigr) = \eval \bigl( u, \langle r(t), c\rangle  \bigr) = \eval(t,c)
			\end{equation}
			where the first equality follows by definition of $s_u$ and second equality follows from the assumption that $u$ is a universal Turing machine.
		\end{example}

		\begin{example}[An NP-complete language as a universal simulator]\label{ex:NP_complete}
			Fix a finite alphabet $\Sigma$ and let us construct the category $\cat{Poly}(\Sigma^\star)$ whose objects are $\Sigma^\star$, the singleton set $I$ and all finite cartesian products thereof.
			Its morphisms are chosen to be deterministic poly-time computable maps.
			More precisely, they are functions $f \colon (\Sigma^\star)^k \to (\Sigma^\star)^l$ such that there exists a polynomial $\mathfrak{p}_f$ and a deterministic multi-tape Turing machine $t$ with $\max(l,k)$ tapes, such that for all $(x_1, \ldots , x_k) \in (\Sigma^\star)^k$ 
			when running $t$ with $x_i$ written on its $i$-th tape, $t$ halts in runtime at most $\mathfrak{p}_f(\sum_i \abs{x_i})$ and with $f_j(x)$ written on its $j$-th tape.  
			From this definition it is clear that all identity functions are poly-time computable and, moreover, the composite of any two poly-time computable functions is poly-time computable. 
			
			To construct a target--context category, we take $T= \Sigma^\star$ and $C=I$.
			Next, we fix an arbitrary NP-complete language $L \subset \Sigma^\star$.
			We equip $\cat{Poly}(\Sigma^\star)$ with a behavior structure by choosing the shadow functor $\Beh$ to be the embedding of $\cat{Poly}(\Sigma^\star)$ into $\cat{Rel}$.
			Furthermore, the behaviors are Booleans: 
			We set $\behim{B} = \{0,1\}$ and equality as the behavioral relation $\brel$.
			Finally, the evaluation function $\behim{\eval} \colon \behim{T} \to \behim{B}$ is the characteristic function of $L$:
			\begin{equation}
				\behim{\eval}(t) = 
				\begin{cases}
					1 & \text{if }  t \in L \\
					0 & \text{otherwise}
				\end{cases}
			\end{equation}
			Note that since $L$ is NP-complete, $\behim{\eval}$ is not deterministic poly-time computable (unless P $=$ NP).
			However, it is poly-time computable by a \emph{non-deterministic} Turing machine.
			We now explain how NP-complete languages correspond to universal simulators in this target--context category and vice versa.
			
			First observe that each morphism of type $f \colon A \to T$ gives rise to an NP language. 
			Specifically, for any $f\colon A \to T$, we can define the language
			\begin{equation}
				L_f \coloneqq \Set*[\big]{ a \in A  \given  \behim{\eval} \circ \behim{f}(a) = 1 }.
			\end{equation}
			Since $\behim{\eval}$ is non-deterministic poly-time computable, $L_f$ is an NP language.
			Note that by construction $f$ is a reduction of languages from $L_f$ to $L$.
			Conversely, for any NP language $L' \subset A$, 
			there exists a poly-time reduction $f_{L'}$ from $L'$ to $L$ because $L$ is NP-complete. 
			This function defines a morphism $f_{L'} \colon A \to T$ in $\cat{Poly}(\Sigma^\star)$.
			
			Now take any NP-complete language $L' \subset A$ and consider the simulator $s$ with compiler given by $s_T = f_{L'}$ and context reduction $s_C = \discard_A$.
			Since $L'$ is NP-complete there necessarily exists a poly-time reduction $r$ from $L$ to $L'$, which is a morphism of type $r \colon T \to A$ in $\cat{Poly}(\Sigma^\star)$ that satisfies 
			\begin{equation}\label{eq:NP complete uni}
			\begin{split}
				\behim{\eval}\circ \behim{s_T} \circ \behim{r} (t) = 1 \quad & \iff \quad  s_T\circ r(t) \in L \\
					&\iff \quad  r(t) \in L' \\
					&\iff \quad t \in L \\
					&\iff \quad \behim{\eval}(t) = 1
			\end{split}
			\end{equation}
			
			In other words, $s$ is a universal simulator.
			Conversely, any universal simulator $s \colon A \to T$ gives rise to an NP-complete language. 
			By the above construction, the set $L_{s_T} \subset A$ is an NP language.
			Since $s$ is assumed to be universal, there exists a morphism $r \colon T \to A$ that satisfies \cref{eq:NP complete uni} and hence defines a reduction from the NP-complete $L$ to $L_{s_T}$, which proves that $L_{s_T}$ is likewise NP-complete.

			Note that similar constructions work for other complexity classes besides NP. 
		\end{example}
	
	\subsection{Universal Spin Models as Universal Simulators}\label{sec:spinmodel}
		In order to account for the universality of spin models \cite{De16b} in our framework, we explicitly distinguish between \emph{spin systems}{\,\textemdash\,}functions from a finite set of spins to energies{\,\textemdash\,}and \emph{spin models}{\,\textemdash\,}families of spin systems which are usually identified by a property they share.
		
		\parlabel{Formal spin systems}
		We take spin systems to be defined on hypergraphs, where spins sit on vertices and each $k$-body interaction is given by a hyperedge with $k$ elements.
		Specifically, for our purposes, it suffices to consider hypergraphs that are also (abstract) simplicial complexes.
		This is because we interpret a $k$-body interaction between a set of vertices as also including all $m$-body interactions among them for $m \le k$ (cf.\ \cref{def:spinsystem}). 
		A simplicial complex $G$ can be specified by a set $V_G$ of vertices and a set $E_G$ of facets, the latter of which is a subset of the power set $\pwrset(V_G)$ satisfying
		\begin{equation}\label{eq:hypergraph_quotient}
			e \subseteq e' \quad \implies \quad e = e'
		\end{equation}
		for any $e,e' \in E_G$.
	
		Provided such a simplicial complex together with a number of states of each spin, a spin system is given by its Hamiltonian viewed as a collection of local terms.
		\begin{definition}\label{def:spinsystem}
			Let $G$ be a finite simplicial complex and let $[q]$ be the set $\{0,1,2,\ldots,q - 1\}$.
			A $q$-level \textbf{spin system} on $G$ is a family
			\begin{equation}
				H = \Family*[\big]{ H_e \given e \in E_G }
			\end{equation}
			of local terms $H_e \colon [q]^e \to \R$ indexed by the facets of $G$.
		\end{definition}
		In words, $H$ associates, to each facet $e$, a function $H_e$ from the possible spin configurations of vertices within $e$ to real numbers (thought of as energies).
		A $q$-level \textbf{spin configuration} for a given hypergraph $G$ and a number of levels $q$ is given by a map 
		\begin{equation}
			\sigma \colon V_G \to [q] ,
		\end{equation}
		i.e.\ an element of $[q]^{V_G}$.
		The \textbf{energy} of a configuration $\sigma$ of a spin system $H$ is then given by
		\begin{equation}
			H(\sigma) = \sum_{e \in E_G} H_e(\sigma_{e})  
		\end{equation}
		where $\sigma_{e} \in [q]^e$ is the specification of spin values of the configuration $\sigma$ for vertices that are in the facet $e$.
		The \textbf{spectrum} $\spsp(H)$ of a spin system $H$ is the set of all its possible energies:
		\begin{equation}
			\spsp(H) = \Set*[\Big]{ H(\sigma) \given \sigma \in [q]^{V_G} } .
		\end{equation}
		For any $\Delta \in \R$ and any spectrum $\spsp$, we define the reduced spectrum as
		\begin{equation}
			\spsp(H)_{\leq \Delta} \coloneqq \Set*[\big]{\spen(H) \in \spsp \given \spen \leq \Delta } .
		\end{equation}
		
		\begin{remark}[Ground state]\label{rem:GSE}
			An important property of a spin system is its ground state energy{\,\textemdash\,}the minimal value of $H${\,\textemdash\,}and its ground state configuration(s), for which the energy is mimimized.
			To convert the task of minimizing the energy into a decision problem, one can ask whether the reduced spectrum $\spsp(H)_{\leq \Delta}$ is non-empty.
			If true, then the ground state energy is necessarily below $\Delta$.
			This is the ground state energy problem for general spin systems, whose computational complexity is NP-complete \cite{Bar82}. 
		\end{remark}
		
		\parlabel{Target-context category for spin systems}
		We now turn our attention to the target--context category for spin systems,  $\cat{SpinSys}$.
		We take $\cat{Rel}_{\mathrm{poly}}$ from \cref{sec:relpoly} as the ambient category.
		Additionally, we specify
		\begin{itemize}
			\item $T = \left\{ \left( G, q, H, \Delta \right) \right\}$ the set of all tuples such that $G$ is a finite simplicial complex, $q \in \N$, $H$ is a $q$-level spin system on $G$, and $\Delta \in \R$;
			\item $C = \left\{ (G,q,\sigma) \right\}$ the set of all tuples such that $G$ is a finite hypergraph, $q \in \N$, and $\sigma$ is a $q$-level spin configuration of $G$. 
		\end{itemize}
		Note that $T$ is the set of all spin systems according to \cref{def:spinsystem}, 
		together with a specification of an energy threshold $\Delta$, the hypergraph $G$ and the number $q$, which are  given as data to simplify the definitions. 
		Similarly, $C$ is the set of all spin configurations. 
		
		Next, we construct size measures on $T$ and $C$.
		The notion of simulation we want to capture ought to be efficient (i.e.\ polynomial) in the size of the spin systems. 
		First, for any simplicial complex $G$ and any $q \in \N$ we define the size measure as
		\begin{equation}
			\abs*[\big]{G,q} = \sum_{e \in E_G} q^{\abs{e}},
		\end{equation}
		which is the number of real parameters needed to fully specify a $q$-level spin configuration for $G$. 
		In this sense, it is a measure of complexity of spin systems. 
		We set the complexity of the empty hypergraph $G_{\emptyset}$ to $0$.
		The relevant size measure for $T$ is then the function $\abs{\ph}_T \colon T \to \mathbb{R}_{\geq 0}$ given by
		\begin{equation}
			\abs*[\big]{(G,q,H,\Delta)}_T \coloneqq \abs*[\big]{G,q}.
		\end{equation}
		Similar, the size measure for $C$ is given by 
		\begin{equation}
			\abs*[\big]{(G,q,\sigma)}_C \coloneqq \abs*[\big]{G,q}.
		\end{equation}
		
		\parlabel{Behavior structure}
		Finally, to describe a target--context category with intrinsic behaviors, we also need to provide the behavior structure. 
		We take the shadow functor to be the forgetful functor from $\cat{Rel}_{\mathrm{poly}}$ to $\cat{Rel}$, which is an example of a pointed shadow functor (\cref{sec:pointed_shadow}).
		Additionally, we define $B \coloneqq \R \times \pwrset(\R) \times \R$, where $\pwrset(\R)$ denotes the power set of $\R$.
		We interpret a behavior $(\spen, S, \Delta) \in B$ as a tuple of a set $S$ of possible energy values, a specific energy $\spen$ and a threshold energy $\Delta$. 
		We take $\abs{b}_B$ to be zero for every behavior $b$ in $B$.
		Note that then every relation of type $A \to B$ is polynomially bounded regardless of the choice of $\abs{\ph}_A$. 
		In particular, this holds true for the evaluation function
		\begin{equation}
			\eval \colon T \otimes C \to B
		\end{equation}
		that we choose to map a pair of a spin system $t \coloneqq (G,q,H,\Delta) \in T$ and a spin configuration $c \coloneqq (G',q',\sigma) \in C$ to the tuple containing the energy of the configuration $c$, the spectrum of $H$ and the threshold energy $\Delta$ via
		\begin{equation}\label{eq:spin_eval}
			\eval(t,c) \coloneqq \Bigl\{ \bigl( H(\sigma), \spsp(H), \Delta \bigr) \Bigr\} \subset B
		\end{equation}	
		as long as we have $G = G'$, $q = q'$ and $H(\sigma) \leq \Delta$.
		If, on the other hand, either the hypergraphs or the numbers of spin states differ or if the energy $H(\sigma)$ is above the threshold $\Delta$, then we leave $\eval$ undefined.

		Finally, the behavioral relation $\brel$ is defined by requiring that ${(\spen', S', \Delta') \brelop (\spen , S, \Delta)}$ holds precisely if we have
		\begin{equation}\label{eq:spin_brel_spec}
			S'_{\leq \Delta'} = S^{\phantom{\prime}}_{\leq \Delta'}, \qquad \text{and} \qquad \Delta' \geq \Delta  \qquad \text{and} \qquad \spen' = \spen \leq \Delta .
		\end{equation}

		This definition ensures that if $t'=(G',q',H',\Delta')$ lax context-reduces to $t=(G,q,H,\Delta)$ according to \cref{def:c_red_rel}, i.e.\ if there exists a morphism $v \colon C \to C$ satisfying 
		\begin{equation}
			t \otimes v \mrelop t \otimes \id_C,
		\end{equation} 
		then 
		\begin{itemize}
			\item the spectra of the two Hamiltonians agree up to the threshold $\Delta$ of the target, and
			\item for any target configuration $\sigma$ and for any corresponding configuration $\sigma'$ that is contained in $v(G,q,\sigma)$, either we have $H'(\sigma')=H(\sigma)$ or both energies are above the target \mbox{threshold $\Delta$}. 
		\end{itemize}
		
		The above relation between two spin systems $t$ and $t'$ is meant to specify that the source spin system $t$ \emph{simulates} the target spin system $t'$ in the language of \cite{De16b}.
		This interpretation is relevant because it justifies our choice of the behavioral relation.
		See \cref{rem:spin_sim} for an intuion for spin system simulations and \cref{rem:encompass} for a similar interpretation in the case of Turing machines and their simulation.

		This concludes the description of the target--context category chosen to describe universality of spin models in this article. 
		Next, we define a universal simulator therein, which uses 2D Ising systems to simulate all other spin systems.
		Its universality follows by the results of \cite{De16b}.
		
		\parlabel{2D Ising model as a universal simulator}
		We choose a subset $P$ of $T$ to be the 2D Ising model with fields.
		That is, $P$ is the set of tuples $(G,2,H,\Delta)$ for an arbitrary $\Delta\in \R$ and $G$ is a 2D rectangular lattice with nearest neighbor facets of size $2$. 
		The Hamiltonian $H$ assigns an Ising coupling $J_{ij}$ to each such facet and specifies a magnetic field value $B_i$ at each vertex, so that for a facet $e = \{ i,j\} $ we have
		\begin{equation}
			H_e \left(  \sigma_{e} \right) = J_{ij} (-1)^{\sigma_{i}} (-1)^{\sigma_{j}} + B_i (-1)^{\sigma_{i}} + B_j (-1)^{\sigma_{j}}.
		\end{equation}
		Additionally, we take the size measure $\abs{\ph}_P$ for $P$ to be the one inherited from $\abs{\ph}_T$.
		
		We now define a universal simulator of type $s \colon P \otimes C \to T \otimes C$.
		The compiler $s_T \colon P \hookrightarrow T$ is the inclusion of $P$ in $T$.
		The reduction $r \colon T\to P$ leaves $\Delta$ unchanged and maps a given arbitrary spin system $t$ to the 2D Ising system $r(t)$ that can be used to simulate it with threshold $\Delta$ according to the construction given in \cite{De16b,Re23}. 
		Furthermore, the context reduction $s_C$ is the relation that maps a configuration $\sigma$ of $t$ to the set of all configurations $\sigma'$ of $r(t)$ that have the `same behavior'.
		That is, the set $s_C(r(t),c)$ is non-empty and for any $c' \in s_C(r(t),c)$ we have
		\begin{equation}\label{eq:spectrum_equality}
			\spsp(H)_{\leq \Delta} = \spsp(H')_{\leq \Delta}
		\end{equation}
		and either 
		\begin{equation}\label{eq:energy_matching}
			H(\sigma) = H'(\sigma') \qquad \text{or} \qquad H(\sigma) > \Delta, \; H'(\sigma')> \Delta ,
		\end{equation}
		where $H$ and $\sigma$ are the target Hamiltonian and configuration and $H'$ and $\sigma'$ are the source (2D Ising) Hamiltonian and its configuration respectively.
		It is shown in \cite{De16b} that such  maps $r$ and $s_C$ exist and that they both are polynomially bounded. 
		This implies that the simulator $s$ made up of $s_T$ and $s_C$ is universal in the sense of \cref{def:univ sim}.

		\begin{remark}[Intuition for spin system simulations]\label{rem:spin_sim}
			To obtain some intuition for spin system simulations, let us assume that a (2D Ising) spin system $r(t)$ simulates a spin system $t$ and extract the consequences for their respective ground state energy problems (\cref{rem:GSE}).
			Equality \eqref{eq:spectrum_equality} of the reduced spectra means that their ground state energies agree.
			Therefore, a solution to the ground state energy problem for $r(t)$ also solves this problem for $t$.
			Moreover, every ground state configuration of $t$  (i.e.\ a witness for its ground state energy problem) is associated to a collection of ground state configurations of $t$.
			
			This analysis also exposes the relevance of the use of polynomially bounded relations as morphisms in the ambient category of $\cat{SpinSys}$.
			In particular, the fact that the reduction $r$ is polynomially bounded is necessary to obtain a polynomial-time reduction (in the sense of computational complexity) from the ground state energy problem of a generic spin system $t$ to the ground state energy problem of a 2D Ising spin system $r(t)$.
		\end{remark}
		
		\begin{remark}[Multi-valued context reduction]\label{rem:multi-valued_context_reduction}
			The context reduction $s_C$ of the 2D Ising simulator maps a configuration $\sigma$ not just to a single configuration but to all configurations $\sigma'$ of the source model with the same energy as $\sigma$.
			There are two benefits of this choice. On the one hand, inverting $s_C(r(t),\ph )$ yields a relation mapping configurations of the source model to the corresponding configurations of the target model. 
			In particular, one can find the ground states of the target model by finding a ground state of the source model and applying the inverse context reduction.

			On the other hand, having access to all configurations of the source model with a given energy enables the investigation of the partition function. 
			If the multiplicities of the source model (below the threshold $\Delta$) and the target model differ only by a constant factor, then the partition functions also only differ by a constant factor. 
		\end{remark}
		
		\begin{remark}[Universal simulators need not be universal spin models]
			Note that while universal spin models (in the sense of \cite{De16b}) give rise to universal simulators in $\cat{SpinSys}$, the converse is not true in general.
			The reason for this is an additional requirement that \cite{De16b} places on context reductions. 
			While in $\cat{SpinSys}$ context reductions ${s_C \colon P \otimes C \to C}$ are arbitrary relations that satisfy the efficiency constraint of \cref{eq:poly constr}, in \cite{De16b} it is additionally required that configurations $\sigma'$ are mapped to configurations $\sigma$ such that one can recover $\sigma'$ from $\sigma$ in a ``local'' way.
			More precisely, one assigns a set of spins $P_i \in V_{G}$ to each spin $i \in V_{G'}$. 
			Knowledge of $\sigma\vert_{P_i}$ then allows one to recover $\sigma'\vert_{i}$.
			We expect that it is possible to account for this by further restricting the morphisms $\cat{SpinSys}$ to respect the hypergraph structure in an appropriate way and thereby also capture simulation of spin models from \cite{De16b} via the lax context-reduction relation (\cref{def:c_red_rel}). 
		\end{remark}	
	
	\subsection{Abstract Examples of Universal Simulators}\label{sec:math_examples}
	
	We now present a few abstract examples of universal simulators. 
	In \cref{ex:cofinal,ex:dense}, the ambient category $\cC$ is $\cat{Set}${\,\textemdash\,}the category of sets and functions.
	In \cref{ex:universal_set}, the ambient category is a subcategory of $\cat{Set}$ called $\cat{Qbs}$.
	All examples use a target--context category with intrinsic behaviors (\cref{def:intrinsic_beh}) whose shadow functor is the inclusion into $\cat{Rel}$.
	By \cref{prop:inclusion_is_pointed}, it coincides with the pointed shadow functor in the first two examples. 
	Note that whenever $\Beh$ is an inclusion functor we may consider $\brel$ as a preorder on $B$ instead of $\behim{B}$.
	The specific target--context categories used in these examples are summarized in \cref{table:abstract_examples}.

	\begin{table}[!ht]
		\centering
		\begin{tabular}{c|c|c|c|c|c} 
				 &  $\cC$ &  $T$ &  $C$ &  $\bigl( B, \brel \bigr)$ & \makecell{$\eval \colon T \times C \to B$} \\ 
			 \hline
			 	\makecell{\hyperref[ex:cofinal]{cofinal subset}} & $\cat{Set}$ & $X$  & $I$  & $(X, \succeq)$ & $\eval(x)=x$ 
			\\ 
			 	\hyperref[ex:dense]{dense subset} 
			 	& $\cat{Set}$ 
			 	& \makecell{$\mathbb{R}\times \mathbb{R}_+$} 
			 	& $I$ 
			 	& \makecell{$\bigl( \pwrset(\mathbb{R}), \subseteq \bigr)$} 
			 	& \makecell{$\eval(x,\delta) = \mathcal{B}(x, \delta)$} 
			 \\
			 	\makecell{\hyperref[ex:universal_set]{universal Borel set}} 
			 	& \makecell{$\cat{Qbs}$}
			 	& $[2]^X$ 
			 	& \makecell{$X$} 
			 	& $\bigl( [2], = \bigr)$ 
			 	& $\eval(t,x) = \chi_x(t)$ \\ 
		\end{tabular}
		\caption{Data used to specify target--context categories from examples in \cref{sec:math_examples}.}
		\label{table:abstract_examples}
	\end{table}
	
	\begin{example}[Cofinal subset as a universal simulator]\label{ex:cofinal}
		Consider a preordered set $(X, \succeq)$.
		A subset $M$ of $X$ is termed cofinal if there exists an enhancement map $e \colon X \to M$.
		We now show how a cofinal subset gives rise to a universal simulator.
		To that end, consider $\cat{Set}$ as ambient category, let $T \coloneqq X$, $C \coloneqq I$ and let the preordered set of behaviors be given by $(X,\succeq)$.
		Further, let $\eval$ be the identity on $X$. 
		Next, consider the simulator $s \colon M \to X$ with $s_T$ being the inclusion of $M$ into $X$.
		Then, for any $x\in X$, we have $s\circ e(x) = e(x) \succeq x = \id(x)$ because $e$ is an enhancement map. 
		Since both $s \circ e$ and $\id$ are functions, the following are equivalent (cf.\ \cref{ex:imitation_functions}):
		\begin{enumerate}
			\item \label{it:preorder} $e(x) \succeq x$.
			\item \label{it:enhorder} There exists an enhancement map $\{x\} \to \{e(x)\}$.
			\item \label{it:degorder} There exists a degradation map $\{e(x)\} \to \{x\}$.
		\end{enumerate}
		Note that $s$ is a universal simulator, meaning that 
		\begin{equation}
			\tikzfig{cofinal_univ}
		\end{equation}
		holds for some function $e$, if and only if there is an $e$ such that, for all $x \in X$, both \ref{it:enhorder} and \ref{it:degorder} are satisfied.
		On the other hand, $M$ is a cofinal subset of $X$ if and only if there is an $e$ such that condition \ref{it:preorder} holds for all $x \in X$.
		As a consequence of \cref{ex:imitation_functions} then, $s$ given by the inclusion function as above is a universal simulator if and only if its image is a cofinal subset of $X$.
		The reduction is given precisely by the enhancement map $X \to M$.

		More generally, in \cref{sec:no-go_theorems} we show that the so-called functional image (\cref{eq:fun_im}) of a simulator's compiler is always a cofinal subset of functional states on $T$ with respect to the lax context-reduction preorder (\cref{def:c_red_rel}). 
	\end{example}	

	\begin{example}[Dense subset as a universal simulator]\label{ex:dense}
		We show how the notion of a dense subset can be viewed as the universality of certain simulators. 
		While we describe rational numbers as a dense subset of real numbers, the example readily extends to many other topological spaces.  
		Consider $\cat{Set}$ as the ambient category and let $T \coloneqq \mathbb{R} \times \mathbb{R}_{+}$, $P \coloneqq \mathbb{Q} \times  \mathbb{R}_{+}$. 
		Further let $C$ be a trivial (i.e.\ singleton) set of contexts.
		The behaviors are sets of real numbers, so that $B$ is the power set of $\mathbb{R}$.
		The behavioral relation $\brel$ is given by the set inclusion $\subseteq$ which is interpreted as an ordering describing precision.
		Finally, let $\eval \colon T \to B$ be given by
		\begin{equation}
			(x,\delta) \mapsto \mathcal{B}(x, \delta)
		\end{equation}
		where $\mathcal{B}(x, \delta)$ is the open ball around a real number $x$ with radius $\delta$. 
		If $s \colon P \to T$ is the simulator given by the inclusion of $\mathbb{Q}$ into $\mathbb{R}$,
		\begin{equation}
			s \left( q , \delta \right) \coloneqq \left( q , \delta \right),
		\end{equation}
		then its universality expresses the fact that $\mathbb{Q}$ is dense in $\mathbb{R}$.
		The reduction $r^*$ to the trivial simulator can be specified by choosing, for any real number $x$ and any radius $\delta$, some rational number $q_{(x,\delta)}$ within the ball $\mathcal{B} \left( x, \delta/2 \right)$ and setting
		\begin{equation}
			r(x,\delta) \coloneqq \left( q_{(x,\delta)}, \delta/2 \right),
		\end{equation}
		so that it is indeed the case that $s \circ r$ behaviorally imitates $\id_T$ by \cref{ex:imitation_functions}.
	\end{example}

	\begin{example}[Universal Borel set as a universal simulator]\label{ex:universal_set}
		In descriptive set theory, there is a notion of a universal set \cite[definition 22.2]{kechris2012classical}.
		Such universal sets can be identified by the universality of a certain kind of simulators.
		To make the example concrete, let $\cC$ be the category $\cat{Qbs}$ of quasi-Borel spaces \cite{heunen2017convenient}, which has the useful property that it is cartesian closed.
		That is, we have (natural) bijections
		\begin{equation}\label{eq:ccc}
			\cat{Qbs}(Y \times X , Z) \cong \cat{Qbs} \left( Y , Z^X \right).
		\end{equation}
		Furthermore, consider standard Borel spaces $X$ and $Y$.
		Let the set $T$ of targets be the exponential object $[2]^X${\,\textemdash\,}the quasi-Borel space of subsets of $X$ associated with the so-called `Borel-on-Borel' $\sigma$-algebra \cite[section 27.1]{stein2021structural}.
		Its image under the shadow functor is $\brlsets(X)${\,\textemdash\,}the set of Borel subsets of $X$.
		The space of contexts $C$ is $X$ itself and behaviors $B$ are $[2] \coloneqq \{0,1\}$ with $\brel$ the equality relation.
		The evaluation function $\behim{\eval} \colon \brlsets(X) \times X \to [2]$ is given by $(t,x) \mapsto \chi_x(t)$ where $\chi_x$ is the characteristic function of $x$ as in \cref{ex:NP_complete}.
		That is, it gives $1$ whenever $x$ is an element of $t$ and $0$ otherwise.

		Let $s$ be a simulator where $P = Y$ and $s_C = \discard_Y {}\otimes{} \id_X$.
		Via the natural isomorphism \eqref{eq:ccc}, we can associate a morphism $Y \times X \to [2]$ (i.e.\ a subset of $Y \times X$) to any $s_T \colon Y \to [2]^X$.
		Then $s$ is a universal simulator if and only if this subset of $Y \times X$ is a $Y$-universal set for $[2]^X$.
		In other words, this condition says that every Borel subset $t$ of $X$ (i.e.\ an element of $[2]^X$) can be realized as the intersection of the $Y$-universal one with the set $\{y\} \times X$ for some $y \in Y$.
		The appropriate $y$ for each $t$ defines the map $r \colon [2]^X \to Y$ via $r(t) \coloneqq y$, so that $r^*$ is a lax reduction from $s$ to the trivial simulator.
		While we have not been able to show that such an $r$ can be in fact chosen so that it is a morphism in $\cat{Qbs}$ yet, we believe that this is the case.
	\end{example}

	\subsection{No-Go Theorem for Universality}\label{sec:no-go_theorems}
		
		We now derive conditions which guarantee that a simulator is \emph{not} universal{\,\textemdash\,}a no-go theorem for universality (\cref{thm:nogo:cor}).
		We apply them to show that in the target--context category of spin systems (\cref{sec:spinmodel}) every universal simulator needs to contain an infinite number of Hamiltonians in the functional image of its compiler.

		\begin{definition}\label{def:c_red_rel}
			Consider a target--context category $\cC$ and two morphisms $f,g \colon A \to T$.
			We say that $f$ \textbf{lax context-reduces} to $g$, denoted $f \credop g$, if there exists a morphism $f_C \colon A \otimes C \to C$ such that
			\begin{equation}\label{eq:c_red_rel}
				\tikzfig{c_red_relation}
			\end{equation}
			holds, in which case we call $f_C$ a context reduction.
			If, instead, relation \eqref{eq:c_red_rel} holds with the opposite of $\mrel$, then we say that $f$ \textbf{oplax context-reduces} to $g$, denoted $f \oplaxcredop g$.
		\end{definition}
		\begin{remark}[Encompassing as context-reducing]\label{rem:encompass}
			The idea behind \cref{def:c_red_rel} is much like that of the encompassing relation from the beginning of \cref{sec:motivation}.
			In particular, we can see that a given morphism $u \colon I \to T$ gives rise to a compiler $s_T = u \circ \discard_T$ of some (singleton) universal simulator if and only if $s_T$ lax context-reduces to the identity on $T$.
			This property then also implies that $u$ itself lax context-reduces to \emph{every} state $t \colon I \to T$, and in this sense $u$ is all-encompassing.
			
			For instance, in the target--context category $\cat{Tur}$ from \cref{ex:TM}, consider two Turing machines represented by deterministic states $t_1,t_2 \colon I \to T$.
			Then $t_1$ lax context-reduces to $t_2$ if and only if $t_1$ simulates $t_2$ in the standard sense as described in the part on ``additional structure'' in \cref{sec:motivation}.
		\end{remark}
		\begin{proposition}
			The lax context-reduction relation is a preorder on every hom-set $\cC(A,T)$.
		\end{proposition}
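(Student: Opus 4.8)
The plan is to verify that $\cred$ is reflexive and transitive on the hom-set $\cC(A,T)$, reducing everything to the two axioms of the ambient relation from \cref{def:instance}: that each $\mrel$ is a preorder, and that it is preserved by precomposition \eqref{eq:mrel_precomp}. For reflexivity I would, given $f \colon A \to T$, take the context reduction in \eqref{eq:c_red_rel} to be $f_C \coloneqq \discard_A \otimes \id_C \colon A \otimes C \to C$. Copying $A$ and then discarding one of the two resulting copies is the identity (the counit law of the copy--delete comonoid), so the left-hand side of \eqref{eq:c_red_rel} collapses to $f \otimes \id_C$; the required relation is then $f \otimes \id_C \mrelop f \otimes \id_C$, which holds because $\mrel$ is reflexive. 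Hence $f \credop f$.

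For transitivity, suppose $f \credop g$ and $g \credop h$, witnessed by context reductions $f_C, g_C \colon A \otimes C \to C$. The key step is to produce a single context reduction witnessing $f \credop h$ by ``running $g_C$ and then $f_C$, each reading off its own copy of $A$'': set
\begin{equation*}
	h_C \;\coloneqq\; f_C \circ \bigl( \id_A \otimes g_C \bigr) \circ \bigl( \cop_A \otimes \id_C \bigr) \colon A \otimes C \to C .
\end{equation*}
I would then prove $f \credop h$ with this $h_C$ in three steps. First, using only coassociativity and cocommutativity of $\cop_A$, I would rewrite the left-hand side of \eqref{eq:c_red_rel} for the pair $(f,h_C)$ as the left-hand side of \eqref{eq:c_red_rel} for the pair $(f,f_C)$, precomposed with $k \coloneqq (\id_A \otimes g_C) \circ (\cop_A \otimes \id_C) \colon A \otimes C \to A \otimes C$; concretely, both morphisms triplicate the input $A$ and route the copies to $f$, to $g_C$ and to $f_C$, and one just matches these up. Second, I would apply \eqref{eq:mrel_precomp} to the hypothesis $f \credop g$, precomposing both sides of \eqref{eq:c_red_rel} with $k$: by the first step the left-hand side becomes the $(f,h_C)$-morphism, while an identical unwinding shows that the right-hand side $(g \otimes \id_C) \circ k$ is exactly the left-hand side of \eqref{eq:c_red_rel} for the pair $(g,g_C)$. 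Third, the hypothesis $g \credop h$ says that this last morphism $\mrel$-dominates $h \otimes \id_C$, so transitivity of the preorder $\mrel$ gives that the $(f,h_C)$-morphism $\mrel$-dominates $h \otimes \id_C$ as well, which is precisely \eqref{eq:c_red_rel} for $f \credop h$.

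The only part that needs genuine care — and the sole obstacle — is the diagrammatic bookkeeping in the first two steps: verifying that precomposition with $k$ really reproduces the $(f,h_C)$-morphism, and that on the right-hand side it reproduces the left-hand side of \eqref{eq:c_red_rel} for the pair $(g,g_C)$. Both are straightforward gs-monoidal string-diagram manipulations, but the copies of $A$ must be tracked carefully. Once these two identities are established, transitivity is a single use of \eqref{eq:mrel_precomp} followed by transitivity of $\mrel$. The oplax context-reduction relation $\oplaxcred$ is handled by the same argument with $\mrel$ replaced by its opposite throughout.
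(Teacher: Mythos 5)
Your proposal is correct and matches the paper's proof: reflexivity via the context reduction $\discard_A \otimes \id_C$, and transitivity via the composite context reduction that copies $A$ and chains $g_C$ into $f_C$, using preservation of $\mrel$ under precomposition together with its transitivity. The paper states this more tersely, but the argument is the same.
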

		\begin{proof}
			Transitivity of $\cred$ follows since the ambient relation $\mrel$ is transitive and preserved by precomposition. 
			In particular, if $f \credop g$ and $g \credop h$ holds, then we have $f \credop h$ with context reduction given by 
			\begin{equation}
				\tikzfig{c_red_trans}
			\end{equation}
			Reflexivity of $\cred$ follows by choosing $\discard_A \otimes \id_C$ to be the context reduction.
		\end{proof}
		
		Clearly, if $s$ is a universal simulator with compiler $s_T$ and lax reduction $r^*$, then the composite $s_T \circ r$ lax context-reduces to $\id_T$.
		In particular, for any functional $t\in \cC_{\rm fun}(I,T)$, we find
		\begin{equation}\label{eq:sim_rel_new}
			\tikzfig{sim_relation_new}
		\end{equation}
		since the ambient imitation relation is preserved by precomposition.
		Intuitively, a simulator is universal if it encompasses all targets. 
		We can understand relation \eqref{eq:sim_rel_new} as saying that for any target $t$, there ought to be one ($s_T \circ r \circ t$) in the image of the compiler that context-reduces to $t$.
		It follows that the set of targets that a universal simulator has access to (in the sense that they are in the image of $s_T$) must be cofinal with respect to the lax context-reduction preorder.
		
		More concretely, let us define the \textbf{functional image} of a morphism $f \in \cC_{\rm fun}(A,T)$ by
		\begin{equation}\label{eq:fun_im}
			\mathrm{Im}_{\rm fun}(f) \coloneqq \Set*[\big]{f \circ a \given a \in \cC_{\rm fun}(I,A) }.
		\end{equation}
		It is a subset of $\cC_{\rm fun}(I,T)${\,\textemdash\,}the set of all functional states on $T$.
		\Cref{eq:sim_rel_new} then says that there is an enhancement map
		\begin{equation}\label{eq:fun_enh}
			\mathrm{enh} \colon \cC_{\rm fun}(I,T) \to \mathrm{Im}_{\rm fun}(s_T) , \qquad  t \mapsto s_T \circ (r \circ t)
		\end{equation}
		with respect to the lax context-reduction preorder since every $r \circ t$ is itself a functional state.
		
		Notably, existence of the enhancement map is a statement that is independent of the choice of $r$.
		We can now use the construction from \cite[corollary 2.32]{Gonda2021} to arrive at the following no-go result for universal simulators.
		
		\begin{theorem}[No-go theorem for universality]\label{thm:nogo:cor}
			Let $\varphi \colon \cC_{\rm fun}(I,T) \to \mathbb{R}$ be monotone with respect to the lax context-reduction preorder $\cred$ on its domain.
			Let $s \colon P \otimes C \to T \otimes C$ be a simulator and let $\mathrm{Im}_{\rm fun}(s_T)$ be the functional image of its compiler.
			If we have
			\begin{equation}\label{eq:nogo:cor}
				\sup \varphi \bigl( {\mathrm{Im}_{\rm fun}(s_T)} \bigr) < \sup \varphi \bigl( {\mathrm{Im}_{\rm fun}(\id_T)} \bigr), 
			\end{equation}
			then $s$ is not a universal simulator.
		\end{theorem}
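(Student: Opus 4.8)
The plan is to prove the contrapositive: assuming that $s$ \emph{is} a universal simulator, I will derive $\sup \varphi\bigl(\mathrm{Im}_{\rm fun}(s_T)\bigr) \geq \sup \varphi\bigl(\mathrm{Im}_{\rm fun}(\id_T)\bigr)$, directly contradicting \eqref{eq:nogo:cor}. By \cref{def:univ sim}, universality of $s$ furnishes a lax reduction $r^* \colon s \to \id$ with underlying functional morphism $r \colon T \to P$, and, as spelled out in the discussion preceding the theorem, this forces $s_T \circ r$ to lax context-reduce to $\id_T$. Precomposing with an arbitrary functional state $t \in \cC_{\rm fun}(I,T)$ and using that $\cred$ is preserved by precomposition — i.e. invoking relation \eqref{eq:sim_rel_new} — yields $s_T \circ (r \circ t) \credop t$. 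Equivalently, $t \mapsto s_T \circ (r \circ t)$ is exactly the enhancement map \eqref{eq:fun_enh} from $\cC_{\rm fun}(I,T)$ to $\mathrm{Im}_{\rm fun}(s_T)$ with respect to $\cred$; note $r \circ t$ is a functional state on $P$, being a composite of functional morphisms, so $s_T \circ (r \circ t)$ genuinely lies in the functional image $\mathrm{Im}_{\rm fun}(s_T)$.

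The crucial step is then to push this enhancement map through $\varphi$. Since $\varphi$ is monotone with respect to $\cred$ (viewed as the ``$\succeq$'' of the preorder), the relation $s_T \circ (r \circ t) \credop t$ gives $\varphi\bigl(s_T \circ (r \circ t)\bigr) \geq \varphi(t)$, and hence $\varphi(t) \leq \sup \varphi\bigl(\mathrm{Im}_{\rm fun}(s_T)\bigr)$, for every $t \in \cC_{\rm fun}(I,T)$. Taking the supremum over all such $t$ and recalling that $\mathrm{Im}_{\rm fun}(\id_T) = \cC_{\rm fun}(I,T)$ by definition of the functional image, I obtain $\sup \varphi\bigl(\mathrm{Im}_{\rm fun}(\id_T)\bigr) \leq \sup \varphi\bigl(\mathrm{Im}_{\rm fun}(s_T)\bigr)$, which contradicts \eqref{eq:nogo:cor}. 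Therefore no lax reduction to $\id$ can exist, i.e. $s$ is not universal. Alternatively, one may appeal to \cite[corollary 2.32]{Gonda2021}: a monotone whose supremum over a subset $M$ is strictly below its supremum over the whole preorder certifies the non-existence of an enhancement map into $M$; applying this to $M = \mathrm{Im}_{\rm fun}(s_T)$ rules out the enhancement map \eqref{eq:fun_enh}, and hence rules out universality.

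I do not anticipate any real obstacle, since the conceptual content — the equivalence between universality of $s$ and the existence of the enhancement map \eqref{eq:fun_enh} — has already been isolated in the paragraph before the statement, and what is left is the elementary observation that, each target $t$ being dominated in $\cred$ by a point of $\mathrm{Im}_{\rm fun}(s_T)$, a $\cred$-monotone $\varphi$ cannot have a strictly larger supremum over $\mathrm{Im}_{\rm fun}(\id_T)$ than over $\mathrm{Im}_{\rm fun}(s_T)$. The only points demanding a little care are (i) fixing the direction of ``monotone'' (order-preserving with respect to $\cred$), so that $s_T \circ (r \circ t) \credop t$ produces $\varphi\bigl(s_T\circ(r\circ t)\bigr) \ge \varphi(t)$ and not the reverse inequality, and (ii) checking that $r \circ t$ is functional so that the dominating element really does belong to $\mathrm{Im}_{\rm fun}(s_T)$; the strict inequality in \eqref{eq:nogo:cor} also implicitly guarantees that $\sup \varphi\bigl(\mathrm{Im}_{\rm fun}(s_T)\bigr)$ is finite, so no subtlety with extended-real suprema arises.
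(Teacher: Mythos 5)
Your proof is correct and follows essentially the same route as the paper's: proving the contrapositive, invoking the enhancement map \eqref{eq:fun_enh} that universality provides, and using monotonicity of $\varphi$ together with $\mathrm{Im}_{\rm fun}(\id_T) = \cC_{\rm fun}(I,T)$ to contradict the strict inequality \eqref{eq:nogo:cor}. The extra care you take over the direction of monotonicity and the functionality of $r \circ t$ is sound and consistent with the paper's setup.
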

		Note that $\cC_{\rm fun}(I,T)$ is the functional image of the compiler for the trivial simulator, i.e.\ we have
		\begin{equation}\label{eq:trivial_image}
			\cC_{\rm fun}(I,T) = \mathrm{Im}_{\rm fun} (\id_T).
		\end{equation}
		\begin{proof}
			We prove the contrapositive. 
			To that end, assume that $s$ is universal. 
			Since the function $\mathrm{enh}$ from \eqref{eq:fun_enh} is an enhancement map and $\varphi$ is monotone, we have
			\begin{equation}
				\varphi \bigl( \mathrm{enh}(t) \bigr) \geq \varphi(t)
			\end{equation}
			for all $t \in \cC_{\rm fun}(I,T)$.
			This immediately implies the negation of inequality \eqref{eq:nogo:cor}.
		\end{proof}
		
		We now use this result to show that in the target--context category of spin systems there is no universal simulator for which the functional image of its compiler contains only a finite number of Hamiltonians.

		\begin{example}[Universal spin models have infinitely many Hamiltonians]\label{ex:nogo_spin}
			Let $T$ be the set of spin systems as in \cref{sec:spinmodel}.
			We define a function $\varphi \colon \cC_{\rm fun}(I,T) \to \mathbb{R}$ that maps a spin system to the cardinality of its spectrum below the energy threshold $\Delta$:
			\begin{equation}
				\varphi \bigl( (G,q,H,\Delta) \bigr) \coloneqq \abs{\spsp(H)_{\leq \Delta}}.
			\end{equation}
			Note that $\varphi$ is monotone with respect to the context-reduction preorder $\cred$. 
			This follows by the definition of the evaluation $\behim{\eval}$ and the behavioral relation $\brel$ in \cref{sec:spinmodel}, more precisely by \cref{eq:spin_eval,eq:spin_brel_spec}.
			
			Now consider a simulator in $\cat{SpinSys}$, for which the set of possible hamiltonians in the functional image of its compiler is finite.
			Then $\varphi$ necessarily attains a maximal value over $\mathrm{Im}_{\rm fun}(s_T)$, so that 
			\begin{equation}
				\sup \varphi \bigl( {\mathrm{Im}_{\rm fun}(s_T)} \bigr)
			\end{equation}
			is finite.
			
			However, there are spin systems with arbitrary large spectrum.
			To see this, consider the simplicial complex with vertices $V_n = \{0, 1, \ldots, n-1\}$ and facets $E_n = \bigl\{ \{0\}, \{1\}, \ldots, \{n-1\} \bigr\}$.
			For 2 local spin values (i.e.\ $q=2$) define each local term $H_i \colon [2] \to \mathbb{R}$ of the Hamiltonian by $H_i(0)=0$ and $H_i(1)=1$. 
			The resulting spin system has spectrum $\{0,1,\ldots,n\}$ with cardinality $n+1$.
			The right-hand side of \eqref{eq:nogo:cor} is thus infinite and the strict inequality is satisfied.
			Hence, a simulator whose functional image contains finitely many Hamiltonians cannot be universal. 
		\end{example}
		
\section{Comparing Simulators}
\label{sec:Morphisms}
	
	In the previous sections, we have introduced the mathematical framework and the concept of universality within as a property of simulators.
	We now turn to the exploration of relations between simulators, and particularly the universal ones. 
	For instance, the singleton simulator from \cref{ex:TM univ sim} and the trivial simulator are both universal in the ambient category for Turing machines (\cref{ex:TM}).  
	However, the former is clearly more interesting, as it only uses a single Turing machine to simulate the behavior of all others.
	Contrast this with the trivial simulator that `needs access' to all Turing machines, i.e.\ the functional image of its compiler is the whole of $T$.
	In this sense, the singleton simulator is less wasteful (or more parsimonious) than the trivial simulator.

	In this section, we present tools to compare (universal) simulators in general. 
	It should be noted, however, that these only allow us to relate simulators within the same ambient category.
	Relations between ambient categories are explored further in \cref{sec:Functors}.
	First, we introduce processings of simulators (\cref{sec:processings}), which are needed to define morphisms of simulators (\cref{def:simulator morphism}). 
	One of the key properties of simulator morphisms is that they cannot generate universality (\cref{prop:morph_laxred}). 
	They also give rise to the simulator category (\cref{sec:simulator_category}), which has simulators as objects. 
	We subsequently interpret the ordering of universal simulators coming from the existence of a morphism between them as telling us about their parsimony (\cref{sec:parsimony}).
	In particular, we find sufficient (\cref{thm:morph_stronger}) and necessary (\cref{thm:s 2 id}) conditions for the existence of a simulator morphism.
	These results are used to show that, in line with the intuition from the previous paragraph, the singleton simulator for Turing machines is strictly more parsimonious than the trivial simulator (\cref{ex:trivial2universal,ex:universal2trivial}).

	\subsection{Processings of Simulators}
	\label{sec:processings}
	By a simulator processing we mean a process that rearranges target--context pairs, possibly depending on the program, as follows:
	\begin{equation}\label{eq:processing}
		\tikzfig{processing}
	\end{equation}
	Such processings do not, in general, map a simulator to another simulator.
	To ensure this is the case, we require condition \ref{it:split processing} in \cref{def:simulator processing}.
	
	However, even then we have to impose additional constraints on $q$ so that it doesn't introduce universality. 
	Condition \ref{it:processing_weak} below is used to make sure that processing a universal simulator yields another universal simulator.
	One can motivate this desideratum on resource-theoretic grounds.
	If a processing $q$ maps simulator $s$ to simulator $s'$ and we interpret $q$ as a free operation in a resource theory \cite{Co16c}, then $s'$ should be a simulator that is `no more universal' than $s$ is.
	Specifically, if the target simulator $s'$ is already universal, so should be the source simulator $s$.
	As shown in \cref{prop:morph_laxred} below, property \ref{it:processing_weak} is indeed sufficient to ensure that processings do not generate universality.
	Additionally, we demand that processings satisfy a splitting property \ref{it:split processing} just like simulators do, which ensures that applying such a $q$ to any simulator yields another simulator.
	\begin{definition}\label{def:simulator processing}
		Consider two simulators $s, s'$, both with programs given by $P$.
		A \textbf{processing} $q_* \colon s \to s'$ of simulators is specified by a map $q \colon P \otimes T \otimes C \to T \otimes C$ such that
		\begin{enumerate}
			\item \label{it:split processing} $q$ splits via 
				\begin{equation}\label{eq:processing_split}
					\tikzfig{processing_split}
				\end{equation}
				where $q_T$ is functional, and such that it satisfies
			 	\begin{equation}\label{eq:processing_dom}
					\tikzfig{processing_dom}
				\end{equation}
				
	    		\item \label{it:processing_weak} $q$ satisfies
				\begin{equation}\label{eq:processing_weak}
					\tikzfig{impl_range_reflect_3}
				\end{equation}
				
			\item \label{it:reachability} $s'$ is equal to the ($P$-correlated) sequential composite of $q$ and $s$, i.e.\ we have
				\begin{equation}\label{eq:processing_hit}
					\tikzfig{processing_hit}
				\end{equation}
		\end{enumerate}
	\end{definition}
	Note that condition \ref{it:split processing} says precisely that $q$ is itself a simulator with programs $P \otimes T$, compiler $q_T$, and a context reduction $q_C$.
	By virtue of the splittings of simulators and the functionality of the compiler $s_T$, condition \ref{it:reachability} thus says
	\begin{equation}\label{eq:processing_hit_2}
		\tikzfig{processing_hit_2}
	\end{equation}
	In particular, as soon as $q$ satisfies conditions \ref{it:split processing} and \ref{it:processing_weak}, it defines a processing of type $s \to s'$ for an arbitrary $s$ and for $s'$ defined via equations \eqref{eq:processing_hit_2}.

	\begin{example}[Processings for Turing machines]\label{ex:TM_ex_processings}
		In the ambient category of Turing machines $\cat{Tur}$ (\cref{ex:TM} and \cref{ex:TM_behaviors}), an example of a processing can be obtained as follows: 
		Fix the programs to be $P \coloneqq \Sigma^\star${\,\textemdash\,}the set of all finite strings.
		Consider two computable families $(f_p)_{p \in P}$ and $(g_p)_{p \in P}$ of partial computable functions of type $P \to P$ such that for each $p\in P$ we have ${g_p \circ f_p = \id_P}$.
		Since $f_p$, $g_p$ are both computable, there exist corresponding Turing machines $t_{f_p}$ and $t_{g_p}$ that compute them.
		Let $f, g \colon P \to T$ be the two computable functions that assign, to any program $p$, the corresponding Turing machine $t_{f_p}$ and $t_{g_p}$, respectively.
		We assume that the two families, $(f_p)_{p \in P}$ and $(g_p)_{p \in P}$, are such that both $f$ and $g$ are partial computable and thus morphisms of $\cat{Tur}$.
		
		In total, this defines a processing $q = (q_T,q_C)$ with
		\begin{align}
			q_T(p,t) &\coloneqq t \circ t_{g_p}  &  q_C(p,t,c) &\coloneqq f_p (c)
		\end{align}
		where $t \circ t_{g_p}$ refers to the composition of Turing machine $t$ with Turing machine $t_{g_p}$, i.e.\ given any input, first running $t_{g_p}$ and then running $t$ on the corresponding output of $t_{g_P}$.  
		To show that $q$ satisfies relation \eqref{eq:processing_weak}, and thus is indeed a processing, note that we have
		\begin{equation}
			\eval\circ q (p,t,c) = \eval \bigl( t\circ t_{g_p}, f_p(c) \bigr) = \eval(t,c)
		\end{equation}
		where the second equality uses the fact that $g_p$ is a left inverse of $f_p$. 

		Applying this processing to the simulator $s_u$ of \cref{ex:TM univ sim}, corresponding to a universal Turing machine $u$, the resulting compiler $s'_T$ maps every program $p$ to the Turing machine $u\circ t_{g_p}$. 
		The context reduction $s'_C $ maps a context $c$ to $f_p(\langle p,c\rangle)$. 
		In words, this processing scrambles the inputs by $f_p$ and at the same time modifies the Turing machine such that it unscrambles the inputs via $g_p$ before running the universal Turing machine.
	\end{example}	

	\begin{example}[Processings for spin models]\label{ex:SM_ex_processings}
		One example of a processing in the ambient category of spin systems (\cref{sec:spinmodel}) is the following. 
		Consider a family of bijections ${\pi_G\colon V_G \to V_G}$, one for each simplicial complex $G$.
		We can then specify a particular processing for an arbitrary object $P$ as follows.
		We let $q_C$ be the parallel composition of $\discard_P \otimes \discard_T$ with a function ${C \to C}$ that reshuffles the spin values of a configuration $\sigma$ by applying $\pi_G$ and leaves the rest of the data unchanged.
		On the other hand, we let $q_T$ be the parallel composition of $\discard_P$ with a function ${T \to T}$ that reschuffles the spin couplings (i.e.\ the Hamiltonian $H$) by applying the inverse of $\pi_G$.
		This defines a processing, because the only data that changes is invisible under evaluation:
		\begin{equation}
			H(\sigma) = H (\sigma \circ \pi_G^{-1} \circ \pi_G).
		\end{equation}
		Note that both $q_T$ and $q_C$ are clearly polynomially bounded and thus define morphisms in $\cat{Rel}_{\mathrm{poly}}$.
	\end{example}
	
		\begin{example}[Processings for abstract examples]\label{ex:math_ex_processings}
		In the target--context category with cofinal subsets (\cref{ex:cofinal}), there is only one processing for every $P$, namely $\discard_P \times \id_T$.
		For the target--context category of dense subsets (\cref{ex:dense}), processings are more interesting.
		There, a function $q \colon P \times T \to T$ for $T = \mathbb{R} \times \mathbb{R}_+$ satisfies relation \eqref{eq:processing_weak} if and only if for every element $p \in P$ and every $(x,\delta) \in T$, we have
		\begin{equation}
			\mathcal{B}(x,\delta) \subseteq \mathcal{B} \bigl( q(p,x,\delta) \bigr).
		\end{equation}
		In words, $q$ may introduce a loss of precision, but any point $\delta$-close to $x$ must still be within the ball defined by its image $q(p,x,\delta)$.
		Note that equation \eqref{eq:processing_split} is vacuous since there is only one context in this case.
\end{example}

	\subsection{The Simulator Category}
	\label{sec:simulator_category}

		Having introduced processings, we are now ready to define morphisms of simulators and the simulator category (\cref{fig:simulator-category}).
		\begin{figure}[t]\centering
			\includegraphics[width=.65\columnwidth]{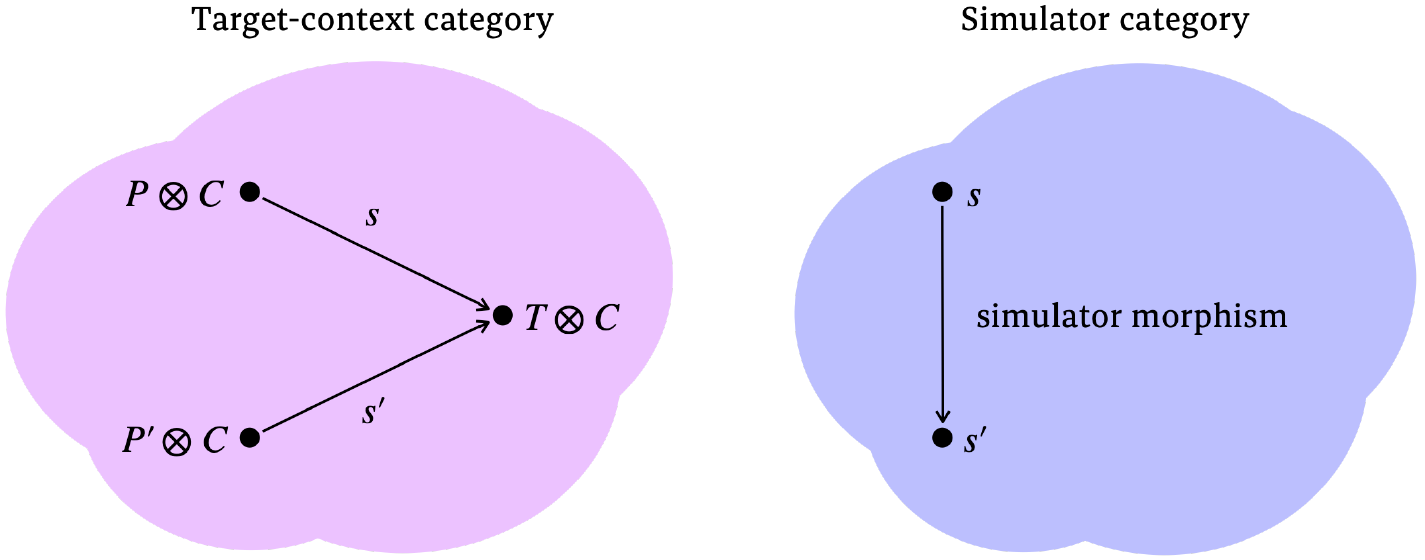}
			\caption{Given a target--context category, we construct the associated simulator category (\cref{def:simulator_category}).
				It consists of simulators related by simulator morphisms (\cref{def:simulator morphism}). 
				If there is a simulator morphism $s \to s'$, we say that $s'$ is more parsimonious than $s$. 
			}
			\label{fig:simulator-category}
		\end{figure}

	\begin{definition}\label{def:simulator morphism}
		Let $s$ and $s'$ be two simulators.
		A \textbf{simulator morphism} of type $s \to s'$ consists of a pair $(r^*,q_*)$ of a reduction $r^* \colon s \to r^* s$ and a processing $q_* \colon r^* s \to s'$.
	\end{definition}
	In the ambient category, a simulator morphism thus acts as
	\begin{equation}\label{eq:morphism}
		\tikzfig{morphism}
	\end{equation}
	where the right-hand side is $s'$. 
	The next result shows that whenever a morphism from a simulator $s$ to a simulator $s'$ exists, then there is also a lax reduction of type $s \to s'$.
	\begin{proposition}\label{prop:morph_laxred}
		Let $s$ and $s'$ be simulators and $(r^*,q_*) \colon s \to s'$ a morphism.
		Then $r^*$ is a lax reduction of type $s \to s'$.
	\end{proposition}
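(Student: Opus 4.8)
The plan is to peel off the two pieces of the morphism $(r^*,q_*)\colon s\to s'$ and then close the gap with a single application of condition \ref{it:processing_weak}. Recall that such a morphism consists of a reduction $r^*\colon s\to r^*s$, specified by a functional $r\colon P'\to P$ (with $P$ the programs of $s$ and $P'$ those of both $r^*s$ and $s'$), together with a processing $q_*\colon r^*s\to s'$, specified by a map $q\colon P'\otimes T\otimes C\to T\otimes C$. To say that $r^*$ is a lax reduction of type $s\to s'$ is exactly to say that $s\circ(r\otimes\id_C)\mrelop s'$. Since $r^*\colon s\to r^*s$ is an honest reduction, \eqref{eq:reduction} gives $s\circ(r\otimes\id_C)=r^*s$, so it suffices to prove $r^*s\mrelop s'$, and $r$ is functional as required.

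Next I would write out $s'$ and $r^*s$ in a parallel form. As $q_*$ is a processing, condition \ref{it:reachability} of \cref{def:simulator processing}, in the form \eqref{eq:processing_hit_2}, expresses $s'$ as the $P'$-correlated composite of $q$ and $r^*s$,
\[
	s' = q\circ(\id_{P'}\otimes r^*s)\circ(\cop_{P'}\otimes\id_C).
\]
On the other hand, counitality of $\cop_{P'}$ with respect to $\discard_{P'}$ (the comonoid axioms \eqref{eq:comonoid}, together with \eqref{eq:copy_product}) yields the identity
\[
	r^*s = (\discard_{P'}\otimes\id_{T\otimes C})\circ(\id_{P'}\otimes r^*s)\circ(\cop_{P'}\otimes\id_C).
\]
Thus both $r^*s$ and $s'$ are obtained by precomposing the common morphism $(\id_{P'}\otimes r^*s)\circ(\cop_{P'}\otimes\id_C)\colon P'\otimes C\to P'\otimes T\otimes C$ with the morphisms $\discard_{P'}\otimes\id_{T\otimes C}$ and $q$ of type $P'\otimes T\otimes C\to T\otimes C$, respectively. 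Condition \ref{it:processing_weak}, \eqref{eq:processing_weak}, is precisely the statement that replacing $\discard_{P'}\otimes\id_{T\otimes C}$ by $q$ only moves one upward in the ambient order, i.e.\ $\discard_{P'}\otimes\id_{T\otimes C}\mrelop q$; precomposing this relation with $(\id_{P'}\otimes r^*s)\circ(\cop_{P'}\otimes\id_C)$ and using that $\mrel$ is preserved by precomposition \eqref{eq:mrel_precomp} gives $r^*s\mrelop s'$. Combined with $s\circ(r\otimes\id_C)=r^*s$, this exhibits $r$ as a lax reduction $s\to s'$.

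The main obstacle I anticipate is not conceptual but a matter of graphical bookkeeping: verifying that the counit rewriting of $r^*s$ above is genuinely a consequence of the comonoid laws available in a bare gs-monoidal category, and checking that the program object of $s'$ agrees with that of $r^*s$ so that \eqref{eq:processing_hit_2} applies to $q_*$ verbatim. The substantive inequality $\discard_{P'}\otimes\id_{T\otimes C}\mrelop q$ requires no extra work, since it is exactly what condition \ref{it:processing_weak} demands of a processing; in particular it is this condition, and not the splitting condition \ref{it:split processing}, that does the real work here, consistently with the motivation given before \cref{def:simulator processing}. Note also that the argument immediately implies the advertised transitivity: composing the lax reduction just produced with a lax reduction $s'\to\id$ (using transitivity of $\mrel$ and \eqref{eq:mrel_precomp}) shows that universality of $s'$ descends to $s$ whenever $s\to s'$ is a simulator morphism.
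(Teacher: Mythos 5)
Your proof is correct and follows essentially the same route as the paper's: precompose the processing axiom $\discard_{P'}\otimes\id_{T\otimes C}\mrelop q$ (relation \eqref{eq:processing_weak}) with the $P'$-correlated composite $(\id_{P'}\otimes r^*s)\circ(\cop_{P'}\otimes\id_C)$, identify the left-hand side with $s\circ(r\otimes\id_C)$ by counitality and the right-hand side with $s'$ via \eqref{eq:processing_hit}. The paper compresses exactly this argument into a single string diagram plus two sentences, so your version is just the spelled-out bookkeeping of the same idea.
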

	\begin{proof}
		By relation \eqref{eq:processing_weak} and \cref{lem:brel_precomp}, we have 
		\begin{equation}\label{eq:morph_laxred}
			\tikzfig{morph_laxred}
		\end{equation}
		and by \cref{eq:processing_hit}, the right hand side equals $s'$ so that $r^*$ is indeed a lax reduction.
	\end{proof}
	As a corollary, since the sequential composition of lax reductions is a lax reduction, if there is a morphism from a simulator $s$ to a simulator $s'$, then we have
	\begin{equation*}
		s' \text{ is universal} \quad \implies \quad s \text{ is universal}.
	\end{equation*}
	In other words, simulator morphisms cannot generate universality.
	
	If the reduction part of a morphism is split monic, i.e.\ if $r$ has a left inverse $\nu$, then the following string diagram equation\footnotemark{}
	\footnotetext{We understand this equation as saying that the corresponding equation holds whenever the hole is filled with a simulator.}%
	\begin{equation}\label{eq:morphism v2}
		\tikzfig{morphism_v2}
	\end{equation}
	holds, so that we can describe the morphism also as a processing{\,\textemdash\,}given by $(q \circ \nu)_*${\,\textemdash\,}followed by reduction $r^*$ instead of the other way round as in \cref{def:simulator morphism} and in the left-hand side of \cref{eq:morphism v2}.
	
	We now show how morphisms can be composed to give a category of simulators.
	\begin{proposition}[Sequential composition of simulator morphisms]\label{prop:morphism_composition}
		Consider two morphisms of simulators ${(r_1^*,{q_1}_*) \colon s \to s_1}$ and ${(r_2^*,{q_2}_*) \colon s_1 \to s_2}$.
		The composition of the respective reductions, given by $(r_1 \circ r_2)^*$, is a reduction of type $s \to (r_1 \circ r_2)^* s$ and the map
		\begin{equation}\label{eq:processing_composition}
			\tikzfig{composition_processing}
		\end{equation}
		is a processing of type $(r_1 \circ r_2)^* s \to s_2$.
		Together, they define the \textbf{sequential composition}
		\begin{equation}
			(r_2^*,{q_2}_* ) \circ (r_1^*,{q_1}_*) \colon s \to s_2
		\end{equation}
		of the two morphisms.
	\end{proposition}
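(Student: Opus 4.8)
The plan is to verify the two halves of the claim separately — that $(r_1\circ r_2)^*$ is a reduction $s \to (r_1\circ r_2)^* s$, and that the map displayed in \eqref{eq:processing_composition} is a processing $(r_1\circ r_2)^* s \to s_2$ — and then to appeal to \cref{def:simulator morphism} to read off that the pair is a morphism $s\to s_2$. For the reduction half, note $r_1\colon P_1\to P$ and $r_2\colon P_2\to P_1$ are functional by hypothesis, hence so is $r_1\circ r_2\colon P_2\to P$, since functional morphisms are closed under composition in a gs-monoidal category. Precomposing the program wire of $s$ with $r_1\circ r_2$ yields a morphism $P_2\otimes C\to T\otimes C$ that again admits a splitting of the form \eqref{eq:simulator} (compiler $s_T\circ(r_1\circ r_2)$, which is functional, together with the induced context reduction), so by the remark following \cref{def:simulator} it is a simulator, which we call $(r_1\circ r_2)^* s$; equation \eqref{eq:reduction} for the reduction $s\to(r_1\circ r_2)^* s$ then holds by construction. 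One can equivalently check $(r_1\circ r_2)^* s = r_2^*(r_1^* s)$ directly, exhibiting this reduction as the composite of the two given ones.

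The substance is in checking the three clauses of \cref{def:simulator processing} for the map $q$ of \eqref{eq:processing_composition}, viewed as a candidate processing between $(r_1\circ r_2)^* s$ and $s_2$ (both with programs $P_2$). Clause~\ref{it:split processing}: $q_1$ and $q_2$ each split with a functional compiler and satisfy \eqref{eq:processing_dom}; precomposing the appropriate program wire of $q_1$ with the functional $r_2$ and then correlating the result with $q_2$ as prescribed in \eqref{eq:processing_composition} produces a splitting of $q$ whose compiler is a composite of functional morphisms — hence functional — and whose domain condition \eqref{eq:processing_dom} is inherited from those of $q_1$ and $q_2$. Clause~\ref{it:processing_weak}: apply \eqref{eq:processing_weak} for $q_2$ to obtain one step in the ambient imitation order $\mrel$; precompose it (which preserves $\mrel$ by \cref{lem:brel_precomp}) and chain with \eqref{eq:processing_weak} for $q_1$ using transitivity of $\mrel$ to conclude \eqref{eq:processing_weak} for the composite $q$. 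This is the clause ensuring the composite morphism still cannot generate universality.

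Clause~\ref{it:reachability} is where the explicit formula \eqref{eq:processing_composition} is forced, and I expect it to be the main obstacle. By clause~\ref{it:reachability} for $q_{2*}\colon r_2^* s_1\to s_2$, the simulator $s_2$ equals the $P_2$-correlated composite of $q_2$ with $r_2^* s_1$, and $r_2^* s_1$ is $s_1$ with its program wire precomposed by $r_2$; by clause~\ref{it:reachability} for $q_{1*}\colon r_1^* s\to s_1$, the simulator $s_1$ is in turn the $P_1$-correlated composite of $q_1$ with $r_1^* s$. Substituting the second into the first and then pushing the precomposition by $r_2$ inward — legitimate because $r_2$ is functional, so it commutes past the copy of the program wire ($\cop_{P_1}\circ r_2 = (r_2\otimes r_2)\circ\cop_{P_2}$) and slides through the functional compiler exactly as in \eqref{eq:processing_hit_2} — rearranges the string diagram into precisely the $P_2$-correlated composite of the map \eqref{eq:processing_composition} with $(r_1\circ r_2)^* s$, which is clause~\ref{it:reachability}. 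Tracking which copy of the $P_2$ wire is routed into which factor, and invoking functionality at each commutation, is the only genuinely fiddly part; everything else is bookkeeping. With all three clauses verified, \cref{def:simulator morphism} packages $(r_1\circ r_2)^*$ together with $q_*$ into the asserted sequential composition $(r_2^*,{q_2}_*)\circ(r_1^*,{q_1}_*)\colon s\to s_2$.
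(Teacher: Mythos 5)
Your outline follows the paper's proof step for step: the reduction half via closure of functional morphisms under composition, the splitting and domain conditions for the composite processing inherited from those of $q_1$ and $q_2$, clause \ref{it:processing_weak} by chaining the two instances of \eqref{eq:processing_weak} through precomposition and transitivity, and clause \ref{it:reachability} by substituting one reachability equation into the other and sliding the functional $r_2$ past the copy of the program wire. The only place where your sketch does not quite close is clause \ref{it:processing_weak}. Precomposing \eqref{eq:processing_weak} for $q_2$ with the $q_2$-free part of the composite gives $q_1 \circ (r_2 \otimes \id_{T\otimes C}) \mrelop q'$, and precomposing \eqref{eq:processing_weak} for $q_1$ with $r_2 \otimes \id_{T \otimes C}$ gives $(\discard_{P_1} \circ r_2) \otimes \id_{T\otimes C} \mrelop q_1 \circ (r_2 \otimes \id_{T\otimes C})$. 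Transitivity then lands you at $(\discard_{P_1}\circ r_2)\otimes \id_{T\otimes C} \mrelop q'$, whereas the clause you must verify is $\discard_{P_2}\otimes \id_{T\otimes C}\mrelop q'$. Since $r_2$ is only assumed functional, not total, $\discard_{P_1}\circ r_2$ equals $\discard_{P_2}\circ\dom(r_2)$ rather than $\discard_{P_2}$, so one further link in the chain is needed: observe that $\discard_{P_2} \sqsupseteq \discard_{P_1}\circ r_2$ in the sense of \cref{def:dom_eq} (by counitality of copying and normalization), note that domains are preserved by tensoring with identities, and invoke axiom \eqref{eq:mrel_domain} to convert this into $\discard_{P_2}\otimes\id_{T\otimes C} \mrelop (\discard_{P_1}\circ r_2)\otimes\id_{T\otimes C}$. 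With that extra step inserted, your argument coincides with the paper's; without it, the claimed conclusion of clause \ref{it:processing_weak} does not follow from transitivity alone.
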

	\begin{proof}
		The fact that $(r_1 \circ r_2)^*$ is a reduction follows because functional morphisms form a subcategory of $\cC$.
		
		Using the splittings of $q_1$ and $q_2$, we can write $q'$ in the split form of \eqref{eq:processing_split} via
		\begin{equation}\label{eq:comp_proof_split_1}
			\tikzfig{comp_proof_split_1}
		\end{equation}
		and
		\begin{equation}\label{eq:comp_proof_split_2}
			\tikzfig{comp_proof_split_2}
		\end{equation}
		
		Let us now show that $q'$ from \eqref{eq:processing_composition} satisfies relation \eqref{eq:processing_weak}. 
		First, precomposing the corresponding property for $q_2$ with the part of $q'$ without $q_2$ implies
		\begin{equation}\label{eq:comp_proof_1}
			\tikzfig{comp_proof_1}
		\end{equation}
		by the assumptions of target--context categories, specifically by property \eqref{eq:mrel_precomp}. 
		Next, using relation \eqref{eq:processing_weak} for $q_1$ and \eqref{eq:mrel_precomp} again, we get
		\begin{equation}\label{eq:comp_proof_2}
			\tikzfig{comp_proof_2}
		\end{equation}
		via precomposition with $r_2 \otimes \id_{T} \otimes \id_C$.
		Defining a morphism $g \colon P_2 \to I$ to be the composite $\discard_{P_1} \circ r_2$, we have $\discard_{P_2} \sqsupseteq g$ by the counitality of copying:
		\begin{equation}\label{eq:comp_proof_3}
			\tikzfig{comp_proof_3}
		\end{equation}
		Since domains are preserved by tensoring with identities, we can also infer 
		\begin{equation}\label{eq:comp_proof_4}
			\tikzfig{comp_proof_4}
		\end{equation}
		by the definition of a target--context category, specifically by property \eqref{eq:mrel_domain}.
		Combining relations \eqref{eq:comp_proof_4}, \eqref{eq:comp_proof_2} and \eqref{eq:comp_proof_1} yields relation \eqref{eq:processing_weak} for $q'$ by the transitivity of the ambient relation $\mrel$.
		
		\Cref{eq:processing_hit} for $q'$ can be derived as follows
		\begin{equation}\label{eq:comp_proof_5}
			\tikzfig{comp_proof_5}
		\end{equation}
		where the first equation is by definition of $q'$ via \cref{eq:processing_composition}, second equation uses associativity of copying and the fact that $r_2$ is functional, and the last equation the assumption that $(r_1^*,{q_1}_*)$ converts simulator $s$ to $s_1$.
		Since the right-hand side of \cref{eq:comp_proof_5} is $s_2$ by the assumption that $(r_2^*,{q_2}_*)$ converts $s_1$ to $s_2$, the result follows.
	\end{proof}
	It is not hard to check that this sequential composition is associative and has identities provided by identity reductions together with the identity ($P$-independent) processing.
	That is, morphisms of simulators indeed form a category.
	
	\begin{definition}\label{def:simulator_category}
		Given a target--context category $\cC$, the \textbf{simulator category} $\cat{Sim}(\cC)$ has simulators as objects and simulator morphisms as morphisms.
	\end{definition}
		
	\subsection{Parsimony of Universal Simulators}
	\label{sec:parsimony}

	\begin{figure}[t]\centering
	\includegraphics[width=.35\columnwidth]{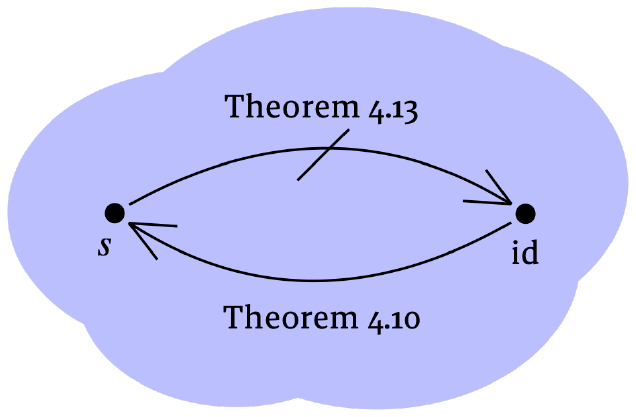}
	\caption{
		\Cref{thm:morph_stronger} gives sufficient conditions for a simulator $s$ to be more parsimonious than the trivial simulator. 
		\Cref{thm:s 2 id} gives necessary conditions, such as that $s$ is compressed. }
	\label{fig:properties-simulator-category}
	\end{figure}
	
	Every category gives rise to an order relation among its objects that expresses whether there exists a morphism from one object to another.
	In this way, simulator morphisms induce a hierarchy among all simulators and in particular among the universal ones.
	
	For instance, consider an identity reduction and a processing $q_* \colon s \to s'$ for which $q$ is independent of programs in $P$, i.e.\ it satisfies
	\begin{equation}
		\tikzfig{P-independent}
	\end{equation}
	for some $\tilde{q}$.
	Then we have $q_T \circ s_T = s'_T$, so that composition with $q_T$ implements a surjective function from the functional image of $s_T$ to that of $s'_T$.
	Therefore, the cardinality of the compiler's functional image of a simulator cannot increase under such processing.

	If we interpret the compiler's functional image as the collection of targets that a (universal) simulator makes use of, then in the above discussion $s'$ is \emph{not less parsimonious} than $s$ is in their use of targets.
	Such a strict conclusion cannot be made if $q$ is depends on $P$.
	However, any processing that does depend on $P$ is necessarily behaviorally imitated by one that is independent of $P$ as expressed by relation \eqref{eq:processing_weak}.
	Therefore, we can interpret the resulting preorder as describing the parsimony of universal simulators relative to the range of targets in $T$ they make use of.
	
	\begin{definition}\label{def:simulator_strength}
		Let $s$ and $s'$ be universal simulators.
		We say that $s'$ is a \textbf{more parsimonious}\footnotemark{} simulator than $s$ is if there exists a morphism $s \to s'$.
	\end{definition}
	\footnotetext{While it would be more accurate use the term ``not less parimonious'', we find ``more parsimonious'' to be easier to use.}%
	
	We now explore conditions under which two simulators can be compared with respect to their parsimony, starting with a sufficient condition for the parsimony ordering.
	We start by providing a sufficient condition for the parsimony ordering.
	\begin{theorem}[Sufficient condition for parsimony]\label{thm:morph_stronger}
		Let $s \colon P\otimes C \rightarrow  T \otimes C$ be a simulator and let $r^* \colon s \to \id$ be a lax reduction. 
		Assume that $r^*$ is also an oplax reduction, i.e.\ that additionally 
		\begin{equation}\label{eq:morph_stronger_1}
			\tikzfig{morph_stronger_1}
		\end{equation}
		holds.
		If there exists a morphism $m \colon P \to T$ that satisfies 
		\begin{equation}\label{eq:invert_red}
			\tikzfig{invert_red}
		\end{equation}
		then $s$ is more parsimonious than the trivial simulator.
	\end{theorem}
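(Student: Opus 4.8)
The plan is to produce a simulator morphism from the trivial simulator $\id$ to $s$; by \cref{def:simulator_strength} this is precisely what ``more parsimonious than the trivial simulator'' means. Such a morphism consists, by \cref{def:simulator morphism}, of a reduction $\rho^*\colon \id \to \rho^*\id$ followed by a processing $q_*\colon \rho^*\id \to s$. Since processings preserve the object of programs, $\rho^*\id$ must have program object $P$; concretely it is the simulator $P \otimes C \to T \otimes C$ whose compiler is the given map $m$ (so $\rho \coloneqq m$, which one checks is an admissible functional reduction using \eqref{eq:invert_red}) and whose context reduction is $\discard_P \otimes \id_C$. Condition \eqref{eq:reduction} for $\rho^*$ is then immediate, since composing the trivial simulator with $m$ in the program slot reproduces $\rho^*\id$ verbatim.

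The heart of the argument is to build the processing $q\colon P \otimes T \otimes C \to T \otimes C$. I would take $q \coloneqq s \circ (r \otimes \id_C) \circ (\discard_P \otimes \id_{T\otimes C})$, i.e.\ ``reduce the incoming target through $r$ and run $s$ on it'', and verify the three conditions of \cref{def:simulator processing}. The splitting condition \ref{it:split processing} is inherited from the splitting of $s$, with $q_T = s_T \circ r \circ (\discard_P \otimes \id_T)$ and $q_C = s_C \circ (r \otimes \id_C)$; here $q_T$ is functional because $s_T$ (a compiler) and $r$ (a reduction) are, and the domain equation \eqref{eq:processing_dom} follows from \eqref{eq:simulator_dom} for $s$ together with the compatibility of domains with $\otimes$ and with precomposition. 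Condition \ref{it:processing_weak}, i.e.\ \eqref{eq:processing_weak} for $q$, is exactly the lax-reduction inequality $s \circ (r \otimes \id_C) \mrelop \id_{T\otimes C}$ coming from $r^*\colon s\to\id$, precomposed with $\discard_P \otimes \id_{T\otimes C}$, which is permitted by property \eqref{eq:mrel_precomp} of a target--context category. Condition \ref{it:reachability}, i.e.\ \eqref{eq:processing_hit} (equivalently \eqref{eq:processing_hit_2}), says that feeding $\rho^*\id$ into $q$ must return $s$ on the nose; with the choices above this composite is $s \circ \bigl( (r\circ m)\otimes \id_C\bigr)$, and it is here that \eqref{eq:invert_red} --- which makes $r\circ m$ act as the identity on programs --- together with the extra oplax hypothesis \eqref{eq:morph_stronger_1} is used to pin this composite down to $s$ itself.

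I expect this last step to be the main obstacle. The subtlety is that \eqref{eq:processing_hit} demands an equality of \emph{simulators}, hence of the distinguished compiler/context-reduction decompositions, not merely of their composite evaluations; the oplax condition is precisely what upgrades the behavioral matching furnished by universality and \eqref{eq:invert_red} to this strict equality, which is why the theorem assumes both the lax and the oplax reduction properties rather than universality of $s$ alone. Once the three conditions are checked, $(\rho^*,q_*)$ is a genuine morphism $\id \to s$ in the simulator category, so $s$ is more parsimonious than the trivial simulator; as a consistency check, \cref{prop:morph_laxred} then recovers a lax reduction $\id \to s$, reflecting that $s$, being universal, can indeed behaviorally reproduce the trivial simulator.
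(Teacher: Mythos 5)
Your construction is the same as the paper's: the morphism $\id \to s$ is the pair $(m^*,q_*)$ with the processing $q$ that discards the incoming program, applies $r$ to the incoming target, and runs $s$. However, you have assigned the two hypotheses to the wrong steps, and in one place this makes the inequality come out backwards. Condition \eqref{eq:processing_weak} requires $\discard_P \otimes \id_{T\otimes C} \mrelop q$ (this is the direction needed for \cref{prop:morph_laxred}, i.e.\ so that processings cannot generate universality). For your $q$ this reduces, after stripping the discarded $P$-wire, to $\id_{T\otimes C} \mrelop s \circ (r \otimes \id_C)$, which is exactly the \emph{oplax} hypothesis \eqref{eq:morph_stronger_1} precomposed with $\discard_P \otimes \id_{T\otimes C}$ via \eqref{eq:mrel_precomp}. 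The lax-reduction inequality $s \circ (r \otimes \id_C) \mrelop \id_{T\otimes C}$ that you invoke instead yields, after the same precomposition, $q \mrelop \discard_P \otimes \id_{T\otimes C}$ --- the opposite relation, which does not establish \eqref{eq:processing_weak} since $\mrel$ is only a preorder.

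Correspondingly, your concern about condition \eqref{eq:processing_hit} is unfounded. The $P$-correlated composite of $q$ with $m^*\id$ is $s \circ \bigl((r \circ m) \otimes \id_C\bigr)$, and hypothesis \eqref{eq:invert_red} is precisely the on-the-nose equality $s \circ \bigl((r \circ m) \otimes \id_C\bigr) = s$; no behavioral-to-strict ``upgrade'' and no use of the oplax hypothesis is needed there. Once you move the oplax hypothesis to the verification of \eqref{eq:processing_weak} and let \eqref{eq:invert_red} alone handle \eqref{eq:processing_hit}, your argument coincides with the paper's proof.
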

	Note that \cref{eq:invert_red} states a weak form of right invertibility of $r$ where $m$ acts as a right inverse of $r$ only after composing with $s$.
	In particular, whenever $r$ is right invertible, it trivially satisfies \cref{eq:invert_red}.
	{\renewcommand{\arraystretch}{1.4}
	\begin{table}[!ht]
		\centering
		\begin{tabular}{c|c|c} 
			 	& \makecell{$r^*$ is an oplax reduction, \eqref{eq:morph_stronger_1}} 
			 	& \makecell{$s r m = s$, \cref{eq:invert_red}} 
			\\ \hline
			 \hyperref[ex:TM univ sim]{universal TM} & \checkmark & \checkmark  \\ 
			 \hyperref[ex:NP_complete]{NP completeness} & \checkmark & \checkmark \tablefootnote{Whether this condition holds may depend on the choice of NP-complete languages $L$ and $L'$ in \cref{ex:NP_complete}. 
				In particular, it holds if they are poly-time isomorphic to each other, which is known for the case of many NP-complete languages and conjectured to be the case for all NP-complete languages \cite{Be77a}.}  \\ 
			 \hyperref[sec:spinmodel]{universal spin model} & \checkmark & \checkmark \tablefootnote{While for the reduction $r$ in \cref{sec:spinmodel} that is obtained from \cite{De16b} this is not the case, it can be satisfied by modifying $r$ so that it acts as the identity on 2D Ising systems with fields.} \\
			 \hyperref[ex:cofinal]{cofinal subset} & $\times$ & \checkmark \\ 
			 \hyperref[ex:dense]{dense subset} & $\times$ & $\times$ \\
			 \hyperref[ex:universal_set]{universal Borel set} & \checkmark & \checkmark \\ 
		\end{tabular}
		\caption{An overview of the universal simulator examples and whether they satisfy the two conditions appearing in \cref{thm:morph_stronger}.}
		\label{tab:morph_stronger}
	\end{table}
	}
	
	\begin{proof}[Proof of \cref{thm:morph_stronger}]
		We prove that $(m^*,q_*)$ with processing
		\begin{equation}
			\tikzfig{def_q_morph}
		\end{equation}
		defines a morphism of type $\id \to s$.
		First, let us argue that $q$ defines a valid processing. 
		Precomposing relation \eqref{eq:morph_stronger_1} with $\discard_P$ and using the fact that the ambient relation $\mrel$ is preserved by precomposition, we arrive at relation \eqref{eq:processing_weak}, which is necessary for $q$ to be a processing.
		Moreover, $q$ satisfies the splitting property of processings (i.e.\ \ref{it:split processing} in \cref{def:simulator processing}) because $r$ is functional and $s$ splits as in diagram \eqref{eq:simulator}. 
		It remains to show that $q_*$ maps $(m^* \id)$ to $s$.
		By construction of $q$, we have the first equality in 
		\begin{equation}
			\tikzfig{morph_uni}
		\end{equation}
		while the second one follows from \cref{eq:invert_red}.
		Consequently, $(m^*,q_*)$ indeed transforms the trivial simulator into $s$, i.e.\ it is a morphism of type $\id \to s$.
	\end{proof}
	We now apply \cref{thm:morph_stronger} to the universal simulator corresponding to a universal Turing machine (\cref{ex:TM univ sim}) to conclude that it is more parsimonious than the trivial simulator. 
	
	\begin{example}[$s_u$ more parsimonious than the trivial simulator]\label{ex:trivial2universal} 
		Let us revisit our running example of Turing machines (\cref{ex:TM}). 
		Let $u$ be a universal Turing machine and consider the universal simulator $s_u \colon P \otimes C \to T \otimes C$ from \cref{ex:TM univ sim}, where programs in $P$ are given by strings in $\Sigma^\star$. 
		By the same argument as in \cref{ex:universal2trivial}, every lax reduction $r^*$ of type $s_u \to \id$ is also an oplax reduction, i.e.\ relation \eqref{eq:morph_stronger_1} holds.
		Let us now argue that there is an encoding $r \colon T \to P$ for which both 
		\begin{itemize}
			\item $u$ is a universal Turing machine and
			\item $r$ has a right inverse $m \colon P \to T$. 
		\end{itemize}
		If $r$ is surjective, then we construct a computable right inverse $m$ as follows:
		Since $r$ is total computable and surjective, one can compute the first $t \in T$ (thinking of $T$ as the natural numbers $\mathbf{N}$) that satisfies $r(t) = p$. 
		We then let $m(p)$ be this Turing machine $t$.
		Clearly, the desired $r\circ m = \id_P$ then holds by definition.
		By \cref{thm:morph_stronger}, $s_u$ is thus more parsimonious than the trivial simulator.
		
		On the other hand, if $r$ is not surjective, let us consider two cases:
		First, assume that the image of $r$ is decidable.
		Then we can find\footnotemark{} a partial computable function $b$ such that $b \circ r$ is surjective and for which there is a total computable $m_b$ satisfying 
		\footnotetext{Note that this relies on the image being infinite, which follows from universality of $s_u$.}%
		\begin{equation}
			b \circ m_b = \id \qquad \text{and} \qquad m_b \circ b \circ r = r.
		\end{equation}
		Essentially, $b$ enumerates elements of the image and $m_b$ is constructed from $b$ similar to how $m$ is from $r$ above.
		Thus, there is a singleton universal simulator $s_u \circ (m_b \otimes \id_C)$ for $u$, whose lax reduction to the trivial simulator is surjective.
		
		Second, if the image of $r$ is not decidable, we can make it decidable by considering 
		\begin{equation}\label{eq:new_sim}
			\tikzfig{new_sim}
		\end{equation}
		instead of $s_u$ and $r$ respectively.
	\end{example}
	
	Next, we derive necessary conditions for the parsimony ordering. 
	Following the motivation from the beginning of \cref{sec:parsimony}, we think of one of them as a notion that tracks whether a simulator is ``compressed'' (relative to the trivial simulator).
	The idea is that a universal simulator $s$ is \emph{compressed} if there are at least two targets $t$, $g$ with distinct behavior (condition \ref{it:distinct_beh}) that are nevertheless simulated by the same target (condition \ref{it:same_compiled}). 

	\begin{definition}\label{def:compressed}
		A universal simulator $s \colon P\otimes C \to T \otimes C$ is \textbf{compressed} if for every lax reduction $r^*\colon s \to \id$, there exist functional states $t, g \colon I \to T$ such that 
		\begin{enumerate}
			\item \label{it:same_compiled} we have 
				\begin{equation}\label{eq:same_compiled}
					\tikzfig{same_compiled}
				\end{equation}
			\item \label{it:distinct_beh} and $g$ does not oplax context-reduce to $t$ (see \cref{def:c_red_rel}), i.e.\ there is no morphism $v \colon C \to C$ for which  
				\begin{equation}\label{eq:sim_preord}
					\tikzfig{sim_preord}
				\end{equation}
				holds.
		\end{enumerate}
	\end{definition}
	
	\begin{theorem}[Necessary condition for being more parsimonious]\label{thm:s 2 id}
		Let $(\cC,T,C,\mrel)$ be a target--context category and $s \colon P \otimes C \to T \otimes C$ a simulator.
		Assume that every lax reduction $r^* \colon s \to \id$ is also an oplax reduction, i.e.\ that
		\begin{equation}\label{eq:uni both 2}
			\tikzfig{morph_stronger_1}
		\end{equation}
		holds.
		If $s$ is a compressed simulator, then there exists no simulator morphism of type $s \to \mathrm{id}$, i.e.\ the trivial simulator is not more parsimonious than $s$.
	\end{theorem}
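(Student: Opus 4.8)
The plan is to argue by contradiction. Suppose there were a simulator morphism $(r^*,q_*)\colon s \to \id$, with $r^*$ having underlying functional morphism $r\colon T\to P$ and $q_*$ a processing of $r^*s$ into $\id$. By \cref{prop:morph_laxred}, $r^*$ is then a lax reduction $s\to\id$, and hence, by the standing hypothesis \eqref{eq:uni both 2}, also an oplax one. Since $s$ is compressed, \cref{def:compressed} applied to this lax reduction $r^*$ supplies functional states $t,g\colon I\to T$ with $s_T\circ r\circ t = s_T\circ r\circ g =: u$ (condition~\ref{it:same_compiled}, i.e.\ \eqref{eq:same_compiled}) and such that $g$ does not oplax context-reduce to $t$ (condition~\ref{it:distinct_beh}). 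The goal is to contradict the latter by exhibiting a morphism $v\colon C\to C$ with $t\otimes\id_C \mrelop g\otimes v$.

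First I would record what the compiled target $u$ sees in context. Put $c_t := s_C\circ\bigl((r\circ t)\otimes\id_C\bigr)\colon C\to C$. Using the splitting \eqref{eq:simulator} of $s$ together with the fact that $r\circ t$ is functional and normalized, one obtains $s\circ\bigl((r\circ t)\otimes\id_C\bigr) = u\otimes c_t$. Precomposing the oplax direction of $r^*$ with $t\otimes\id_C$ then gives
\begin{equation}
t\otimes\id_C \;\mrelop\; u\otimes c_t .
\end{equation}

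Next I would turn to the processing $q$. It is itself a simulator with programs $P\otimes T$ (condition~\ref{it:split processing} of \cref{def:simulator processing}); write $q_T,q_C$ for its compiler and context reduction. Condition~\ref{it:reachability} via \eqref{eq:processing_hit_2} gives $\id_T = q_T\circ(\id_T\otimes(s_T\circ r))\circ\cop_T$, and precomposing with $g$ and using $s_T\circ r\circ g = u$ yields $q_T\circ(g\otimes u) = g$. Feeding $(g,u,c_t)$ into $q$ and splitting $q$ accordingly therefore gives
\begin{equation}
q\circ(g\otimes u\otimes c_t) \;=\; g\otimes v, \qquad\text{where}\qquad v := q_C\circ(g\otimes u\otimes c_t)\colon C\to C.
\end{equation}
On the other hand, property \eqref{eq:processing_weak} of a processing, precomposed with $g\otimes u\otimes c_t$, reads $w_g\otimes u\otimes c_t \mrelop q\circ(g\otimes u\otimes c_t)$ with $w_g := \discard_T\circ g\in\cC(I,I)$; combining this with \cref{lem:mrel_dom}, which gives $u\otimes c_t \mrelop w_g\otimes u\otimes c_t$, and with transitivity of $\mrel$, I get $u\otimes c_t \mrelop g\otimes v$. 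Chaining with the inequality of the previous paragraph produces $t\otimes\id_C \mrelop g\otimes v$, i.e.\ $g$ oplax context-reduces to $t$ in the sense of \cref{def:c_red_rel} (with context reduction $v$). This contradicts condition~\ref{it:distinct_beh}, so no such morphism exists.

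The step I expect to be the main obstacle is keeping every orientation of the ambient relation $\mrel$ straight simultaneously: \eqref{eq:processing_weak} points from the program-discarding morphism toward $q$, the oplax reduction points from $\id$ toward $s\circ(r\otimes\id_C)$, and \cref{lem:mrel_dom} is exactly what is needed to absorb the scalar $\discard_T\circ g$ that intrudes because $g$ is only assumed functional rather than total. A secondary point to discharge is that all the uses of copying (splitting $s$ and $q$, duplicating the functional states $t$, $g$, $u$) are legitimate, which follows from normalizedness of the ambient category together with functionality of the morphisms involved.
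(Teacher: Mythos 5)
Your proposal is correct and follows essentially the same route as the paper: assume a morphism $(r^*,q_*)\colon s\to\id$ exists, extract $t,g$ from compressedness, and chain the oplax reduction, \cref{lem:mrel_dom} (to absorb the scalar $\discard_T\circ g$), property \eqref{eq:processing_weak}, and the splittings together with \eqref{eq:same_compiled} and \eqref{eq:processing_hit_2} to produce a context reduction witnessing the forbidden oplax context-reduction. The only differences are cosmetic (you insert the scalar in the middle of the chain rather than at the start and name the intermediate morphisms $u$, $c_t$, $w_g$ explicitly), so no further comment is needed.
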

	
	\begin{proof}
		The proof is by contradiction. 
		Assume that a morphism $(r^*,q_*) \colon s \to \id$ exists so that by \cref{prop:morph_laxred} we have a lax reduction $r^* \colon s \to \id$.
		Since $s$ is compressed, there must exist $t$, $g$ as in \cref{def:compressed}.
		Let us construct a morphism $\nu \colon C \to C$, for which relation \eqref{eq:sim_preord} holds and thus leads to a contradiction.
		
		First, by \cref{lem:mrel_dom} we have 
		\begin{equation}\label{eq:uni 2 tri 0}
			\tikzfig{morph_uni_2_tri_0}
		\end{equation}
		Moreover, using relation \eqref{eq:uni both 2} and the fact that $\moonrel$ is preserved under precomposition, we find
		\begin{equation}\label{eq:uni 2 tri 1}
			\tikzfig{morph_uni_2_tri_4}
		\end{equation}
		while property \eqref{eq:processing_weak} satisfied by any processing implies
		\begin{equation}\label{eq:uni 2 tri 5}
			\tikzfig{morph_uni_2_tri_8}
		\end{equation}
		By the splitting properties \eqref{eq:simulator} and \eqref{eq:processing_split} of $s$ and $q$ respectively, as well as the functionality of $t$, $g$, $r$, and $s_T$, we can rewrite the right-hand side of relation \eqref{eq:uni 2 tri 5} as
		\begin{equation}\label{eq:uni 2 tri 6}
			\tikzfig{morph_uni_2_tri_9}
		\end{equation}
		where $\nu$ is given by 
		\begin{equation}\label{eq:morph_stronger_4}
			\tikzfig{morph_stronger_4}
		\end{equation}
		
		Combining relations \eqref{eq:uni 2 tri 0}--\eqref{eq:uni 2 tri 6} via transitivity of the ambient relation $\mrel$, the result follows if we can show that the right-hand side of \cref{eq:uni 2 tri 6} is in fact equal to $t \otimes \nu$.
		To this end, we have
		\begin{equation}\label{eq:uni 2 tri 7}
			\tikzfig{morph_uni_2_tri_10}
		\end{equation}
		where the first equation uses property \ref{it:same_compiled} of compressed simulators, the second holds by the functionality of $t$, and the third one follows because the morphism $(r^*,q_*)$ transforms simulator $s$ to the trivial simulator (see the left equation of \eqref{eq:processing_hit_2}).
		Thus, as we anticipated, this leads to a contradiction so that such a morphism  $(r^*,q_*) \colon s \to \id$ cannot exist.
 	\end{proof}
	
	 \begin{example}[Trivial simulator is not more parsimonious than $s_u$]\label{ex:universal2trivial}
		Let us show that given a universal Turing machine $u$ and its corresponding singleton universal simulator $s_u$ in the ambient category $\cat{Tur}$, there is no morphism of type $s_u \to \id$. 
		We can use \cref{thm:s 2 id} to establish this.
		
		First, we argue that every lax reduction of type $s \to \id$ for an arbitrary simulator $s$ in $\cat{Tur}$ is also an oplax reduction.
		By \cref{prop:beh_equality}, we have 
		\begin{equation}\label{eq:tur_reduction}
			\tikzfig{tur_reduction}
		\end{equation}
		for any such reduction, where the purple box indicated that the shadow functor is applied to both sides of the relation.
		Since $\behim{\eval}$ is a total function, there is only one function that agrees with it on its domain{\,\textemdash\,}the function itself{\,\textemdash\,}so that \eqref{eq:tur_reduction} is in fact an equality.
		Thus $r^* \colon s \to \id$ is an oplax reduction as well, once again by \cref{prop:beh_equality}.
		
		Next, let us show that $s_u$ is a compressed simulator.
		To do so, take any reduction $s_u \to \id$ implemented by a computable function $r \colon T \rightarrow P$ and any two Turing machines $t$, $g$ that evaluate to distinct, constant functions. 
		In particular, for all inputs $c,d \in C$, the output of $t$ applied to $c$, denoted by $\eval(t,c)$, is distinct from the output of $g$ applied to $d$, denoted by $\eval(g, d)$.
		This means that $g$ does not oplax context-reduce to $t$.
		Since for any $p\in P$, we have $s_T(p) = u$ by definition, we also necessarily have
		\begin{equation}
			s_T\bigl( r(t) \bigr) = s_T \bigl( r(g) \bigr),
		\end{equation}
		which is the required \cref{eq:same_compiled}.
		Hence, the conditions of \cref{thm:s 2 id} hold, from which it follows that no morphism of type $s_u \to \id$ exists.
\end{example}

	By a similar reasoning, if the image of $s_T$ only contains a finite number of Turing machines, there cannot exist a morphism of type $s \to \id$. 
	To see this, note that there exist infinitely many constant Turing machines that mutually do not oplax context-reduce to each other. 
	For any reduction $r$, there will thus always exist $t$ and $g$, mapped to the same Turing machine under $s_T \circ r$, such that $t$ does not oplax context-reduce to $g$.

\section{Universality and Unreachability}
\label{sec:Undecidability}

	In \cref{sec:Morphisms}, we introduced the simulator category and the parsimony of universal simulators from quite a general point of view.
	In this section, we instead provide a simple example of how the study of properties and consequences of universality can look like in our framework.
	For this, additional assumptions are often needed, which thus narrow the scope of the investigation.
	In particular, the general notion of a target--context category (\cref{def:instance}) is not rich enough for the discussion that follows.
	We use the more concrete target--context category with intrinsic behaviors (\cref{def:intrinsic_beh}) throughout this section.

	It has been argued in \cite{De20d} that universality can be related to undecidability (see also \cite[Chapter 7]{Mo11} for this relation in the context of computation). 
	The latter often stems from a contradiction arising from self-reference and negation, whose essence is known as the liar paradox, i.e.\ the statement ``I am a liar'' or similarly ``this sentence is false'' \cite{Ra19,Bo17}. 
	In the world of mathematics, one can arrive at similar statements via Cantor's diagonal argument and its variations.
	A general approach to these from the point of view of category theory has been provided by Lawvere in \cite{La69b}; this includes statements such as G\"odel's first incompleteness theorem, the uncomputability of the halting problem for Turing machines, Russell's paradox in set theory and Tarski's theorem on the non-definability of truth. 
	See also \cite{Ro21} for a recent discussion of a generalization of Lawvere's Fixed Point Theorem and \cite{Ya03} for a gentle introduction to it with many more examples.
	Hereafter, we investigate how undecidability connects to our framework for universality.

	\subsection{Undecidability as Unreachability}\label{sec:wps}
	
		In this section, we formulate a version of undecidability within a given target--context category.
		To say that the decision problem ``Is $x$ in $L$?'' for a language $L$ is \emph{undecidable} means that there is no halting Turing machine that accepts $x$ if and only if $x \in L$.
		In other words, it means that the function 
		\begin{equation}
			x \mapsto \begin{cases} 1 & x \in L \\ 0 & x \not \in L \end{cases}
		\end{equation}
		is not computable.
		That is, undecidability in this sense is a special case of uncomputability.
		
		The theory of computable functions is but one example of our framework. 
		In order to draw a more general connection between universality of simulators and undecidability, we have to generalize the latter so that we can interpret it in a generic instance of our framework.
		To this end, note that in a Turing category (\cref{ex:TM}) or a monoidal computer (\cref{sec:monoidal_computer}), all morphisms of type $C \to B$ are assumed to be ``programmable'', which, in these frameworks, means that they are computable. 
		Programmability is the assumption that for any such morphism $f$, there exists a Turing machine $t_f$ which implements the computable function $f$, a notion introduced in \cite{La69b} as weak point surjectivity.
		We use a different name that follows the terminology of \cite[definition 10]{Ro21}, where a negation of this concept is termed an \emph{incomplete parametrization}.
		
		\begin{definition}\label{def:wps_intrinsic}
			Consider a target--context category with intrinsic behaviors.
			We say that a morphism $F \colon P \otimes C \to B$ is an \textbf{$\bm{A}$-complete parametrization} of maps $C \to B$ if for every $f \colon A \otimes C \to B$, there exists a functional morphism $p_f \colon A \to P$ satisfying 
			\begin{equation}\label{eq:wps2}
				\tikzfig{complete_parametrization}
			\end{equation}
			where $\imrelop$ is given in \cref{def:brel}.
		\end{definition}
		The reading of string diagrams with purple background as in \cref{eq:wps2} is that we consider the image of the morphism under the shadow functor.
		To recover the notion of weak point surjectivity from \cite{La69b}, one has to replace the imitation relation in \eqref{eq:wps2} with equality and let $A$ be the unit \mbox{object $I$}.
		\begin{example}[Complete parametrization for Turing machines]\label{ex:wps_TM}
			Consider the running example of the ambient category $\cat{Tur}^\intr$ for Turing machines and let the $F$ from \cref{def:wps_intrinsic} be the associated (computable) evaluation function $\eval$ from \cref{ex:TM intrinsic}. 
			It is an $I$-complete parametrization as long as for every computable function $f \colon C \to B$, there is a Turing machine $t_f \colon I \to T$ satisfying
			\begin{equation}\label{eq:TM_for_function}
				\eval \bigl( t_f, \ph \bigr) = f(\ph).
			\end{equation}
			This is true by definition of course. 
			
			The evaluation function would be a complete parametrization if for every computable function $f \colon A \otimes C \to B$ with two inputs, there is a family
			\begin{equation}
				\Family{ t_f(a) \colon I \to T  \given  a \in A }
			\end{equation}
			of Turing machines, indexed by elements of $A$ (thought of as a set), such that we have 
			\begin{equation}
				\forall \, a \in A \qquad \eval \bigl( t_f(a), \ph \bigr) = f(a, \ph)
			\end{equation}
			\emph{and} such that the assignment $a \mapsto t_f(a)$ is a computable function of type $A \to T$. 
		\end{example}
		From the point of view of category theory, a better motivated concept than \cref{def:wps_intrinsic} would be to require that $F$ is an $A$-complete parametrization for all objects $A$ in $\cC$.
		In \cite{longo1990category}, this notion has been called Kleene universality, in \cite{Co08d} universal application, and in \cite{pavlovic2023programs} it is the defining property of the so-called program evaluator.
		
		If there was a morphism $C \to B$ in $\cat{Tur}$ for which no suitable Turing machine $t_f$ satisfying \cref{eq:TM_for_function} exists, it would be an \emph{uncomputable} function.
		This is the rationale behind the following concept of unreachability, which stipulates that there are behaviors that are inaccessible.
		\begin{definition}\label{def:unreachability}
			Consider a target--context category with intrinsic behaviors.
			A (universal) simulator $s$ has \textbf{unreachability} if the composite $\eval \circ s$ is not an $I$-complete parametrization of maps $C \to B$. 
		\end{definition}
		When we refer to unreachability for $P$, we implicitly have in mind a particular simulator of type $P \otimes C \to T \otimes C$ that it refers to.
		Note that unreachability for the trivial simulator is of particular conceptual importance since morphisms of type $C \to B$ that cannot be parametrized by $\eval$ are also not required (by \cref{def:univ sim}) to be parametrized by a universal simulator.
		In that sense such morphisms are not taken into account by the definition of universality. 
		
	\subsection{Unreachability from Universality and Negation}\label{sec:Lawvere}
	
		Let us now generalize Lawvere's Fixed Point Theorem \cite{La69b} in a way that applies to our framework.
		In its contrapositive form, this result says that if there is a morphism of type $B \to B$ without a fixed point (such as a negation), then no morphism of type $C \otimes C \to B$ is an $I$-complete parametrization. 
		For instance, if $T$ and $C$ are isomorphic, then this result entails unreachability for the trivial simulator.
		However, this need not be the case for distinct targets and contexts.
		Nevertheless, in \cref{prop:weak pt surj}, we show that for any universal simulator $s$ 
			unreachability for $s$ implies unreachability for the trivial simulator.
		Thus, if there is a negation $B \to B$ \emph{and} a universal simulator whose programs coincide with contexts, we obtain unreachability \mbox{for $T$}. 
		
		A fixed point of a morphism $g \colon B\to B$ is a $b \colon A \to B$ that equals $g \circ b$. 
		If merely
		\begin{equation}\label{eq:quasi-fixed}
			\behim{b} \imrelop \behim{g} \circ \behim{b}
		\end{equation}
		holds, where $\behim{b}$ stands for $\Beh(b)$ as before, we say that $b$ is a \textbf{quasi-fixed point} of $g$. 
		
		\begin{theorem}[Fixed Point Theorem]\label{thm:Lawvere}
			Let $\cC$ be a target--context category with intrinsic behaviors. 
			If there is a morphism $F \colon C \otimes C \to B$ that is an $A$-complete parametrization of maps $C \to B$ (for some $A$), then every morphism $g \colon B \to B$ has a quasi-fixed point.
		\end{theorem}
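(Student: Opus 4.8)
The plan is to adapt Lawvere's diagonal argument \cite{La69b} to the imitation-relation weakening of point surjectivity encoded in \cref{def:wps_intrinsic}. Fix a morphism $g \colon B \to B$; since $F$ has type $C \otimes C \to B$, we are in the situation of \cref{def:wps_intrinsic} with $P = C$. First I would form the ``diagonalised'' morphism
\begin{equation*}
	f \;\coloneqq\; g \circ F \circ \cop_C \circ (\discard_A \otimes \id_C) \;\colon\; A \otimes C \to B ,
\end{equation*}
which, read in $\cat{Rel}$, sends a pair $(a,c)$ to $g\bigl(F(c,c)\bigr)$: it discards the parameter slot $A$ and feeds the context $c$ into both arguments of $F$. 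Applying the hypothesis that $F$ is an $A$-complete parametrization to this particular $f$ produces a functional morphism $p_f \colon A \to C$ of $\cC$ satisfying the defining relation \eqref{eq:wps2}, i.e.\ $\behim{F} \circ (\behim{p_f} \otimes \id_{\behim{C}}) \imrelop \behim{f}$ in $\cat{Rel}$. The candidate quasi-fixed point is then $b \coloneqq F \circ (p_f \otimes p_f) \circ \cop_A \colon A \to B$; since $p_f$ is functional it is a comonoid homomorphism with respect to $\cop$ (\cref{def:functional}), so $b$ equals also $F \circ \cop_C \circ p_f$, i.e.\ ``feed $p_f(a)$ into $F$ diagonally''.

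The core step is to precompose the imitation relation $\behim{F} \circ (\behim{p_f} \otimes \id_{\behim{C}}) \imrelop \behim{f}$ with the shadow of the $\cC$-morphism $(\id_A \otimes p_f) \circ \cop_A \colon A \to A \otimes C$. Because the imitation relation on $\cat{Rel}$ is preserved by precomposition with an arbitrary relation{\,\textemdash\,}this is exactly implication \eqref{eq:imrel_precomp} established inside the proof of \cref{lem:brel_precomp}{\,\textemdash\,}the relation survives. It then remains to simplify the two composites. Using functoriality of $\Beh$, the comonoid laws for $\cop$ and $\discard$, the interchange law, and the comonoid-homomorphism property of $p_f$, the left-hand side collapses to $\behim{b}$ (both legs of $\cop_A$ become $\behim{p_f}$, after which $\behim{F}$ is applied), and the right-hand side collapses to $\behim{g} \circ \behim{b}$ (the $\discard_A$ in $f$ absorbs the spare copy of $A$, so again $p_f(a)$ enters both slots of $F$, now postcomposed with $g$). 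This yields $\behim{b} \imrelop \behim{g} \circ \behim{b}$, which is precisely the statement that $b$ is a quasi-fixed point of $g$.

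The main obstacle I anticipate is not the combinatorics of the diagram chase but the bookkeeping forced by $\Beh$ being only a \emph{lax} gs-monoidal functor: rewriting $\Beh$ applied to a tensor of morphisms as the tensor of their $\Beh$-images is a priori mediated by the laxators of $\Beh$, so one must check that these comparison maps do not obstruct the two simplifications above. In the examples of interest (inclusions, forgetful functors) the laxators are identities, and in general the rewrites only invoke the comonoid structure, which a gs-monoidal functor is required to respect, so this is a matter of care rather than a genuine difficulty. A secondary point worth making explicit is that the target--context category is normalized, which is what licenses the ``functional $=$ comonoid homomorphism'' characterization of $p_f$ used above (cf.\ \cref{rem:unnormalized}), and that \eqref{eq:imrel_precomp} applies verbatim here since it holds for precomposition with any relation whatsoever.
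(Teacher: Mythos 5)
Your proposal is correct and follows essentially the same route as the paper: you diagonalise $F$ against $g$ to form $f$, invoke $A$-complete parametrization to obtain a functional $p_f$ (the paper's $c_f$), and exhibit $\gamma \circ p_f$ with $\gamma = F \circ \cop_C$ as the quasi-fixed point, using functionality of $p_f$ to commute it past the copy and preservation of the imitation relation under precomposition to close the chain. The only difference is presentational (you carry out the precomposition explicitly in $\cat{Rel}$ and flag the laxator bookkeeping), not mathematical.
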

		\begin{remark}[Fixed Point Theorem ``without behaviors'']
			There is a similar version of the theorem in any gs-monoidal category without the structure of a target--context category with intrinsic behaviors.
			In this case, the notion of a complete parametrization uses equality instead of relation $\imrel$ in \eqref{eq:wps2} and the conclusion is that every $g$ has a fixed point (rather than a quasi-fixed point).
			To recover the original Lawvere's Fixed Point Theorem for products from \cite{La69b}, one can instantiate this version in a cartesian monoidal category (which is automatically gs-monoidal) with $A = I$. 
		\end{remark}
		
		\begin{proof}[Proof of \cref{thm:Lawvere}]
			Consider an arbitrary $g \colon B \to B$ and define a morphism
			\begin{equation}\label{eq:Lawvere_1}
				\tikzfig{Lawvere_1}
			\end{equation}
			which depends on $g$.
			Because $F$ is an $A$-complete parametrization, there exists a morphism ${c_f \in \cC_{\rm fun} (A,C)}$ satisfying
			\begin{equation}\label{eq:Lawvere_2}
				\tikzfig{Lawvere_2}
			\end{equation}
			Since $c_f$ is functional, the leftmost equation in the chain
			\begin{equation}\label{eq:Lawvere_3}
				\tikzfig{Lawvere_3}
			\end{equation}
			follows, while the second relation holds by \eqref{eq:Lawvere_2} and the third one by the definition of $f$ in \eqref{eq:Lawvere_1}.
			Consequently, $\gamma \circ c_f$ is a quasi-fixed point of $g$, where $\gamma \coloneqq F \circ \cop_C$.
		\end{proof}
	
		To gain an intuition of the proof of \cref{thm:Lawvere}, consider $F$ to be the evaluation function ${\eval \colon C \otimes C \to B}$, where we pick a particular isomorphism between strings $C$ and Turing machines $T$, i.e.\ a computable enumeration of Turing machines with a computable inverse. 
		Then the key object in the proof is the ``self-referential constructor'' $\gamma \colon C \to B$ given by $F \circ \cop_C$.
		It is a function that takes a generic string $c \colon I \to C$ and runs the Turing machine enumerated by it with input given by $c$ itself.
		The fixed point of $g$ is the outcome of running the self-referential constructor $\gamma$ with input given by $c_f$, which is constructed using both $\gamma$ and $g$.

		\begin{corollary}[Total fixed points]\label{cor:lawvere total}
			Consider the same situation as in \cref{thm:Lawvere} except for the additional assumption that 
			\begin{enumerate}\label{cond tot uni}
				\item\label{cond tot uni 1} $F\colon C \otimes C \to B$ is total
				\item\label{cond tot uni 2} $F$ is an $A$-complete parametrization (for some $A$) by total morphisms, i.e.\ for any ${f\colon A \otimes C \to B}$ there exists a total $c_f \colon C \to A$ such that \cref{eq:wps2} holds.
			\end{enumerate}
			Then every morphism $g \colon B \to B$ has a total quasi-fixed point.
		\end{corollary}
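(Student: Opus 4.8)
The plan is to re-run the proof of \cref{thm:Lawvere} essentially verbatim, while tracking the totality of every morphism that appears, so that the constructed quasi-fixed point comes out total. Recall that in that proof one fixes an arbitrary $g \colon B \to B$, builds the auxiliary morphism $f \colon A \otimes C \to B$ of \eqref{eq:Lawvere_1} out of $g$, $F$, and copying, invokes the $A$-complete parametrization property of $F$ to obtain a functional $c_f \in \cC_{\rm fun}(A,C)$ satisfying \eqref{eq:Lawvere_2}, and then verifies through the chain \eqref{eq:Lawvere_3} that $\gamma \circ c_f$ is a quasi-fixed point of $g$, where $\gamma \coloneqq F \circ \cop_C$. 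Since none of these computations change under the extra hypotheses, it suffices to show that, assuming \ref{cond tot uni 1} and \ref{cond tot uni 2}, the morphism $\gamma \circ c_f$ is total.

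First I would record two easy facts. (a) Each copy morphism $\cop_C$ is total: composing it with $\discard_{C \otimes C} = \discard_C \otimes \discard_C$ and using the counit law of the comonoid structure yields $\discard_C$, which is exactly the defining equation \eqref{eq:normalization} of totality. (b) Total morphisms are closed under composition: if $\discard_X \circ h = \discard_W$ and $\discard_Y \circ h' = \discard_X$, then $\discard_Y \circ (h' \circ h) = \discard_X \circ h = \discard_W$. Combining (a) with hypothesis \ref{cond tot uni 1} (that $F$ is total) and (b), the ``self-referential constructor'' $\gamma = F \circ \cop_C$ is total.

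Next I would apply hypothesis \ref{cond tot uni 2} to the particular morphism $f$ of \eqref{eq:Lawvere_1}: it guarantees that the parametrizing morphism $c_f$ can be chosen not merely functional but total. (Note this uses \ref{cond tot uni 2} for a single, explicitly given $f$; in particular it does not matter whether this $f$ is itself total.) Then $\gamma \circ c_f$ is a composite of the two total morphisms $\gamma$ and $c_f$, hence total by (b). Since the relation $\behim{\gamma \circ c_f} \imrelop \behim{g} \circ \behim{\gamma \circ c_f}$ is established exactly as in \eqref{eq:Lawvere_3}, this $\gamma \circ c_f$ is the desired total quasi-fixed point of $g$, and $g$ was arbitrary.

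I do not anticipate a genuine obstacle: the result is a bookkeeping refinement of \cref{thm:Lawvere}, and the only points that need care are the (routine) totality of $\cop_C$, the closure of total morphisms under composition, and the observation that the proof of \cref{thm:Lawvere} constructs $c_f$ using nothing beyond the parametrization property of $F$ — so replacing that property by its ``by total morphisms'' strengthening in \ref{cond tot uni 2} passes through without modification. The mildest subtlety is simply making sure one does not secretly need $f$ to be total anywhere, which the argument above confirms one does not.
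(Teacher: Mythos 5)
Your proposal is correct and follows essentially the same route as the paper, which simply asserts that under the extra hypotheses the quasi-fixed point $\gamma \circ c_f$ constructed in the proof of \cref{thm:Lawvere} is total; you supply the bookkeeping (totality of $\cop_C$ via the counit law, closure of total morphisms under composition, and totality of $c_f$ from the strengthened parametrization hypothesis) that the paper leaves implicit.
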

		\begin{proof}
			Following the same reasoning as in the proof of \cref{thm:Lawvere}, the additional assumptions imply that the quasi-fixed point (as a morphism in $\cC$) constructed in the leftmost expression in \eqref{eq:Lawvere_3}, is total.
		\end{proof}

	\begin{example}[The non-existence of a total universal Turing machine]\label{ex:no_total_UTM}
		Consider the target--context category $\cat{Tur}^\intr$. 
		We use \cref{cor:lawvere total} to recover the known fact that there does not exist a total universal Turing machine.
		Equivalently, this means that there is no total computable function $u \colon T \times C \to B$ that satisfies 
		\begin{equation}\label{eq:total compl}
			u(t,c) = t(c)
		\end{equation}
		whenever the Turing machine $t$ halts on input $c$.
		To do so, we show that, in $\cat{Tur}^\intr$, there is no universal simulator $s$ with a lax reduction $r^* \colon s \to \id$ such that the composite 
		\begin{equation}\label{eq:total_computable}
			\tikzfig{total_computable}
		\end{equation}
		is a total morphism.
		This is a stronger statement because, for every such simulator, the composite above is a total computable function satisfying \cref{eq:total compl}.
		In particular, since the trivial simulator is universal with a lax reduction given by the identity, we obtain that $\eval$ itself cannot be a total computable function.

		We prove the above claim by contradiction.
		That is, consider a simulator $s$ with the desired properties. 
		Letting $F$ from \cref{cor:lawvere total} be the morphism in diagram \eqref{eq:total_computable}, condition \ref{cond tot uni 1} is satisfied by assumption.
		Moreover, $\eval$ is not just an $I$-complete parametrization as argued in \cref{ex:wps_TM}, but it in fact satisfies the stronger condition \ref{cond tot uni 2} in \cref{cor:lawvere total}.
		Since $s$ is universal, using a similar reasoning as in \cref{prop:weak pt surj} we conclude that $F$ also satisfies this condition. 
		Hence, we conclude that every morphism $g\colon B \to B$ has a total quasi-fixed point by \cref{cor:lawvere total}.
		This, however, is not true.
		For instance, the succesor function does not have a total quasi-fixed point.
		Such a simulator therefore cannot exist.
	\end{example}
	
	Provided an intrinsic behavior structure, a good candidate for a complete parametrization in $\cC$ is the evaluation morphism $\eval \colon T \otimes C \to B$.
	However, in order to apply Lawvere's Fixed Point Theorem, we need to relate its two inputs{\,\textemdash\,}targets and contexts{\,\textemdash\,}like in the above example of Turing machines with an isomorphism between the two sets. 
	One way to do so more generally is to use a simulator with programs identical to the contexts $C$.
	Such a simulator allows us to construct a composite morphism $\eval \circ s$ with identical inputs.
	Let us show that this composite is a complete parametrization as long as $s$ is universal and $\eval$ is itself a complete parametrization.
	Having such a universal simulator allows one to build the self-referential constructor $\gamma$ and in this sense universality allows for self-reference.
	
	\begin{lemma}\label{prop:weak pt surj}
		Consider an intrinsic behavior structure where $\eval$ is an $A$-complete parametrization of maps $C \to B$.
		If $s$ is a universal simulator, then $\eval \circ s \colon P \otimes C \to B$ is also an $A$-complete parametrization.
	\end{lemma}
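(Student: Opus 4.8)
The plan is to use the functional morphism $r \colon T \to P$ underlying the lax reduction $r^* \colon s \to \id$ (which exists by \cref{def:univ sim} since $s$ is universal) to convert a parametrization via $\eval$ into one via $\eval \circ s$. Given an arbitrary $f \colon A \otimes C \to B$, the hypothesis that $\eval$ is an $A$-complete parametrization supplies a functional $q_f \colon A \to T$ with $\behim{\eval} \circ (\behim{q_f} \otimes \id_{\behim{C}}) \imrelop \behim{f}$; I would then take $p_f \coloneqq r \circ q_f \colon A \to P$ as the candidate parametrizing morphism for $\eval \circ s$. It is functional because functional morphisms form a subcategory of $\cC$, so only the imitation estimate remains to be checked.

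To establish that estimate I would proceed by a short chain of imitations. First, unpack the lax reduction: $r^* \colon s \to \id$ means $s \circ (r \otimes \id_C) \mrelop \id_{T \otimes C}$, which by \cref{def:mrel} says that $\behim{\eval} \circ \behim{s} \circ \behim{r \otimes \id_C}$ imitates $\behim{\eval}$ as relations into $\behim{B}$. Next, precompose this with $\behim{q_f} \otimes \id_{\behim{C}}$; since imitation in $\cat{Rel}$ is preserved by precomposition (the implication \eqref{eq:imrel_precomp} proved in \cref{lem:brel_precomp}), I get
\[
\behim{\eval} \circ \behim{s} \circ \behim{r \otimes \id_C} \circ \bigl( \behim{q_f} \otimes \id_{\behim{C}} \bigr) \;\imrelop\; \behim{\eval} \circ \bigl( \behim{q_f} \otimes \id_{\behim{C}} \bigr).
\]
The right-hand side imitates $\behim{f}$ by the defining property of $q_f$, so transitivity of the imitation relation (composing enhancement maps with enhancement maps, and likewise degradations) yields that the left-hand side imitates $\behim{f}$. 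Finally, functoriality of $\Beh$ rewrites the left-hand side as $\behim{\eval \circ s} \circ (\behim{r \circ q_f} \otimes \id_{\behim{C}}) = \behim{\eval \circ s} \circ (\behim{p_f} \otimes \id_{\behim{C}})$, which is exactly the condition of \cref{def:wps_intrinsic} witnessing that $\eval \circ s$ is an $A$-complete parametrization.

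The only delicate point is the bookkeeping around the fact that $\Beh$ is merely \emph{lax} gs-monoidal: $\behim{T \otimes C}$ is not literally $\behim{T} \times \behim{C}$, so identifying $\behim{q_f \otimes \id_C}$ with $\behim{q_f} \otimes \id_{\behim{C}}$ and splitting $\behim{r \otimes \id_C}$ accordingly requires inserting the laxators, precisely along the lines of the purple-background string-diagram conventions already used in the excerpt. I expect this to be the main (but routine) obstacle; everything else is a two-step gluing of imitations by transitivity, and, notably, I do not expect to need the splitting of $s$ into $s_T,s_C$ nor the functionality of $s_T$ here — only that $r$ and $q_f$ are functional so that $p_f$ is.
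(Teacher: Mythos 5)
Your proposal is correct and follows essentially the same route as the paper's proof: take $p_f \coloneqq r \circ t_f$ (your $q_f$), note it is functional as a composite of functionals, obtain the imitation $\behim{\eval}\circ\behim{s}\circ(\behim{p_f}\otimes\id)\imrelop\behim{\eval}\circ(\behim{t_f}\otimes\id)$ by precomposition-stability of $\imrel$ (\cref{lem:brel_precomp}), and conclude by transitivity with the defining property of $t_f$. The laxator bookkeeping you flag is indeed the only cosmetic difference; the paper absorbs it into its string-diagram conventions, and, as you anticipated, neither argument needs the splitting of $s$ or the functionality of $s_T$.
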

	\begin{proof}
		By assumption, we have that for any $f \colon A \otimes C \to B$ there is a functional $t_f \colon A \to T$ such that 
		\begin{equation}\label{eq:eval_compl_para}
			\tikzfig{eval_compl_para}
		\end{equation}
		holds.
		Let us denote the reduction of $s$ to the trivial simulator by $r \colon T \to P$.
		Then for every $t_f$ we define $p_f \coloneqq r \circ t_f$, for which   
		\begin{equation}\label{eq:weak_pt_surj}
			\tikzfig{weak_pt_surj}
		\end{equation} 
		holds by \cref{lem:brel_precomp}.
		Note that $p_f$ is a composition of functional morphisms and thus functional itself.
		Combining relations \eqref{eq:weak_pt_surj} and \eqref{eq:eval_compl_para} shows that $\eval \circ s$ is an $A$-complete parametrization of maps $C \to B$.
	\end{proof}
	In the contrapositive form, \cref{prop:weak pt surj} says that given a universal simulator whose evaluation is not a complete parametrization, we necessarily have unreachability in the sense of \cref{def:unreachability}.
	This is the meaning of the implication mentioned in the \nameref{sec:Introduction}:
	\begin{center}
		unreachability (for $P$) + universality $\implies$ unreachability (for particular solutions in $T$),
	\end{center}
	where we think of unreachability for $P$ as saying that the morphism $\eval \circ s$ that is not an $I$-complete parametrization.
	
	\begin{remark}[Relation to Pavlovic's monoidal computers]
		In the framework of monoidal \mbox{computers \cite{pavlovic2018monoidal}}, the above complete parametrization property of $\eval \circ s$ is among the axioms of the framework.
		As mentioned in \cref{sec:monoidal_computer}, $\eval \circ s$ corresponds to the so-called universal evaluation denoted by $\{\}$, which is assumed to be programmable \cite[proposition 3.2]{pavlovic2018monoidal}.
		That is, for every computation $f \colon C \to B$, there exists a program $p \colon I \to P$ such that composing the universal evaluation with $p$ yields $f$.
		In light of \cref{prop:weak pt surj}, it is appropriate to think of a universal evalulation from \cite{pavlovic2018monoidal} as a composition of the evaluation function $\eval$ with a \emph{universal} simulator $s$ rather than a generic simulator.
	\end{remark}
	
	\begin{theorem}[Quasi-fixed points and simulators]\label{thm:fix point sim}
		Consider an intrinsic behavior structure where $\eval$ is a $A$-complete parametrization of maps $C \to B$. 
		If there exists a universal simulator $s$ of type $C \otimes C \to T \otimes C$, then every morphism $g \colon B \to B$ has a quasi-fixed point. 
	\end{theorem}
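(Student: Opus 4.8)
The plan is to chain together the two results already established in this section: Lemma~\ref{prop:weak pt surj} and the Fixed Point Theorem (Theorem~\ref{thm:Lawvere}). The key observation is that a universal simulator $s$ of type $C \otimes C \to T \otimes C$ has its object of programs equal to $C$, so that the composite $\eval \circ s$ is a morphism of type $C \otimes C \to B$ — precisely the shape of morphism to which Theorem~\ref{thm:Lawvere} applies. The theorem is therefore essentially a repackaging of these two lemmas in the special case $P = C$.

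First I would invoke Lemma~\ref{prop:weak pt surj}. Its hypotheses are exactly those in the present statement: $\eval$ is an $A$-complete parametrization of maps $C \to B$, and $s$ is a universal simulator. Its conclusion is that $\eval \circ s \colon P \otimes C \to B$ is likewise an $A$-complete parametrization of maps $C \to B$. Specializing to $P = C$, this yields that $\eval \circ s \colon C \otimes C \to B$ is an $A$-complete parametrization.

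Second, I would apply Theorem~\ref{thm:Lawvere} with the choice $F \coloneqq \eval \circ s$. Since $F$ is a morphism of type $C \otimes C \to B$ and is an $A$-complete parametrization of maps $C \to B$ for some object $A$, the theorem directly gives that every morphism $g \colon B \to B$ has a quasi-fixed point, which is what is to be shown.

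I do not anticipate a genuine obstacle: the substantive content is entirely contained in the two cited results, and the assumption that the programs of $s$ coincide with the contexts $C$ serves only to match the output type of Lemma~\ref{prop:weak pt surj} against the input requirement of Theorem~\ref{thm:Lawvere}. The only point demanding a moment's care is that the object $A$ labelling ``$A$-complete parametrization'' is carried through Lemma~\ref{prop:weak pt surj} unchanged, so that Theorem~\ref{thm:Lawvere}'s hypothesis ``for some $A$'' is met with the same $A$; this is immediate from the proof of that lemma, where $p_f \coloneqq r \circ t_f$ has the same domain $A$ as $t_f$.
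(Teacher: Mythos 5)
Your proposal is correct and matches the paper's own proof exactly: the paper likewise applies \cref{prop:weak pt surj} to conclude that $\eval \circ s$ is an $A$-complete parametrization and then invokes \cref{thm:Lawvere} with $F = \eval \circ s \colon C \otimes C \to B$. Your remark that the object $A$ is carried through unchanged is a correct (and in the paper implicit) detail.
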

	\begin{proof}
		By \cref{prop:weak pt surj}, $\eval \circ s$ is a complete parametrization and the claim follows from \cref{thm:Lawvere}.
	\end{proof}
	
	The contrapositive of \cref{thm:fix point sim} says:
	If there is a morphism $B \to B$ without a quasi-fixed point, then either $\eval$ is not a complete parametrization or there is no universal simulator of type $C \otimes C \to T \otimes C$.
	While morphisms without quasi-fixed points can arise in many circumstances, the paradigmatic example is the negation that exchanges the truth values of a Boolean variable. 
	This fixed-point-free morphism is the one that is generally used to prove undecidability and other kinds of unreachability. 
	We can thus summarize the contrapositive of \cref{thm:fix point sim} with the slogan (depicted in \cref{fig:unreachability}): 
	\begin{center} 
		negation + universality $\implies$ unreachability 
	\end{center}
	where
	\begin{itemize}
		\item \emph{negation} refers to a quasi-fixed-point-free morphism of type $B \to B$,
		\item \emph{universality} refers to the existence of a universal simulator with $P = C$, and 
		\item \emph{unreachability} refers to unreachability of the trivial simulator in the sense of \cref{def:unreachability}.
	\end{itemize}\vspace{\parskip}

\begin{figure}[t]\centering
	\includegraphics[width=.9\columnwidth]{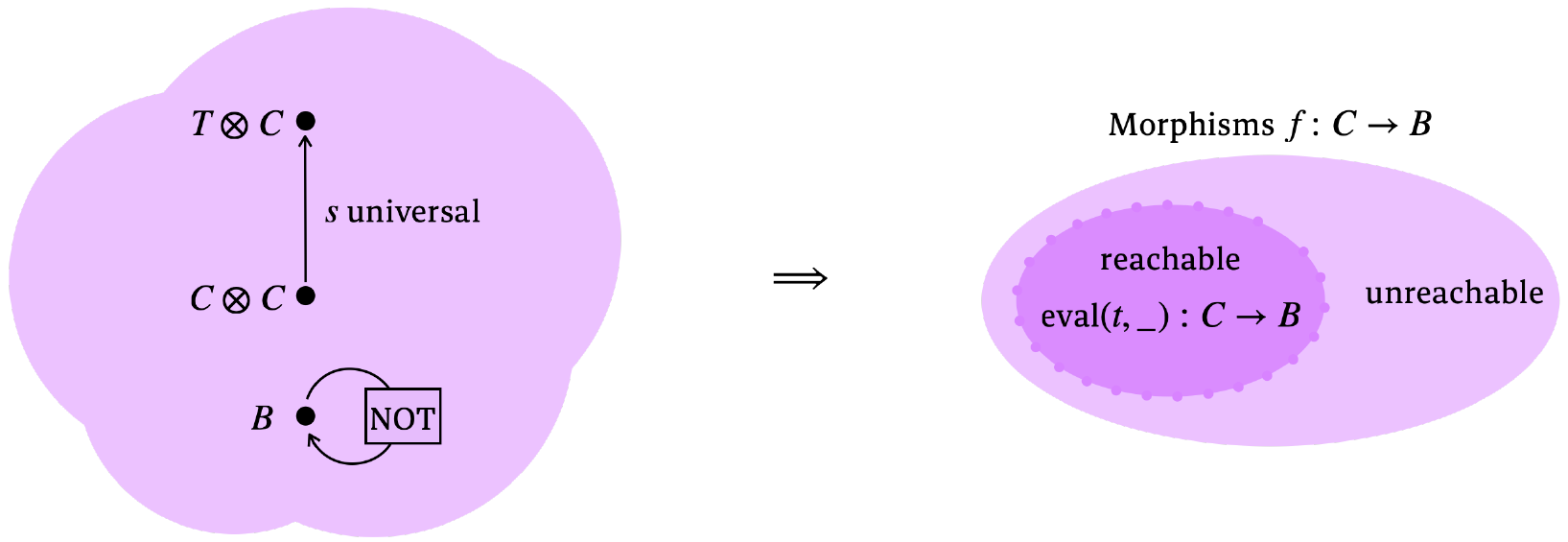}
	\caption{In a target--context category with intrinsic behaviors (\cref{fig:intrinsic-behavior}), if there exists a universal simulator $s$ of type $C\otimes C\to T\otimes C$, and a morphism $B\to B$ without a quasi-fixed point (such as negation), then the trivial simulator has unreachability.
	This means that there is a morphism $f \colon C\to B$ for which $\eval(t,\ph) \imrelop f(\ph)$ does not hold for any $t \colon I\to T$. }
	\label{fig:unreachability}
\end{figure} 

	\begin{example}[Cantor's Theorem]\label{ex:Cantor}
		Consider the ambient category $\cat{Set}$ of sets and functions with the canonical shadow functor given by the inclusion into $\cat{Rel}$.
		For the intrinsic behaviour structure, we take an arbitrary set $C$ as contexts, the power set $[2]^C$ as targets, the evaluation $\eval \colon [2]^C \times C \to [2]$ to be 
		\begin{equation}
			\eval(t,c) \coloneqq t(c)
		\end{equation}
		and $\brel$ as the equality relation on $[2] \coloneqq \{0,1\}$.
		Clearly, there exists a function $g \colon [2] \to [2]$ without a quasi-fixed point, namely the negation given by
		\begin{equation}
			g(0) = 1 \qquad \qquad g(1) = 0.
		\end{equation}
		Moreover, by construction, $\eval$ is a complete parametrization of maps $C \to [2]$. 
		Hence, by \cref{thm:fix point sim}, there does not exist a universal simulator $s \colon C \times C \to [2]^C \times C$.
		
		Cantor's Theorem says that there is no surjective map of type $l \colon C \to [2]^C$.
		If there was such an $l$, then the simulator with compiler $s_T \coloneqq l$ and context reduction $s_C \coloneqq \discard_C \times \id_C$ would clearly be universal with respect to the lax reduction given by any right inverse of $l$.
		
		Let us now argue that the existence of a universal simulator $s$ as above is in fact equivalent to the existence of such a surjection. 
		By definition, if $s$ is universal, then there exists a function $r \colon [2]^C \to C$ such that we have
		\begin{equation}\label{eq:Cantor_1}
			\tikzfig{Cantor_1}
		\end{equation}
		since the imitation relation is also the equality relation by \cref{ex:imitation_functions}.
		Taking $A = C$ and the function $\eval \circ s \colon C \times C \to [2]$ as $f$ in \cref{def:wps_intrinsic}, we obtain that there is a function $l \colon C \to [2]^C$ satisfying
		\begin{equation}\label{eq:Cantor_2}
			\tikzfig{Cantor_2}
		\end{equation}	
		because $\eval$ is a $C$-complete parametrization.
		Precomposing \cref{eq:Cantor_2} with $r \times \id_C$ and using \cref{eq:Cantor_1} to simplify the right-hand side gives
		\begin{equation}\label{eq:Cantor_3}
			\tikzfig{Cantor_3}
		\end{equation}	
		Therefore, by the universal property of $[2]^C$ as an exponential object, we obtain $l \circ r = \id_{[2]^C}$ or in other words that $r$ is a right inverse of $l$.
		This implies that $l$ is indeed a surjection of type $C \to [2]^C$.
		Hence, \cref{thm:fix point sim} allows us to restate Cantor's Theorem as the non-existence of universal simulators of a certain type.
	\end{example}

	\subsection{Complete Parametrization and Singleton Universal Simulators}\label{sec:complete_parametrization_singleton}
	
	In this section, we discuss two general methods to construct singleton universal simulators in target--context categories with reachability. 
	Recall that $s$ is a \emph{singleton} simulator if the functional image of the compiler $s_T$ is a single element $u \colon I \to T$. 
	If, additionally to the assumptions of \cref{prop:weak pt surj}, we have $B=C$, then there exists a singleton universal simulator which uses $\eval$ as the context reduction. 
	If, in addition, $T\otimes C$ can be reversibly embedded within $C$, there also exists another singleton universal simulator with the embedding $\sigma$ as a context reduction. 
	\begin{theorem}[Complete parametrization and singleton universal simulators]\label{thm:singleton}
		Consider an intrinsic behavior structure satisfying $B = C$ and where $\eval$ is a $I$-complete parametrization of maps $C \to C$. 
		\begin{enumerate}
			\item \label{prop:retract singleton 1}
				There is a $t_{\id} \colon I\to T$ such that 
				\begin{equation}
					\tikzfig{retract_singleton}
				\end{equation}
				is a singleton universal simulator of type $T\otimes C \to T \otimes C$.
					
			\item \label{prop:retract singleton 2}
				If $T\otimes C$ is a retract of $C$, with section $\sigma \colon T\otimes C \to C$ and retraction $\pi \colon C \to T\otimes C$, then there is a $t_u \colon I\to T$ such that
				\begin{equation}
					\tikzfig{retract_singleton_2}
				\end{equation}
				is a singleton universal simulator of type $T\otimes C \to T \otimes C$.
		\end{enumerate}
	\end{theorem}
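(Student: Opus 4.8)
The plan is to exhibit, in each case, an explicit singleton simulator whose compiler is a constant state obtained from the complete-parametrization hypothesis, and then to certify universality by checking that $\id_T$ is a lax reduction to the trivial simulator. The one non-formal ingredient is the fact, already recorded in the proof of \cref{lem:brel_precomp} as \eqref{eq:imrel_precomp}, that the imitation relation on relations is preserved by precomposition; everything else is bookkeeping with shadows.

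For part~\ref{prop:retract singleton 1}: since $B = C$, the identity $\id_C$ is a map $C \to B$, so the $I$-complete parametrization property of $\eval$ yields a functional state $t_{\id}\colon I \to T$ with $\behim{\eval}\circ(\behim{t_{\id}}\otimes\id_{\behim{C}}) \imrelop \id_{\behim{C}}$. I take $s\colon T\otimes C \to T\otimes C$ with programs $P\coloneqq T$, compiler $s_T \coloneqq t_{\id}\circ\discard_T$ and context reduction $s_C\coloneqq\eval$; this is the morphism drawn in the statement, it is a genuine simulator by \cref{def:simulator} and the remark following it (the compiler is functional, being a composite of functional morphisms, and the distinguished decomposition can be normalized as in that remark), and it is singleton by construction. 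Taking $r\coloneqq\id_T$, associativity of relational composition together with the fact that the constant compiler contributes the factor $\behim{t_{\id}}(\ast)$ give $\behim{\eval}\circ\behim{s} = \bigl(\behim{\eval}\circ(\behim{t_{\id}}\otimes\id_{\behim{C}})\bigr)\circ\behim{\eval}$. Precomposing the displayed imitation relation with $\behim{\eval}$ and applying \eqref{eq:imrel_precomp} yields $\behim{\eval}\circ\behim{s}\imrelop\id_{\behim{C}}\circ\behim{\eval}=\behim{\eval}=\behim{\eval}\circ\behim{\id_{T\otimes C}}$, which by \cref{def:mrel} is exactly $s\mrelop\id_{T\otimes C}$. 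Hence $\id_T$ is a lax reduction $s\to\id$ and $s$ is universal by \cref{def:univ sim}.

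For part~\ref{prop:retract singleton 2}: apply the complete-parametrization property instead to the composite $\eval\circ\pi\colon C\to T\otimes C\to B$, obtaining a functional state $t_u\colon I\to T$ with $\behim{\eval}\circ(\behim{t_u}\otimes\id_{\behim{C}})\imrelop\behim{\eval}\circ\behim{\pi}$. Take the simulator $s$ with $P\coloneqq T$, compiler $s_T\coloneqq t_u\circ\discard_T$ and context reduction $s_C\coloneqq\sigma$ — again singleton, and a simulator for the same reasons. With $r\coloneqq\id_T$ the analogous computation gives $\behim{\eval}\circ\behim{s}=\bigl(\behim{\eval}\circ(\behim{t_u}\otimes\id_{\behim{C}})\bigr)\circ\behim{\sigma}$, and precomposing the imitation relation with $\behim{\sigma}$, then using \eqref{eq:imrel_precomp}, functoriality of $\Beh$, and $\pi\circ\sigma=\id_{T\otimes C}$, we get $\behim{\eval}\circ\behim{s}\imrelop\behim{\eval}\circ\behim{\pi}\circ\behim{\sigma}=\behim{\eval}\circ\behim{\pi\circ\sigma}=\behim{\eval}$. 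Thus $s\mrelop\id_{T\otimes C}$ and $s$ is again a universal singleton simulator.

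The step I would be most careful about is the identification of $\behim{\eval}\circ\behim{s}$ with the left-composite of $\behim{\eval}\circ(\behim{t}\otimes\id_{\behim{C}})$ with the respective context reduction: one must keep track of the domain on which the imitation relation is tested, note that a functional (possibly non-total) constant state still supplies exactly the one-element factor needed, and confirm that the chosen data genuinely satisfies \cref{def:simulator}. Beyond this there is no real obstacle — once \eqref{eq:imrel_precomp} is in play, universality is immediate.
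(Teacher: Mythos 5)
Your proof is correct and follows essentially the same route as the paper's: in each case you obtain the constant target from the complete-parametrization hypothesis (applied to $\id_C$, respectively to $\eval\circ\pi$), and certify universality by precomposing the resulting imitation relation with $\eval$ (respectively with $\sigma$, using $\pi\circ\sigma=\id_{T\otimes C}$) and invoking preservation of $\imrel$ under precomposition. The only cosmetic difference is that you carry the argument at the level of shadows and spell out the simulator axioms, whereas the paper phrases the second step as the ambient relation $t_u\otimes\id_C\mrelop\pi$ and also notes that part (i) needs only the single instance of parametrization for $\id_C$.
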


	\begin{proof} 
		Note that property \ref{prop:retract singleton 1} holds under a weaker assumption than that of $\eval$ being a complete parametrization.
		The only fact we need about $\eval$ to prove universality of this simulator is that there be a functional state $t_{\id} \colon I\to T$ satisfying 
		\begin{equation}\label{eq:t_id}
			\tikzfig{t_id}
		\end{equation}
		This, of couse, holds under the assumed complete parametrization property.
		Let us now prove statements \ref{prop:retract singleton 1} and \ref{prop:retract singleton 2}.
		\begin{itemize}
			\item[\ref{prop:retract singleton 1}]
				Precomposing relation \eqref{eq:t_id} with $\eval$ and using the fact that imitation relations are preserved by precomposition (see the proof of \cref{lem:brel_precomp}), we obtain
				\begin{equation}\label{eq:t_id_2}
					\tikzfig{t_id_2}
				\end{equation}
				which says $s_{\rm id} \mrelop \id_{T \otimes C}$, so that $s_{\rm id}$ is a universal simulator indeed.
	
			\item[\ref{prop:retract singleton 2}]
				The composite $u \coloneqq  \eval \circ \pi$ is a morphism of type $C \to B$. 
				As $\eval$ is a complete parametrization by assumption, there is a functional state $t_u \colon I\to T$ such that 
				\begin{equation}\label{eq:singleton_proof_1}
					\tikzfig{singleton_proof_1}
				\end{equation}
				holds, which is nothing but $t_u \otimes \id_C \mrelop \pi$.
				Precomposing the latter relation with $\sigma$ and using the assumed $\pi \circ \sigma = \id_{T \otimes C}$ we get
				\begin{equation}\label{eq:singleton_proof_2}
					\tikzfig{singleton_proof_2}
				\end{equation}
				which proves the claim that $s_u$ is a universal simulator. \qedhere
		\end{itemize}
	\end{proof}
	
	Note that both simulators constructed in \cref{thm:singleton} identify programs and targets, i.e.\ we have ${P=T}$.
	That is why they cannot be used to obtain a non-trivial application of \cref{thm:fix point sim}.
	
\section{Comparing Instances of the Framework}\label{sec:Functors}	
		
	In this paper, we focus on studying individual target--context categories and concepts that can be described within each such instance of our framework{\,\textemdash\,}(universal) simulators, their reductions, morphisms, etc.
	However, we also pave the way for explorations of questions that have to do with multiple, a priori distinct, notions of universality.
	To do so, we define target--context functors  (\cref{fig:target--context-functor}) and show that the construction of simulator categories is functorial with respect to it.
	That is, given a relation at the level of ambient categories (the target--context functor), one obtains a functor between the corresponding simulator categories (\cref{fig:simfun}). 
	Moreover, the induced functor preserves universality of simulators.

\begin{figure}[t]\centering
	\includegraphics[width=.65\columnwidth]{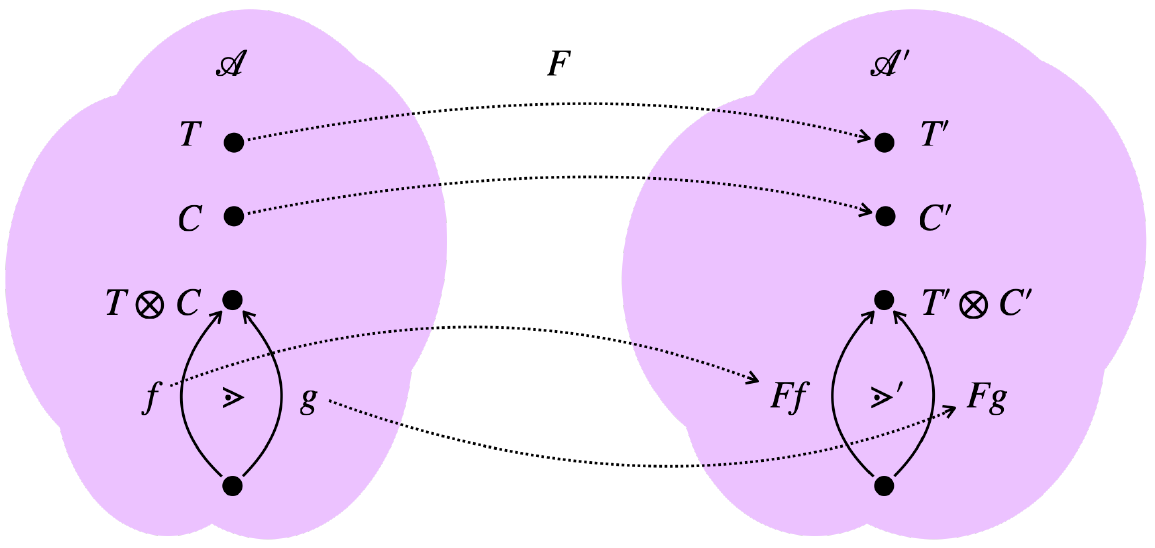}
	\caption{A target--context functor preserves targets, contexts, and the behavioral relation (\cref{def:TCfunctor}).}
	\label{fig:target--context-functor}
\end{figure} 
	
	\begin{definition}\label{def:TCfunctor}
		Consider two target--context categories $(\cC, T, C, \mrel)$ and $(\cC',T',C', \mrel')$.
		A \textbf{target--context functor} of type $(\cC, T, C, \mrel) \to (\cC',T',C', \mrel')$ is a strong gs-monoidal functor $F \colon \cC \to \cC'$ which preserves targets, contexts and the behavioral relation.
		That is, we have
		\begin{align}
			F(T) &= T', \\
			F(C) &= C', 
		\end{align}
		and 
		\begin{equation}
			f \mrelop g \quad \implies \quad F(f) \mrelop' F(g)
		\end{equation}
		for all $A$, and all $f,g \colon A \to T\otimes C$.
	\end{definition}
	
	\begin{figure}[t]\centering
		\includegraphics[width=.65\columnwidth]{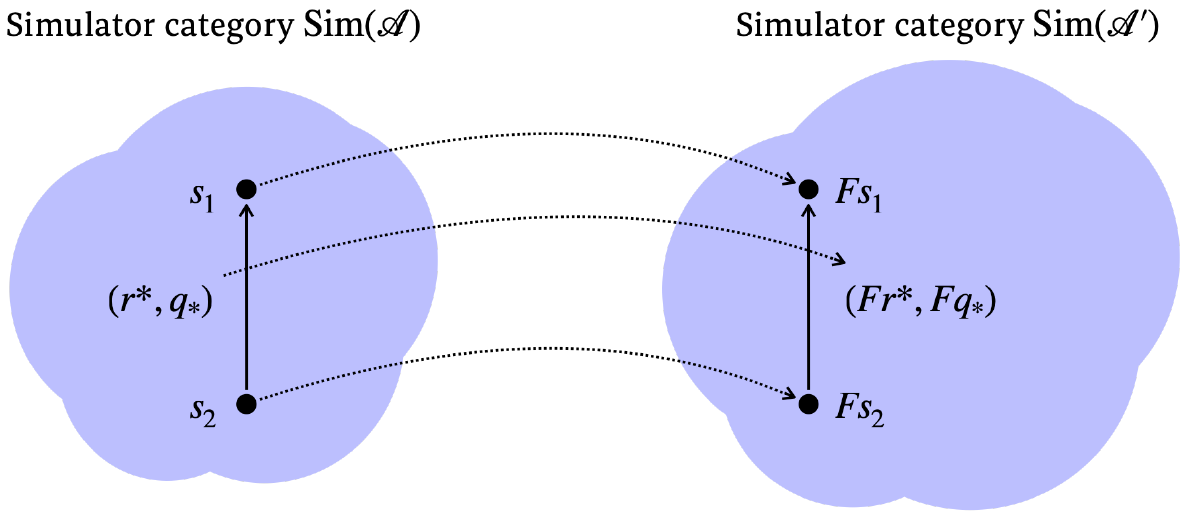}
		\caption{The construction of simulator categories is functorial with respect to target--context functors (\cref{thm:simfun}).}
		\label{fig:simfun}
	\end{figure} 
	\begin{theorem}[Functor between simulator categories]\label{thm:simfun}
		A target--context functor ${F\colon \mathcal{A} \to \mathcal{A}'}$ induces a functor $\mathrm{Sim}(F) \colon \textrm{Sim}(\mathcal{A}) \to  \textrm{Sim}(\mathcal{A}')$ defined as:
		\begin{equation}
			s \mapsto Fs  \qquad  (r^*,q_*) \mapsto \big ((F r)^*,(F q)_*\big)
		\end{equation}
	\end{theorem}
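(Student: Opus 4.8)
The plan is to verify three things in turn: that $F$ sends simulators to simulators, that it sends simulator morphisms to simulator morphisms with the expected source and target, and that it respects identities and composition. The only properties of $F$ I will use are those bundled into \cref{def:TCfunctor}: as a strong gs-monoidal functor it preserves — up to the coherence isomorphisms $F(X\otimes Y)\cong FX\otimes FY$ and $F(I)\cong I$ — composition, identities, and the comonoid data $\cop$ and $\discard$; and as a target--context functor it additionally satisfies $F(T)=T'$, $F(C)=C'$, and the implication $f\mrelop g \implies F(f)\mrelop' F(g)$.

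First I would check that $Fs$ is a simulator. By \cref{def:simulator}, $s$ factors through a functional compiler $s_T$ and a context reduction $s_C$ as in \cref{eq:simulator}, and satisfies \cref{eq:simulator_dom}; moreover functionality of $s_T$ is the string-diagrammatic identity \cref{eq:deterministic}. Every one of these is an equation built solely from $\cop$, $\discard$, identities, composition, and $\otimes$, so applying $F$ transports each of them, exhibiting $Fs$ as a simulator with compiler $F(s_T)$ and context reduction $F(s_C)$; here $F(s_T)$ is again functional, and the normalization requirement is automatic since $\mathcal{A}'$ is already a target--context category (\cref{def:instance}). The same computation shows $F(s_T),F(s_C)$ are the distinguished components of $Fs$, so $\mathrm{Sim}(F)$ is well defined on objects.

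Next I would treat a simulator morphism $(r^*,q_*)\colon s\to s'$ (\cref{def:simulator morphism}): a reduction $r^*$ given by a functional $r\colon P'\to P$ satisfying \cref{eq:reduction}, followed by a processing $q_*$ given by $q$ satisfying the splitting conditions \cref{eq:processing_split}--\cref{eq:processing_dom} with $q_T$ functional, the weak-imitation condition \cref{eq:processing_weak}, and the reachability condition \cref{eq:processing_hit}. Applying $F$: $F(r)$ is functional, \cref{eq:reduction} is preserved structurally, and since $F$ preserves composition and $\otimes$ we get $F(r^*s)=(Fr)^*(Fs)$, so $(Fr)^*$ is a reduction $Fs\to(Fr)^*(Fs)$; likewise $F(q)$ satisfies the split conditions with functional $F(q_T)$, and \cref{eq:processing_hit} is preserved structurally. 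The one non-structural condition is \cref{eq:processing_weak}, which is an inequality for the ambient relation $\mrel$; here I invoke precisely the clause $f\mrelop g\implies F(f)\mrelop' F(g)$ of \cref{def:TCfunctor} to conclude $F(q)$ satisfies \cref{eq:processing_weak} in $\mathcal{A}'$. Hence $((Fr)^*,(Fq)_*)\colon Fs\to Fs'$ is a simulator morphism. Preservation of universality is then immediate: \cref{eq:univ_sim_def_2} is itself a $\mrel$-inequality, or alternatively one appeals to \cref{prop:morph_laxred}.

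Finally I would verify functoriality. The identity on $s$ is the identity reduction paired with the $P$-independent identity processing, and $F(\id)=\id$ makes $\mathrm{Sim}(F)$ preserve it; for composition, the composite of two simulator morphisms is described in \cref{prop:morphism_composition} by the composite reduction $(r_1\circ r_2)^*$ together with the processing \cref{eq:processing_composition}, each again assembled structurally from the pieces, so functoriality and monoidal coherence of $F$ give $\mathrm{Sim}(F)$ of a composite equal to the composite of the images. I expect the main obstacle to be purely bookkeeping: since $F$ is only \emph{strong} (not strict) gs-monoidal, every ``apply $F$ to this string diagram'' silently inserts the coherence isomorphisms $F(X\otimes Y)\cong FX\otimes FY$, and one must check these cancel consistently in each of the equations above; the clean remedy is either to carry these canonical isomorphisms along uniformly, or to pass to a strictification so that $F$ may be assumed strict without loss of generality.
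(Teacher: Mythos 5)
Your proposal is correct and follows essentially the same route as the paper's proof: verify that $F$ preserves the structural (string-diagrammatic) equations defining simulators, reductions, and processings, invoke the $\mrel$-preservation clause of a target--context functor for the one non-structural condition \eqref{eq:processing_weak}, and check identities and composition via \cref{prop:morphism_composition}; the paper likewise suppresses the strong-monoidal coherence isomorphisms. Your closing remark on preservation of universality is stated separately in the paper as \cref{thm:TC_functor_universality}, but the argument you sketch for it matches.
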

	\begin{proof}
		We omit any coherence isomorphisms of $F$. 
		We have to show that 
		\begin{enumerate}
			\item \label{it:functor_morphisms} given a simulator morphism $(q_*,r^*) \colon s_1 \to s_2$, we obtain a simulator morphism 
				\begin{equation}
					\bigl( (F q)_*,(F r)^* \bigr) \colon F s_1 \to F s_2
				\end{equation}
			\item \label{it:functor_id} $ F$ maps identity morphisms to identity morphisms, and
			\item \label{it:functor_seq} $ F$ preserves the composition of simulator morphisms.
		\end{enumerate}

		First, note that $F$ maps total morphisms of $\cC$ to total morphisms of $\cC'$ and functional morphisms to functional ones, as a direct consequence of being a strong gs-monoidal functor.
		In particular, for any total $f \colon A\to B$, we find
		\begin{equation}
			\discard_{FB} = F(\discard_B) = F(\discard_B \circ f) = \discard_{FB} \circ Ff .
		\end{equation}
		Similarly, for any functional $f \colon A\to B$, we have
		\begin{equation}
			(Ff \otimes Ff)\circ \cop_{FA} = F \bigl( (f\otimes f)\circ \cop_A \bigr) = F(\cop_B \circ f) = \cop_{FB} \circ Ff .
		\end{equation} 
		\begin{itemize}
			\item[\ref{it:functor_morphisms}] 		
				Let $s \colon P\otimes C \to T \otimes C$ be a simulator.
				By definition, $s$ comes equipped with a splitting into $s_C$ and $s_T$ as in diagram \eqref{eq:simulator}. 
				Applying $F$ to $s$, we thus find
				\begin{equation}
					Fs = (Fs_T \otimes Fs_C)\circ (\cop_{FP} \otimes \id_{FC}).
				\end{equation}
				Since $F$ preserves functionality of the compiler $s_T$, $Fs$ has the desired decomposition and is also a simulator.
				
				To show property \ref{it:functor_morphisms} we consider a simulator morphism $(q_*,r^*) \colon s_1 \to s_2$. 
				That is, let the morphism ${q \colon P\otimes T \otimes C \to T \otimes C}$ be a processing, i.e.\ a simulator that satisfies relations \eqref{eq:processing_weak} and \eqref{eq:processing_hit}.
				We now show that $((Fq)_*,(Fr)^*)$ is a simulator morphism of type $F s_1 \to F s_2$.
				Relation \eqref{eq:processing_weak} reads
				\begin{equation}
					\discard_P \otimes \id_T \otimes \id_C \mrelop q
				\end{equation}
				and applying $F$ to both sides yields
				\begin{equation}
					\discard_{FP} \otimes \id_{T'}\otimes \id_{C'} \mrelop' Fq,
				\end{equation}
				which is the appropriate relation \eqref{eq:processing_weak} in $\cC'$.
				Similarly, applying $F$ to both sides of \cref{eq:processing_hit} gives the relevant property in the image.
				Since $F$ preserves the functionality of the reduction $r^*$, we find that $((Fq)_*,(Fr)^*)$ is indeed a simulator morphism of type $F s_1 \to F s_2$.
			
			\item[\ref{it:functor_id}] 
				The identity simulator morphism $\id_s \colon s \to s$ is given by the pair 
				\begin{equation}
					\Bigl( \bigl( \discard_P \otimes \id_{T \otimes C} \bigr)_*, \ \id_P^* \Bigr).
				\end{equation}
				Applying $F$ to its components, we obtain 
				\begin{equation}
					\Bigl( \bigl( \discard_{FP} \otimes \id_{T' \otimes C'} \bigr)_*, \ \id_{FP}^* \Bigr),
				\end{equation}
				which is the identity simulator morphism $\id_{Fs}$.
			
			\item[\ref{it:functor_seq}] 
				Let $(q_{1*}, r_1^*) \colon s_1 \to s_2$ and $(q_{2*},  r_2^*) \colon s_2 \to s_3$ be two morphisms of simulators. 
				Their composition is given by $(\tilde q_*, (r_1\circ r_2)^*)$, where $\tilde q$ is given by \cref{eq:processing_composition}. 
				Since $F$ preserves string diagrams, it also commutes with composition of morphisms. \qedhere
		\end{itemize}
	\end{proof}
	\begin{remark}
		Note that when $F\colon \cC \to \cC'$ and $G \colon \cC' \to \cC''$ then it additionally holds that $\mathrm{Sim}(G \circ F) = \mathrm{Sim}(G) \circ \mathrm{Sim}(F)$. In other words, mapping target--context categories $\cC$ to their corresponding simulator categories $\mathrm{Sim}(\cC)$ and target--context functors $F$ to their induced functors $\mathrm{Sim}(F)$ 
		is itself functorial.
	\end{remark}

	\begin{theorem}\label{thm:TC_functor_universality}
		A target--context functor $F\colon \mathcal{A} \to \mathcal{A}'$ maps universal simulators to universal simulators and singleton simulators to singleton simulators.
	\end{theorem}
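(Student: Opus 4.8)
The plan is to unwind the definitions of \emph{universal} and \emph{singleton} simulator and transport them across $F$ using the three defining properties of a target--context functor (\cref{def:TCfunctor}): $F$ is strong gs-monoidal, so it commutes with $\otimes$, $\id$, $\cop$ and $\discard$ (up to coherence isomorphisms, which I suppress exactly as in the proof of \cref{thm:simfun}) and preserves functional and total morphisms; it satisfies $F(T)=T'$ and $F(C)=C'$; and it reflects the ambient relations in the sense that $f \mrelop g$ implies $F(f) \mrelop' F(g)$.

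First I would record, following the proof of \cref{thm:simfun}, that $Fs$ is again a simulator whose compiler is $F(s_T)$: applying $F$ to the splitting \eqref{eq:simulator} of $s$ yields the splitting of $Fs$, and $F(s_T)$ is functional since $F$ preserves functionality. Moreover, applying $F$ to the domain equations \eqref{eq:simulator_dom} for $s$ gives the corresponding equations for $Fs$ with $F(s_T)$ and $F(s_C)$ in place of $s_T$ and $s_C$; by the uniqueness of the distinguished decomposition noted in the remark after \cref{def:simulator}, the compiler of $Fs$ is therefore exactly $F(s_T)$, and likewise its context reduction is $F(s_C)$. This identification of $(Fs)_T$ with $F(s_T)$ is the only point that needs a moment's care, and it is the main (mild) obstacle in the argument.

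For universality, suppose $s$ is universal, witnessed by a functional $r\colon T\to P$ with the lax reduction relation \eqref{eq:univ_sim_def_2} from $s$ to $\id$. I would apply $F$ to both sides of that relation. On the left-hand side $F$ turns the string diagram built from $s$, $r$, and the copy/identity morphisms on $P$ and $C$ into the same diagram built from $Fs$, $Fr$, and the copy/identity morphisms on $FP$ and $FC$, because $F$ is strong gs-monoidal; on the right-hand side $F(\id_{T\otimes C}) = \id_{T'\otimes C'}$, which is the trivial simulator of $\cC'$ since $F(T)=T'$ and $F(C)=C'$. As $F$ preserves the ambient relation, the resulting relation holds in $\cC'$, and $Fr\colon T'\to FP$ is functional; hence $(Fr)^*$ is a lax reduction from $Fs$ to the trivial simulator, so $Fs$ is universal by \cref{def:univ sim}.

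For the singleton case, suppose $s_T = t\circ \discard_P$ for a functional state $t\colon I\to T$. Using that the compiler of $Fs$ is $F(s_T)$ together with $F(\discard_P)=\discard_{FP}$ and $F(I)=I'$, I obtain $(Fs)_T = F(t)\circ \discard_{FP}$ with $F(t)\colon I'\to T'$ a functional state, which is precisely the defining condition of \cref{def:singleton_sim} for $Fs$ to be a singleton simulator. This completes the proof; the whole argument is essentially bookkeeping once the identification of compilers under $F$ from the first step is in hand.
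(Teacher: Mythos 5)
Your proposal is correct and follows essentially the same route as the paper: apply $F$ to the lax-reduction relation and use preservation of the ambient relation and of functional morphisms for universality, and use strong gs-monoidality ($F(\discard_P)=\discard_{FP}$, $F(I)=I'$) for the singleton case. Your extra care in identifying the compiler of $Fs$ with $F(s_T)$ via the uniqueness of the distinguished decomposition is a reasonable refinement of a point the paper delegates to the proof of \cref{thm:simfun}; one small slip is that you call the condition $f \mrelop g \implies F(f) \mrelop' F(g)$ ``reflecting'' the ambient relation, whereas it is preservation, but the statement you actually use is the correct one.
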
	
	\begin{proof}
		Let $s \colon P \otimes C \to T \otimes C$ be a universal simulator, so that there exists a functional morphism ${r \colon T \to P}$ such that 
		\begin{equation}
			s \circ (r \otimes \id_C) \mrelop \id_{T \otimes C}
		\end{equation}
		holds.
		Using the assumption that $F$ preserves the ambient relation $\mrel$, we find
		\begin{equation}
			Fs \circ (Fr \otimes \id_{C'}) \mrelop' \id_{T' \otimes C'}.
		\end{equation}
		Since $F$ preserves functional morphisms, we conclude that $Fs$ is universal with the reduction given by $Fr$.
	
		On the other hand, let $s$ be a singleton simulator, that is, one whose compiler satisfies ${s_T = t \circ \discard_P}$ for a state $t \colon I\to T$.
		Since $F$ is strong gs-monoidal, we have 
		\begin{equation}
			Fs_T = F(t \circ \discard_P) = Ft \circ \discard_{FP}
		\end{equation}
		where $Ft$ is a state of type $I'\to FT$. 
		Hence, $Fs_T$ is a singelton simulator.
	\end{proof}
	
	Consider the situation of two instances of the framework with the same ambient category, same targets and same contexts, but distinct behavior relations.
	Then the identity functor is a target--context functor if and only if the domain behavior relation implies the codomain one, i.e.\ if and only if we have\footnotemark{}
	\footnotetext{Since there is a behavioral relation for each $A \in \cC$, we need to interpret the inclusion \eqref{eq:beh_enlarge} also as being quantified over all objects of $\cC$.}%
	\begin{equation}\label{eq:beh_enlarge}
		\mrel \, \subseteq \, \mrel'.
	\end{equation}
	In particular, enlarging behavioral relations preserves universality of simulators by \cref{thm:TC_functor_universality}. 
	
	\begin{example}[Modifying the behavioral relation for Turing machines]\label{ex:TM_mrel_functor}
		Let us exemplify this with Turing machines.
		In \cref{ex:TM}, we define two ambient categories for which the identity functor $\cat{Tur} \to \cat{Tur}^\intr$ is a target--context functor because the respective behavioral relations indeed satisfy condition \eqref{eq:beh_enlarge}.
		By \cref{thm:TC_functor_universality}, every universal simulator in $\cat{Tur}$ is also a universal simulator in $\cat{Tur}^\intr$.  One can easily construct simulators that are universal in $\cat{Tur}^\intr$ but not in $\cat{Tur}$. 
	\end{example}
	
\section{Summary, Conclusions \& Outlook}\label{sec:Conclusions}
		
		In this work we provide a categorical framework for studying universality and related concepts. 
		It comes in two main variants. 
		Let us summarize the key results of our work and highlight which variant they rely on, respectively.

		\parlabel{Results applicable to any target--context category}
		\Cref{def:instance} of a target--context category sets up the minimal structure we need for our analysis. 
		It includes a specification of the morphisms of the ambient category $\cC$, an ambient relation $\mrel$, and objects $T$ and $C$ of targets and contexts.
		Given this structure, we define simulators (\cref{def:simulator}), and their universality (\cref{def:univ sim}).
		In \cref{thm:nogo:cor}, we provide a recipe to generate necessary conditions for universality of a given simulator.
		These are used to show that there is no universal spin model of finite cardinality (\cref{ex:nogo_spin}).
		In order to compare simulators via the so-called parsimony ordering (\cref{def:simulator_strength}), we propose the notion of a simulator category (\cref{sec:simulator_category}), whose morphisms model ways to transform one simulator into another.
		\Cref{thm:morph_stronger,thm:s 2 id} then give conditions that allow one to decide whether such a morphism exists or not.
		In \cref{ex:trivial2universal,ex:universal2trivial}, we use them to show that the simulator associated to a universal Turing machine (\cref{ex:TM univ sim}) is strictly more parsimonious than the trivial simulator.
		Finally, we define target--context functors and show that they preserve simulator categories and universal simulators in \cref{sec:Functors}.
			
		\parlabel{Results applicable to target--context categories with intrinsic behaviors}
		Provided that the target--context category in question additionally has a behavior structure in accordance with \cref{def:instance_beh}, and in particular one that is intrinsic (\cref{def:intrinsic_beh}), we can study fixed point theorems and unreachability (\cref{def:unreachability}).
		 Specifically, \cref{thm:Lawvere} generalizes Lawvere's original fixed point theorem to our framework.
		 One of its consequences is the fact that there exists no total universal Turing machine (\cref{ex:no_total_UTM}).
		 In \cref{thm:fix point sim}, we also prove that universal simulators allow one to further extend this result to the case of two non-isomorphic inputs.
		 To explicate its potential use, we show in \cref{ex:Cantor} that the non-existsence of certain universal simulators  is equivalent to the conclusion of Cantor's Theorem, namely that the power set always has a larger cardinality than the original set.

		\parlabel{Instances of our framework}
		We illustrate our constructions and results with a wealth of examples.
		The two most important examples are those of Turing machine universality and of universal spin models.
		Besides the aforementioned concrete results, expressing spin systems in the framework helps us understand the meaning of spin model universality.
		
		Others mentioned throughout the text include completeness for complexity classes (\cref{ex:NP_complete}), cofinal subsets of preordered sets (\cref{ex:cofinal}), dense subsets of topological spaces (\cref{ex:dense}), and universal sets (\cref{ex:universal_set}).
		Apart from completeness for complexity classes, all examples come with an intrinsic behavior structure. 
		Notably, however, among the two ambient categories for Turing machines that we present, only the one with more permissive ambient relation $\mrel$ has intrinsic behaviors.
		See \cref{rem:TM_exampleS_diff} for more details.
		
		Besides the above, we expect that there are many relevant instances of our framework, whose details are yet to be worked out:
		\begin{enumerate}
			\item \emph{Universal graphs:} A graph $G$ within a given class is called a universal graph if all graphs from this class can be embedded into $G$ (\cite{Ra64}).
				Distinct notions of universality arise for different types of embeddings.
				In the simplest case, a universal graph contains all other ones as subgraphs. 
				
				\item \emph{Neural networks:} In order to construct complex functions from basic building blocks, one can use neural networks.
					By the universal approximation theorem \cite{Cy89,Ho91b,Cs01}, feed-forward neural networks with one hidden layer can approximate any continuous function to arbitrary precision.
					Analogous results hold for restricted Boltzmann machines \cite{Le07}. 
					As such, these neural network architectures are universal for the task of expressing an unknown function in a compositional manner.
					The idea of funtional completeness for categories of polynomial circuits \cite[section 5]{wilson2023axiomatic} provides another perspective on this example and could perhaps be formulated in our framework.
					See also \cite{De20d} for similarities between this type of universality, universal spin models and universal Turing machines. 

				\item \emph{Universal grammar:} In order to explain that an individual is capable of learning any natural language, Chomsky proposed the idea of a universal grammar as a `meta-grammar' with unfixed parameters \cite{Ch65} (see also \cite{Hi13b}). 
					``Programming'' (in the sense of our discussion from the \nameref{sec:Introduction}) is synonymous with fixing particular parameter values, upon which the universal grammar specializes to the grammar of any natural language. 
					
				\item \emph{Generating sets:} Every algebraic structure gives rise to the notion of a generating set. 
					For instance any set of vectors $S \coloneqq \{v_1, \ldots , v_k\}$ whose span is the whole vector space itself is termed a generating set of this vector space.
					Generating sets can be seen as universal in the sense that every vector $v$ is a linear combination $\lambda_1 \cdot v_1+ \ldots + \lambda_k \cdot v_k$ for some coefficients, which we think of as the program for $v$.
					
				\item \emph{Universal quantum spin models:} While \cref{sec:spinmodel} captures universality for classical spin models, there are also notions of universality for quantum spin models \cite{Cu17,Zh21}. 
		\end{enumerate}
		While all of the above can be instantiated in uninsightful ways{\,\textemdash\,}e.g.\ as a special case of \cref{ex:cofinal} of a cofinal subset (or, specifically, the top element) of a preordered set{\,\textemdash\,}the challenge is to find more informative target--context categories.
		Not only should these notions of universality correspond to universal simulators therein, but one should also be able to recover relevant information about \emph{the way in which they are universal} from the data provided, such as the context reduction $s_C$ or the lax reduction $r^*$ to the trivial simulator.
		
		For instance, one could also think of a universal spin model in the sense of \cite{De16b} as a cofinal subset of the set of all spin systems, given an appropriate preorder relation.
		Two spin systems would be related if there is a spin system simulation \cite[definition 3]{De16b} from one to the other.
		In our treatment of spin model universality in \cref{sec:spinmodel}, we go beyond such austere description.
		The context reduction of a 2D Ising model simulator specifies, in particular, how to relate the respective spin configurations, which is part of the data needed to identify a simulation of spin systems.
		Nevertheless, even our approach here leaves much to be desired.
		For example, a generic context reduction does not satisfy some of desirable properties of a spin system simulation, such as the preservation of the hypergraph structure at the level of spin configurations \cite[\mbox{definition 3 (ii)}]{De16b}.
		This is precisely the reason why a generic universal simulator in the target--context category $\cat{SpinSys}$ need not correspond to a universal spin model in the sense of \cite{De16b}.
		
		\parlabel{Potential further development of the framework}
		A related open problem is to establish desiderata for distinguishing between trivial and insightful ways to instantiate examples of universality in our framework.
		One is that simulators (of particular type) can be shown to be universal \emph{if and only if} an independently motivated notion of universality is present in the example.
		This is the case for the ambient category $\cat{Tur}$ of Turing machines, but as we argue above, it is not true for the ambient category $\cat{SpinSys}$ of spin systems.
		Certainly, an insightful instantiation would lead to an improved understanding of the situation at hand.
		In this sense, our present spin system example is already non-trivial.

		Additionally, we would like to understand the parsimony preorder of universal simulators better, both in specific instances and in the abstract.
		This preorder identifies ``non-trivial'' universal simulators as those universal simulators that are strictly more parsimonious than the trivial simulator.
		One may wish to characterize which target--context categories feature such non-trivial universality.
		\Cref{thm:morph_stronger,thm:s 2 id} serve as first steps in this direction.
		
		We envision a generalization of the framework with unfixed targets $T$ and contexts $C$. 
		This would allow one to describe Turing completeness via the corresponding notion of universal simulators. 
		
		\parlabel{Universality beyond our framework}
		On a broader and less precise level, we would like to understand whether our framework relates to notions of universality that are not as easily expressible in the mathematical language used here.
		One of them is the idea of \emph{xenobots} \cite{blackiston2021cellular}, which are biological systems that are capable of programmable biosynthesis. 
		Controlling (i.e.\  programming) their environment allows one to implement a range of biological process that are not accessible by such a relatively simple system under normal circumstances.
		A \newemph{chemputer} \cite{gromski2020universal}{\,\textemdash\,}on the other hand{\,\textemdash\,}is a programmable modular system capable of chemical synthesis. 
		Universals in biology \cite{Go17}, complex systems \cite{So00}, and statistical physics refer to \newemph{emergent properties}, which can serve as universal solutions (see the \nameref{sec:Introduction}) to the problem of explaining the phenomena of specific systems enjoying such emergence.
		In this sense, instantiating these examples in our framework would amount to expressing universal explanations as the universality of certain simulators. 
		A general version of this problem is known as the \newemph{problem of universals in metaphysics} \cite{Ca10d,Lo17}.

\appendix

\section{Background on Category Theory}
\label{sec:category_theory}

	\subsection{Basic Definitions}
	We start by providing some basic definitions. 
	For an introduction to category theory for readers with some background in mathematics, we recommend \cite{perrone2024starting}.
	
	\begin{definition}\label{def:category}
		\cite[Section 1.1]{perrone2024starting}
		A \textbf{category} $\cat{C}$ consists of a collection of objects and a collection of morphisms (or arrows) with the following properties:
		\begin{enumerate}[label=(\roman*)]
			\item For every morphism $f$ there are two objects $A$ and $B$ called the domain and codomain of $f$, in which case we express this information as $f \colon  A \to B$ and refer to $A \to B$ as the type of $f$.
			\item For every object $A$, there is a morphism $\id_A \colon  A \to A$ called the identity morphism (on $A$).
			\item For every pair of morphisms $f \colon  A \to B$ and $g \colon  B \to C$, there is a morphism $g\circ f \colon  A \to C$ called the sequential composite of $f$ and $g$.
			\item Composing every morphism $f \colon  A \to B$, with identities gives $f$, i.e.\ $\id_B\circ f$ and $f\circ \id_A$ are both equal to $f$.
			\item Composition is associative, i.e.\ for all morphisms $f \colon  A \to B$, $g \colon  B \to C$ and $h \colon  C \to D$, we have $(h\circ g)\circ f = h\circ (g\circ f)$.
		\end{enumerate}
			\end{definition}
		In words, the objects can be thought of as types (in the computer science way) and morphisms as functions mapping from one type to another. 
		Composition of morphisms is hence only possible when the output type of one morphism is the input type of the other.
		Graphically, a (small) category can be imagined as a directed graph, where the objects are nodes and the morphisms are directed edges.
		
		We depict morphisms of categories by string diagrams.
		Each morphism $f \colon A \to X$ corresponds to a box such as
		\begin{equation}
			\tikzfig{box}
		\end{equation}
		while each object (and thus also each identity morphism) is a wire carrying a label with a name of the object.
		For more details on string diagrams, see \cite{Coecke2009,piedeleu2023introduction}.

		Given two objects $A$ and $B$, we denote the set of morphisms from $A$ to $B$ (i.e.\ the set of morphisms with domain $A$ and codomain $B$) by $\cat{C}(A,B)$. 
		This set is also called the hom-set of $A$ and $B$.
		
		A \emph{functor} is a map between two categories that preserves morphism composition. 
	\begin{definition}\label{def:functor}
		\cite[Section 1.3]{perrone2024starting}
		A \textbf{functor} $F \colon \cat{C} \to \cat{D}$ consists of 
		\begin{enumerate}
			\item a map on objects $A \in Ob(\cat{C}) \mapsto FA \in Ob(\cat{D})$ and
			\item a map on morphisms $(f \colon  A \to B )\mapsto (Ff \colon  FA \to FB)$,
		\end{enumerate}
		such that for any objects $A$, $B$, and $C$ and any morphisms $f \colon A\to B$, $g \colon B\to C$: 
		\begin{enumerate}
			\item the identity morphism $\id_A$ is mapped to $F\id_A=\id_{FA}$ and
			\item the composition $g\circ f$ is mapped to $F(g\circ f)=Fg\circ Ff$.
		\end{enumerate}

	\end{definition}
	\begin{definition}\label{def:subcategory}
		\cite[Section 1.5]{perrone2024starting}
		A \textbf{subcategory} $\cat{D}$ of a category $\cat{C}$ consists of a subcollection of objects and a subcollection of morphisms such that $\cat{D}$ is itself a category. 
	\end{definition}
	
	\subsection{Categories With Copying and Discarding}
	
	The definition of a gs-monoidal category is described in \cref{sec:ambient}, for more details we recommend \cite[section 4.2.4]{piedeleu2023introduction}.

	\begin{definition}[\cite{MacLane1971}]
	\label{def:monoidal_functor}\		
		A functor $F \colon \cat{C} \to \cat{D}$ between two monoidal categories $\cat{C}$ and $\cat{D}$ is called \textbf{lax monoidal} if there exists a morphism 
		\begin{equation}
			\psi_0 \colon I_\cat{D} \to F(I_\cat{C})
		\end{equation}
		and a natural transformation
		\begin{equation}
			\psi_{A,B} \colon F(A) \otimes F(B) \to F(A \otimes B)
		\end{equation}
		such that the following diagrams commute:
		\begin{enumerate}
			\item For all objects $A, B, C \in \cat{C}$ 
				\begin{equation}
					\begin{tikzcd}
						(F(A)\otimes F(B)) \otimes F(C) \arrow[rr, "{\alpha_{F(A),F(B),F(C)}}"] \arrow[d, "{\psi_{A,B}\otimes \id_{F(C)}}"'] &  & F(A)\otimes (F(B) \otimes F(C)) \arrow[d, "{\id_{F(A)}\otimes \psi_{B,C}}"] \\
						F(A\otimes B)\otimes F(C) \arrow[d, "{\psi_{A\otimes B, C}}"']                                                       &  & F(A)\otimes F(B\otimes C) \arrow[d, "{\psi_{A,B\otimes C}}"]                \\
						F((A\otimes B)\otimes C) \arrow[rr, "{F(\alpha_{A,B,C})}"']                                                           &  & F(A\otimes (B\otimes C))                                                   
					\end{tikzcd}
				\end{equation}
				where $\alpha$ are the associators on $\cat{C}$ and $\cat{D}$.
			\item For all objects $A \in \cat{C}$
				\begin{equation}
					\begin{tikzcd}
						I_\cat{D}\otimes F(A) \arrow[r, "\psi_0\otimes \id_{F(A)}"] \arrow[d, "\ell_{F(A)}"'] & F(I_\cat{C})\otimes F(A) \arrow[d, "{\psi_{I_\cat{C},A}}"] \arrow[rr, "\mathrm{and}", phantom, shift right=9] &  & F(A)\otimes I_\cat{D} \arrow[d, "r_{F(A)}"] \arrow[r, "\id_{F(A)}\otimes \psi_0"] & F(A) \otimes F(I_\cat{C}) \arrow[d, "{\psi_{A, I_\cat{C}}}"] \\
						F(A)                                                                                  & F(I_\cat{C}\otimes A) \arrow[l, "F(\ell_A)"]                                                                  &  & F(A)                                                                            & F(A\otimes I_\cat{A}) \arrow[l, "F(r_A)"]                   
					\end{tikzcd}
				\end{equation}
				where $\ell$ and $r$ are the left and right unitors on $\cat{C}$ and $\cat{D}$.
		\end{enumerate}
		If $\cat{C}$ and $\cat{D}$ are symmetric monoidal categories, then a lax monoidal functor $F$ is called \textbf{lax symmetric monoidal} if the following diagrams commute for any objects $A,B$ of $\cat{C}$:
		\begin{equation}
			\begin{tikzcd}
				F(A)\otimes F(B) \arrow[d, "{\psi_{A,B}}"'] \arrow[r, "{\beta_{F(A), F(B)}}"] & F(B)\otimes F(A) \arrow[d, "{\psi_{B,A}}"] \\
				F(A\otimes B)  \arrow[r, "{F(\beta_{A,B})}"']                                 & F(B\otimes A)                             
			\end{tikzcd}
		\end{equation}
		where $\beta$ is the braiding map on $\cat{C}$ and $\cat{D}$.
	\end{definition}
	
	\begin{definition}\cite{fritz2022lax}
	\label{def:gs-monoidal_functor}
		A lax symmetric monoidal functor $F \colon \cat{C} \to \cat{D}$ between two gs-monoidal categories $\cat{C}$ and $\cat{D}$ is called \textbf{lax gs-monoidal} if the following diagrams commute for any object $A$ of $\cat{C}$:
		\begin{equation}
			\begin{tikzcd}
				F(A) \arrow[rd, "\cop_{F(A)}"'] \arrow[rr, "F(\cop_A)"] &                                              & F(A\otimes A) &  & F(A) \arrow[rr, "F(\discard_A)"] \arrow[rd, "\discard_{F(A)}"'] &                         & F(I_\cat{C}) \\
																	  & F(A)\otimes F(A) \arrow[ru, "{\psi_{A,A}}"'] &               &  &                                                                 & I_\cat{D} \arrow[ru, "\psi_0"'] &     
				\end{tikzcd}
		\end{equation}
		where $I_\cat{C}, I_\cat{D}$ are the monoidal units on $\cat{C}$ and $\cat{D}$, respectively and where $\psi_{A,A}, \psi_0$ are the lax monoidal structure morphisms.
	\end{definition}
	
	\begin{proposition}\label{prop:sha_dom}
		Let $F \colon \cat{C} \to \cat{D}$ be a lax gs-monoidal functor.
		Then for every morphism $g$ in $\cat{C}$ we have
		\begin{equation}\label{eq:sha_dom}
			F \bigl( \dom(g) \bigr) = \dom \bigl( F(g) \bigr)
		\end{equation}
	\end{proposition}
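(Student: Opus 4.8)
The plan is to unfold the definition of $\dom$ from \cref{def:domain} on both sides of \eqref{eq:sha_dom} and then push $F$ through the string-diagram expression using the structural maps $\psi_{A,A}$ and $\psi_0$ that come with a lax gs-monoidal functor. Recall that for $g \colon A \to X$, the domain is $\dom(g) = (\id_A \otimes \discard_X) \circ (\id_A \otimes g) \circ \cop_A$, i.e.\ copy $A$, apply $g$ on one branch, discard the result, and keep the other branch. So the first step is to write $F(\dom(g))$ as $F\bigl((\id_A \otimes \discard_X) \circ (\id_A \otimes g) \circ \cop_A\bigr)$ and use functoriality to distribute $F$ over the sequential composites, obtaining $F(\id_A \otimes \discard_X) \circ F(\id_A \otimes g) \circ F(\cop_A)$.

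Next I would replace each factor using the coherence conditions. By the gs-monoidal functor axioms (\cref{def:gs-monoidal_functor}), $F(\cop_A) = \psi_{A,A} \circ \cop_{F(A)}$ and $F(\discard_X) = \psi_0 \circ \discard_{F(X)}$ (the latter composed appropriately after identifying $F(I_{\cat C})$ targets via $\psi_0$). For the tensor factors $F(\id_A \otimes g)$ and $F(\id_A \otimes \discard_X)$, I would use lax monoidality: the natural transformation $\psi_{-,-}$ gives $\psi_{A,X} \circ (\id_{F(A)} \otimes F(g)) = F(\id_A \otimes g) \circ \psi_{A,A}$ and similarly for the discard factor, so that the $\psi$'s can be slid past the parallel components. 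Chaining these rewrites, the $\psi_{A,X}$ and its inverse-direction partner that appear in the middle cancel against each other (this is where naturality of $\psi$ does the real work), and the trailing $\psi_0 \colon I_{\cat D} \to F(I_{\cat C})$ gets absorbed because $\discard_{F(X)} \colon F(X) \to I_{\cat D}$ followed by $\psi_0$ then... here one uses that in the expression for $\dom$, the discard branch is itself discarded into the tensor unit, and $\psi_0$ composed with the unit structure is coherent. After all cancellations one is left with $(\id_{F(A)} \otimes \discard_{F(X)}) \circ (\id_{F(A)} \otimes F(g)) \circ \cop_{F(A)}$, which is exactly $\dom(F(g))$ by \cref{def:domain} applied in $\cat D$.

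The main obstacle I anticipate is bookkeeping the associator and unitor coherence isomorphisms carefully: $F$ is only lax monoidal, so $F(A \otimes X)$ and $F(A) \otimes F(X)$ are genuinely different objects related by $\psi_{A,X}$, and the diagrams in \cref{def:gs-monoidal_functor,def:monoidal_functor} must be invoked in the right order to guarantee that every $\psi$ introduced is later either cancelled or matched by $\psi_0$. A clean way to organize this is to first prove the two auxiliary identities $F(\cop_A) = \psi_{A,A} \circ \cop_{F(A)}$ and $\discard_{F(A)} = F(\discard_A)$ composed with the canonical iso $F(I_{\cat C}) \cong I_{\cat D}$ (using $\discard_{I_{\cat C}} = \id$ and $\discard_I = \id$ in both categories), and then substitute. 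Since every coherence cell here is an isomorphism and the gs-monoidal structure maps are natural, the computation is forced; the only real care needed is not to drop a $\psi$ along the way. If one is willing to work up to the coherence isomorphisms (as the paper does elsewhere, cf.\ the proof of \cref{thm:simfun} where ``we omit any coherence isomorphisms''), the argument collapses to: $F$ preserves $\cop$, $\discard$, and $\circ$, hence it preserves any morphism built from them, and $\dom$ is such a morphism.
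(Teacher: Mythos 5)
Your proposal is correct and follows essentially the same route as the paper: unfold $\dom$ via \cref{def:domain}, apply functoriality, and use naturality of $\psi$ together with the gs-monoidal coherences for $\cop$ and $\discard$ to slide the structure maps out; the paper merely packages this equational rewriting as a single pasted commutative diagram whose inner cells are exactly the naturality squares and coherence triangles you invoke. The bookkeeping of $\psi_{A,I}$ and $\psi_0$ at the end is handled there by the unitality axiom of lax monoidal functors, precisely as you anticipate.
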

	Here, given $f \colon A \to X$, domain of $f$ (\cref{def:domain}) is the morphism $\dom(f) \colon A \to A$ given by
	\begin{equation}
		\tikzfig{domain}
	\end{equation}
	\begin{proof} 
		We denote with an overline objects and morphisms in the image of $F$, e.g.\ $\overline{A}\coloneqq F(A)$. 
		Using the definition of domain in gs-monoidal categories, we can write the left hand side as the top path in the following commutative diagram:
		\begin{equation}
			\begin{tikzcd}
				{\overline{A}} && {\overline{A \otimes A}} && {\overline{A \otimes B}} \\
				\\
				{\overline{A} \otimes \overline{A}} && {\overline{A} \otimes \overline{B}} && {\overline{A} \otimes \overline{I}} && {\overline{A \otimes I}} \\
				\\
				&&&& {\overline{A} \otimes I}
				\arrow["{\overline{\cop_A}}", from=1-1, to=1-3]
				\arrow["{\overline{A \otimes g}}", from=1-3, to=1-5]
				\arrow["{\overline{A \otimes \discard_B}}", from=1-5, to=3-7]
				\arrow["{\cop_{\overline{A}}}"', from=1-1, to=3-1]
				\arrow["{\psi_{A,A}}"', from=3-1, to=1-3]
				\arrow["{\overline{A} \otimes \overline{g}}"', from=3-1, to=3-3]
				\arrow["{\psi_{A,B}}"', from=3-3, to=1-5]
				\arrow["{\overline{A} \otimes \overline{\discard_B}}"', from=3-3, to=3-5]
				\arrow["{\psi_{A,I}}"', from=3-5, to=3-7]
				\arrow["{\overline{A} \otimes \discard_{\overline{B}}}"', from=3-3, to=5-5]
				\arrow["\cong"', from=5-5, to=3-7]
			\end{tikzcd}
		\end{equation}
		The inner cells commute by the assumed coherences and we can read off the bottom path as the right hand side of \cref{eq:sha_dom}.
	\end{proof}

\section{Supplementary Material on gs-Monoidal Categories}
\label{sec:additional_def}

	\begin{lemma}[Functionality of domains]\label{lem:dom_fun}
		Let $f \colon A \to X$ be a quasi-total morphism in a gs-monoidal category. 
		Then its domain $\dom(f) \colon A \to A$ is a functional morphism.
	\end{lemma}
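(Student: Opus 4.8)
The plan is to reduce everything to elementary manipulations with the comonoid structure. Write $d \coloneqq \dom(f) \colon A \to A$ and set $e \coloneqq \discard_X \circ f \colon A \to I$, so that by \cref{def:domain} we have $d = (\id_A \otimes e) \circ \cop_A$. The argument splits into three steps, only the second of which uses that $f$ is normalized.

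First I would record that any morphism of the shape $d = (\id_A \otimes e)\circ \cop_A$ is \emph{one-sidedly copyable}, meaning
\[
	\cop_A \circ d \;=\; (d \otimes \id_A) \circ \cop_A \;=\; (\id_A \otimes d) \circ \cop_A .
\]
To prove the first equality one expands $d$, uses $\cop_A \circ (\id_A \otimes e) = \cop_A \otimes e$ (bifunctoriality of $\otimes$ together with the unit coherence $A \otimes I = A$), and rewrites both sides as the application of ``$e$ on a single leg'' to the threefold copy $\cop^3_A \coloneqq (\cop_A \otimes \id_A)\circ\cop_A$; the two expressions then differ only by the symmetry $\id_A \otimes \sigma$ on the last two legs, which fixes $\cop^3_A$ by coassociativity and cocommutativity of $\cop_A$. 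The second equality follows from the first using cocommutativity and naturality of the symmetry $\sigma$. This step does not use normalization.

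Second, I would extract the sole consequence of normalization that is needed. Post-composing the hypothesis $f \circ \dom(f) = f$ with $\discard_X$ gives $e \circ d = e$; since $e \circ (\id_A \otimes e) = e \otimes e$ (again bifunctoriality of $\otimes$, now with $I \otimes I = I$), this unpacks to the \emph{merging identity}
\[
	(e \otimes e) \circ \cop_A \;=\; e ,
\]
which I refer to below as $(\star)$. Using $(\star)$ one then checks that $d$ is idempotent, $d \circ d = d$: expanding $d \circ d$ and commuting $\cop_A$ past $(\id_A \otimes e)$ as above yields $d\circ d = (\id_A \otimes e \otimes e)\circ\cop^3_A$, and regrouping $\cop^3_A$ as $(\id_A \otimes \cop_A)\circ\cop_A$ (coassociativity) turns the two trailing $e$'s into $(e \otimes e)\circ\cop_A$, which collapses to a single $e$ by $(\star)$, leaving exactly $d$.

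Finally, combining the two steps gives functionality $\cop_A \circ d = (d \otimes d)\circ\cop_A$ in the sense of \cref{def:functional}:
\begin{align*}
	(d \otimes d)\circ\cop_A
	&= (d \otimes \id_A)\circ(\id_A \otimes d)\circ\cop_A
	= (d \otimes \id_A)\circ\cop_A \circ d \\
	&= (d \otimes \id_A)\circ(d \otimes \id_A)\circ\cop_A
	= \bigl((d\circ d)\otimes\id_A\bigr)\circ\cop_A
	= (d \otimes \id_A)\circ\cop_A
	= \cop_A \circ d ,
\end{align*}
where the second, third and last equalities are instances of one-sided copyability and the fifth uses idempotency. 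I expect the only genuine difficulty to be bookkeeping: carefully tracking tensor factors and unit coherences through the iterated copies (equivalently, drawing the string diagrams), and stating explicitly that $\cop^3_A$ is invariant under permutations of its output legs — a standard but worth-spelling-out consequence of the comonoid axioms.
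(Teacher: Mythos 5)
Your proof is correct and uses the same ingredients as the paper's: the only input from normalization is the merging identity $(\discard_X f \otimes \discard_X f)\circ\cop_A = \discard_X f$, and the rest is coassociativity/cocommutativity of copying. The paper compresses this into a two-step string-diagram chase, whereas you factor it through the intermediate lemmas of one-sided copyability and idempotency of $\dom(f)$; the content is the same.
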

	\begin{proof}
		We can show this directly from the definition of functional morphisms:
		\begin{equation}\label{eq:dom_fun}
			\tikzfig{dom_fun}
		\end{equation}
		The first equation holds because $f$ is quasi-total and the second used just the associativity of copying.
	\end{proof}
	
	The following definitions are needed only needed for the construction of the pointed shadow functor in \cref{sec:pointed_shadow}. 
	We start with notions of causality and supports, which have been defined elsewhere for Markov categories (i.e.\ gs-monoidal categories in which all morphisms are total).
		
	\begin{definition}[{\cite[definition 11.31]{fritz2019synthetic}}]\label{def:causality}
		A gs-monoidal category $\cat{C}$ is \textbf{causal} if whenever ${f \colon A \to W}$, $g \colon W \to X$ and $h_1, h_2 \colon X \to Y$ satisfy
		\begin{equation}\label{eq:causal1}
			\tikzfig{causal1}
		\end{equation}
		then the stronger equation
		\begin{equation}\label{eq:causal2}
			\tikzfig{causal2}
		\end{equation} 
		also holds.
	\end{definition}
	
	\begin{definition}\label{def:ac}
		Let $f \colon A \to X$ and $g \colon Z \to X$ be morphisms in a gs-monoidal category. 
		We say that \textbf{$\bm{f}$ is absolutely continuous with respect to $\bm{g}$}, denoted $f \ll g$, if the implication
		\begin{equation}\label{eq:abs_cont}
			\tikzfig{abs_cont}
		\end{equation}
		holds for arbitrary objects $Y$, $W$ and any two parallel morphisms $h, h' \colon W \otimes X \to Y$.
	\end{definition}
	
	\begin{definition}[{\cite{fritz2023supports}}]\label{def:supp}
		Let $\cC$ be a gs-monoidal category.
		A \textbf{support} for $f \colon A \to X$ is given by an object $\Supp{}$ and a deterministic morphism $\suppinc{} \colon \Supp{} \to X$, such that we have
		\begin{equation}\label{eq:supp_asfaithful}
			\tikzfig{supp_factor_1}
		\end{equation}
		for all $h, h' \colon W \otimes X \to Y$ with arbitrary $W$ and $Y$.
	\end{definition}
	
	\begin{definition}\label{def:point_liftings}
		We say that $\cC$ has \textbf{point liftings} if for all $f \colon A \to X$ and all $x \in \cC_\det (I,X)$ satisfying $x \ll f$, there exists an $a \in \cC_\det (I,A)$ satisfying $x \ll f \circ a$.
	\end{definition}

	\begin{theorem}[Pointed shadow, proof can be found in {\cite{fritz2023supports}}]\label{thm:point_sha}
		Let $\cC$ be a causal gs-monoidal category with supports and point liftings.
		Then we have a functor $\Upsilon \colon \cC \to \cat{Rel}$ defined on objects as  
		\begin{align}\label{eq:point_sha_obj}
			\Upsilon(A) \coloneqq \cC_\det (I,A)  
		\end{align}
		and on morphisms as $f \mapsto \Upsilon(f)$ where $\Upsilon(f)$ is a relation in $\cat{Rel}(\Upsilon(A), \Upsilon(X))$ defined via
		\begin{equation}\label{eq:point_sha_morph}
			\Upsilon(f) \, (a) \coloneqq \Set*[\big]{ x \in \cC_\det(I,X)  \given  x \ll f \circ a }.
		\end{equation}
		Furthermore, $\Upsilon$ is a strong gs-monoidal functor\footnotemark{} with respect to the monoidal structure of $\cat{Rel}$ as in \cref{ex:Rel}.
		\footnotetext{A strong gs-monoidal functor $\cC \to \cat{Rel}$ is similar to a lax gs-monoidal functor (\cref{def:gs-monoidal_functor}), but moreover the structure morphisms $\psi_0$ and $\psi_{A,B}$ are isomorphisms.
			See \cite{fritz2022lax} for more details on gs-monoidal functors.}%
	\end{theorem}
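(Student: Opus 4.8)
The plan is to verify in turn that $\Upsilon$ is a functor and then that it is strong gs-monoidal, the bulk of the work being functoriality. I would first record two elementary facts: (i) absolute continuity $\ll$ (\cref{def:ac}) is transitive and stable under post-composition, i.e.\ $u \ll v$ implies $k \circ u \ll k \circ v$ — both obtained by pre- or post-composing the test morphisms $h,h'$ of \eqref{eq:abs_cont} appropriately; and (ii) a deterministic state $a \in \cC_\det(I,A)$ ``is its own support'', meaning that for $x \in \cC_\det(I,A)$ one has $x \ll a$ if and only if $x = a$ — here totality and functionality of $a$ and $x$ force the factorisation of $x$ through $a$ to be by the scalar $\id_I$.

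Granting (i)--(ii), the identity law is immediate: $\Upsilon(\id_A)(a) = \{x \in \cC_\det(I,A) \mid x \ll a\} = \{a\}$. For the composition law, fix $f \colon A \to W$, $g \colon W \to X$ and $a \in \cC_\det(I,A)$; I would prove $\Upsilon(g \circ f)(a) = \bigl(\Upsilon(g) \circ \Upsilon(f)\bigr)(a)$ by mutual inclusion. The direction $\supseteq$ uses only (i): if $y \ll f \circ a$ and $z \ll g \circ y$, then $g \circ y \ll g \circ f \circ a$ by stability, and transitivity yields $z \ll g \circ f \circ a$. The direction $\subseteq$ is the crux, and is where \emph{supports} and \emph{point liftings} enter. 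Given $z \ll g \circ f \circ a$, choose a support $(\Supp{}, \suppinc{})$ of the state $f \circ a \colon I \to W$ (which exists by hypothesis). Using causality of $\cC$ (\cref{def:causality}) together with the defining property \eqref{eq:supp_asfaithful} of a support, one shows $z \ll g \circ \suppinc{}$. Point liftings (\cref{def:point_liftings}) applied to the morphism $g \circ \suppinc{} \colon \Supp{} \to X$ then produces $s \in \cC_\det(I,\Supp{})$ with $z \ll g \circ \suppinc{} \circ s$. Setting $y \coloneqq \suppinc{} \circ s \in \cC_\det(I,W)$ gives $z \ll g \circ y$, and since $\suppinc{} \circ s$ lies in the support of $f \circ a$ we also get $y \ll f \circ a$; hence $z \in \bigl(\Upsilon(g)\circ\Upsilon(f)\bigr)(a)$.

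For the strong gs-monoidal structure I would proceed as follows. The comparison morphism $\psi_0 \colon \{\sngltn\} \to \Upsilon(I) = \cC_\det(I,I)$ is an isomorphism because any scalar that is total equals $\discard_I = \id_I$, so $\cC_\det(I,I)$ is a singleton. The comparison $\psi_{A,B} \colon \Upsilon(A) \times \Upsilon(B) \to \Upsilon(A\otimes B)$, $(a,b) \mapsto a \otimes b$, is an isomorphism because deterministic states of $A \otimes B$ biject with pairs: the inverse sends $c$ to $\bigl((\id_A \otimes \discard_B)\circ c,\ (\discard_A \otimes \id_B)\circ c\bigr)$, and functionality of $c$ together with the comonoid laws and \eqref{eq:copy_product} gives back $c = \bigl((\id_A \otimes \discard_B)\circ c\bigr)\otimes\bigl((\discard_A \otimes \id_B)\circ c\bigr)$; naturality and compatibility with the associators, unitors and braiding are then routine checks in the comonoid calculus. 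Finally, preservation of copy and discard reduces, using $\cop_A \circ a = a \otimes a$ and $\discard_A \circ a = \id_I$ for deterministic $a$, to the computations $\Upsilon(\cop_A)(a) = \{(x,x') \mid (x,x') \ll a \otimes a\} = \{(a,a)\}$ and $\Upsilon(\discard_A)(a) = \{w \mid w \ll \id_I\} = \{\sngltn\}$, again by fact (ii) applied to $a \otimes a$ and to $\id_I$. The step I expect to be the main obstacle is the inclusion $\subseteq$ in the composition law: it is the only point at which causality, supports and point liftings must be combined, and in particular establishing $z \ll g\circ f\circ a \Rightarrow z \ll g\circ\suppinc{}$ requires threading the support axiom \eqref{eq:supp_asfaithful} through a post-composition with $g$; the remaining verifications are bookkeeping in the comonoid calculus.
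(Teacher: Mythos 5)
The paper itself does not prove this theorem; it is imported from \cite{fritz2023supports}, with only the remark that the Markov-categorical proof carries over to gs-monoidal categories. So there is no in-paper argument to compare against, and your reconstruction stands on its own; I find it correct and correctly calibrated as to where the work lies. The crux you identify{\,\textemdash\,}the inclusion $\Upsilon(g\circ f)(a)\subseteq\bigl(\Upsilon(g)\circ\Upsilon(f)\bigr)(a)${\,\textemdash\,}is indeed the only place where supports and point liftings must interact, and your chain $z\ll g\circ f\circ a \Rightarrow z\ll g\circ\suppinc{} \Rightarrow z\ll g\circ\suppinc{}\circ s$ with $y\coloneqq\suppinc{}\circ s$ works as written. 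Two calibration remarks. First, the step $z\ll g\circ f\circ a\Rightarrow z\ll g\circ\suppinc{}$ consumes only the support equivalence \eqref{eq:supp_asfaithful}, applied to the test morphisms $h\circ(\id_W\otimes g)$ and $h'\circ(\id_W\otimes g)$; causality is not actually needed at that point (using an unused hypothesis is not an error, but the attribution is misplaced). Second, the parts you call routine are where the non-totality of a general gs-monoidal category bites: marginalizing a state of $A\otimes B$ produces a scalar factor $w=\discard\circ g\circ b$, so deducing $c_X\ll f\circ a$ from $c\ll(f\circ a)\otimes(g\circ b)$ in the naturality square for $\psi_{A,B}$ requires the additional (easy) observation that $w\otimes u\ll u$ for any scalar $w$, in the spirit of \cref{lem:mrel_dom}. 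Your fact (ii) is correct and is best proved exactly as you hint, via the test pair $h=\id_A$, $h'=a\circ\discard_A$ together with totality of deterministic states; it is what makes the identity law and the preservation of $\cop$ and $\discard$ go through. None of this amounts to a gap.
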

	While the theorem has been proven in \cite{fritz2023supports} for the case of Markov categories, it works just as well for gs-monoidal categories.
	We discuss $\Upsilon$ further in the next section under the name of pointed shadow functor.
	
\section{Pointed Shadow Functor}\label{sec:pointed_shadow}
	
	We think of the shadow functor, which is needed to specify a behavior structure from \cref{def:beh_structure}, as a concrete relational model of the ambient category.
	At first sight, it is unclear how strong is the assumption that such a shadow functor exists.
	Here, we show how to construct a specific shadow functor, called the \emph{pointed shadow functor} purely in terms of the data provided by the ambient category itself, as long as it satisfies a few additional properties.
	One should keep in mind that a given ambient category may have multiple plausible shadow functors.
	A pointed shadow functor, however, is the one we use most often because it is consistent with interpreting objects in the ambient category as particular sets of ``points'', e.g.\ like $T$ in $\cat{Tur}$ is the set of Turing machines.
	
	This interpretation, thinking of the objects $T$ and $C$ as \emph{sets} of targets and contexts respectively, also allows us to express the universality of a simulator $s$ in terms of equality of behaviors (of $s$ and the trivial simulator) for every target $t \in T$ and every context $c \in C$ as in relation \eqref{eq:univ_concrete} from \nameref{sec:motivation}.
	
	\begin{figure}[t]\centering
		\includegraphics[width=.9\columnwidth]{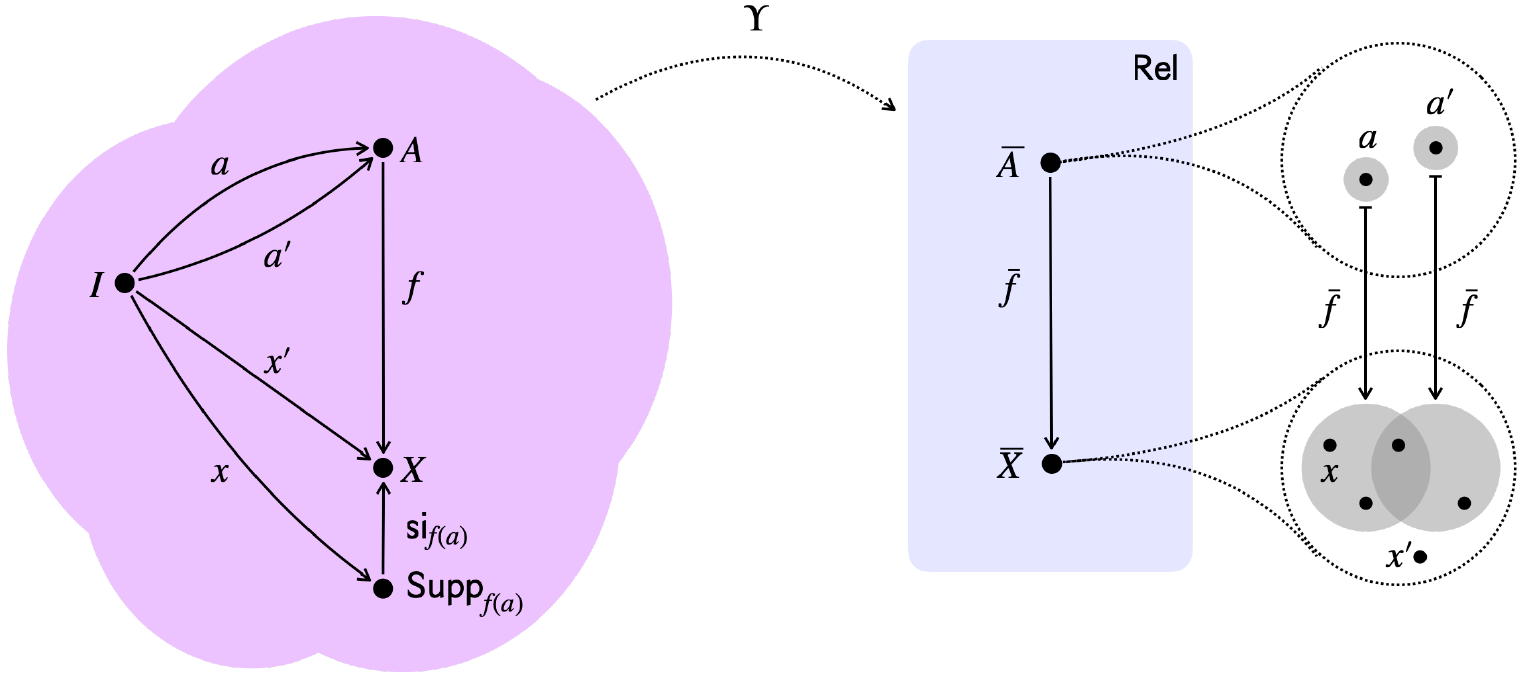}
		\caption{The pointed shadow functor takes an abstract process $f$ to a relation $\behim{f}$.
		When applied to a deterministic point $a$, this relation returns the set $\behim{f}(a)$ of all deterministic points $x$ that are within the support of $f(a)$ in the ambient category.}
		\label{fig:pointed_shadow}
	\end{figure}
		
	\parlabel{Construction of pointed shadows}
	Consider an ambient category $\cC$ that is causal, has supports and point liftings (a detailed presentation of these properties can be found in \cref{sec:additional_def}).
	This means, in particular, that for every morphism $f \colon A \to X$ we can refer to a deterministic subobject $\suppinc{f} \colon \Supp{f} \to X$ which is both necessary and sufficient for equality of processes in the future of $f$, i.e.\ we have equivalence \eqref{eq:supp_asfaithful}.
	In $\cat{Rel}$, the support object $\Supp{f}$ is the union of all the images of points in $A$ under relation $f$:
	\begin{equation}
		\Supp{f} = \Set*[\big]{ x \in X  \given  \exists \, a \in A \text{ such that } x \in f(a) }
	\end{equation}
	This set is canonically embedded within $X$ as a subset thereof.
	We can use supports in a general ambient category $\cC$ to give a canonical description of its morphisms as relations, and this allows us to define the pointed shadow functor $\Upsilon$.
		
	For any object $A$ of $\cC$, let the set $\behim{A}$ consist of the deterministic states of $A$, i.e.\ it is the hom-set $\cC_\det(I,A)$.
	Furthermore, for any morphism $f \in \cC(A,X)$, the relation $\behim{f} \in \cat{Rel}(\behim{A},\behim{X})$ is given by supports as subsets of $\behim{X}$:\footnotemark{}
	\footnotetext{While \cref{eq:point_sha_morph_2} superficially differs from \cref{eq:point_sha_morph} used to define $\Upsilon(f)$ in the statement of \cref{thm:point_sha}, they are in fact equivalent under the assumptions of the theorem \cite{fritz2023supports}.}%
	\begin{equation}\label{eq:point_sha_morph_2}
		\behim{f}(a) \coloneqq \cC_\det \left( I, \Supp{f(a)} \right) .
	\end{equation}
	One can view $\behim{f}(a)$ as a subset of $\behim{X}$ by composing its elements with the associated support inclusion ${\suppinc{f(a)} \colon \Supp{f(a)} \to X}$. 
	The mapping 
	\begin{align}
		\Upsilon(A) &\coloneqq \behim{A}  &  \Upsilon(f) &\coloneqq \behim{f}
	\end{align}
	then gives a functor $\Upsilon \colon \cC \to \cat{Rel}$, which we term the \textbf{pointed shadow functor} (\cref{fig:pointed_shadow}).
	Moreover, $\Upsilon$ can be provided with a strong gs-monoidal structure (\cref{thm:point_sha}), so that it can be used to define a behavior structure on the ambient category (\cref{def:beh_structure}).
	
	There are two main sources of examples of pointed shadow functors in this article.
	One of them is provided by forgetful functors as in the case of spin systems.
	The other is given by inclusion functors for those subcategories of $\cat{Rel}$, which are valid target--context categories in their own right.
	\begin{proposition}[Inclusion as a pointed shadow]\label{prop:inclusion_is_pointed}
		Let $\cC$ be a (quasi-total) gs-monoidal category that has strong supports and is a subcategory $\iota \colon \cC \hookrightarrow \cat{Rel}$ of the category of relations.
		Assume that for all objects $X$ and morphisms $f$ of $\cC$ we have
		\begin{align}
			\label{eq:incl_det_states} \cC_\det(I,X) &= \cat{Rel}_\det \bigl( I,\iota(X) \bigr),  \\  
			\label{eq:incl_supp} \iota \left( \Supp{f} \right) &= \Supp{\iota(f)}.
		\end{align}
		Then the inclusion functor $\iota \colon \cC \hookrightarrow \cat{Rel}$ coincides with the pointed shadow functor $\Upsilon$.
	\end{proposition}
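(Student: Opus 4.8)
The plan is to verify that the functors $\iota$ and $\Upsilon$ agree separately on objects and on morphisms; the only real work is to reconcile the canonical identifications built into the construction of $\Upsilon$ in \cref{sec:pointed_shadow} (deterministic states versus elements, and support objects versus the subsets into which they embed). Throughout I will use that the stated hypotheses ensure $\Upsilon$ is well-defined and computed by the formula \eqref{eq:point_sha_morph_2}. For objects: by definition $\Upsilon(A) \coloneqq \cCdet(I,A)$, and hypothesis \eqref{eq:incl_det_states} identifies this with $\cat{Rel}_\det(I,\iota(A))$, which under the standard identification of deterministic states $I \to Y$ in $\cat{Rel}$ with elements of $Y$ (already used in \cref{sec:ambient}) is just the set $\iota(A)$. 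Hence $\iota$ and $\Upsilon$ agree on objects.

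For morphisms, fix $f \colon A \to X$ in $\cC$. A relation in $\cat{Rel}(\iota(A),\iota(X))$ is determined by its sections over the points of $\iota(A)$, so it suffices to show $\Upsilon(f)(a) = \iota(f)(a)$ as subsets of $\iota(X)$ for every deterministic state $a \in \cCdet(I,A)$. By \eqref{eq:point_sha_morph_2}, $\Upsilon(f)(a)$ is the set $\cCdet(I,\Supp{f \circ a})$ embedded into $\iota(X) = \cCdet(I,X)$ by postcomposition with the support inclusion $\suppinc{f \circ a} \colon \Supp{f \circ a} \to X$. I would now apply $\iota$: since $\iota$ is a functor preserving composition and identities and sends $a$ to its corresponding deterministic state in $\cat{Rel}$, and using hypotheses \eqref{eq:incl_det_states} and \eqref{eq:incl_supp}, the same subset of $\iota(X)$ is described as $\cat{Rel}_\det(I,\Supp{\iota(f)\circ\iota(a)})$ postcomposed with $\suppinc{\iota(f)\circ\iota(a)}$. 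Recalling from \cref{sec:pointed_shadow} how supports are computed in $\cat{Rel}$, the state $\iota(f)\circ\iota(a) \colon I \to \iota(X)$ is precisely the subset $\iota(f)(a) \subseteq \iota(X)$, its support object is that same subset, and its support inclusion is the subset inclusion; hence this last set, re-embedded into $\iota(X)$, equals $\iota(f)(a)$. So $\Upsilon(f)(a) = \iota(f)(a)$ for all $a$, giving $\Upsilon(f) = \iota(f)$ and, together with the object case, $\iota = \Upsilon$.

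The step I expect to require the most care is the last one: checking that in $\cat{Rel}$ the support inclusion of a state is literally the inclusion of its underlying subset, and that via \eqref{eq:incl_supp} this is compatible with the support inclusion formed inside $\cC$ and transported by $\iota$. Once this matching of identifications is made explicit (using the description of supports in $\cat{Rel}$ recalled in \cref{sec:pointed_shadow}), everything else is a routine unwinding of definitions together with functoriality of $\iota$; no mathematical input beyond the two stated hypotheses is needed.
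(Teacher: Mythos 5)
Your proof is correct and takes the same route as the paper, which simply declares the proposition "an immediate consequence of the definition of the pointed shadow functor"; your write-up is the explicit unwinding of that definition against the two hypotheses. The care you flag about matching support inclusions (not just support objects) is a reasonable point of pedantry, but it resolves exactly as you describe using the description of supports in $\cat{Rel}$.
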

	\begin{proof}
		This is an immediate consequence of the definition of the pointed shadow functor.
	\end{proof}
	
	\begin{example}[Pointed shadow for functions]\label{ex:function_pointed}
		For the category $\cat{Set}$ of sets and functions, the inclusion ${\iota \colon \cat{Set} \hookrightarrow \cat{Rel}}$ is its pointed shadow.
		Indeed, all morphisms in $\cat{Set}$ are deterministic, and deterministic morphisms of $\cat{Rel}$ are precisely functions, so that we have
		\begin{equation}
			\cat{Set}_\det = \cat{Set} = \cat{Rel}_\det,
		\end{equation}
		which implies \cref{eq:incl_det_states}.
		Furthermore, the support of a function (both in $\cat{Set}$ as well as in $\cat{Rel}$) is its image:
		\begin{equation}
			\Supp{f} = \im(f) = \Supp{\iota(f)}
		\end{equation}
		so that \cref{eq:incl_supp} follows.
	\end{example}
	
	\parlabel{Ambient imitation relation for intrinsic behaviors and pointed shadow}
	Given a target--context category with intrinsic behaviors whose shadow functor is the pointed shadow functor, we can think of the behavioral relation as a relation on the hom-set $\cC_\det(I,B)$, which in this case coincides with $\behim{B}$.
	The resulting ambient imitation relation $f \mrelop g$ then holds ifand only if for all $a \in \behim{A} \coloneqq \cC_\det(I,A)$, we have that 
	\begin{enumerate}
		\item for every deterministic state $v \in \cC_\det(I,B)$ within the support of the composite
			\begin{equation}\label{eq:behavior_of_g}
				\begin{tikzcd}
					I & A & {T \otimes C} & B
					\arrow["a", from=1-1, to=1-2]
					\arrow["g", from=1-2, to=1-3]
					\arrow["\eval", from=1-3, to=1-4]
				\end{tikzcd}
			\end{equation}
			there exists a deterministic state $u_v \in \cC_\det(I,B)$ within the support of
			\begin{equation}\label{eq:behavior_of_f}
				\begin{tikzcd}
					I & A & {T \otimes C} & B
					\arrow["a", from=1-1, to=1-2]
					\arrow["f", from=1-2, to=1-3]
					\arrow["\eval", from=1-3, to=1-4]
				\end{tikzcd}
			\end{equation}
			satisfying $u_v \brelop v$; and conversely
			
		\item for every $u \in \cC_\det(I,B)$ within the support of the state from \eqref{eq:behavior_of_f}, either 
			\begin{itemize}
				\item there exists a $v_u \in \cC_\det(I,B)$ within the support of the state from \eqref{eq:behavior_of_g}, such that $u \brelop v_u$ holds, or
				\item there are no deterministic states in the support of the state from \eqref{eq:behavior_of_g}.
			\end{itemize}
	\end{enumerate}
	These two conditions of course come from the enhancement and degradation requirements respectively, as expressed in \cref{def:brel}.

\section{Intrinsification}\label{app:intrinsification}

	In this section, starting from an arbitrary target--context category with behaviors, $\cC$, we construct a target--context category with intrinsic behaviors, $\cC_{\intr}$, which is defined in terms of the same ambient category, same $T$ and $C$, and the same shadow functor.
	We call $\cC_{\intr}$ the \textbf{intrinsification of} $\bm{\cC}$. 
	The construction of $\cC_{\intr}$ is such that whenever $f$ behaviourally subsumes $g$ within $\cC_{\intr}$ then $f$ also behaviourally subsumes $g$ within $\cC$. 
	In particular, universality in $\cC_{\intr}$ implies universality \mbox{in $\cC$}.
	\begin{theorem}[Intrinsification]
		Let $(\cC,T,C,\Beh,\behim{B},\behim{\eval},\brel)$ be a target--context category with behaviors. Then there is a target--context category with intrinsic behaviors given by $(\cC,T,C,\Beh,B_{\intr},\eval_{\intr},\brel')$ such that for any $f,g \colon A \to T \otimes C$, we have
		\begin{equation}\label{eq:intrinsification_weakening}
			f \mrelop' g \implies f \mrelop g,
		\end{equation}
		where $\mrel$ and $\mrel'$ are the associated ambient imitation relations (\cref{def:mrel}).
	\end{theorem}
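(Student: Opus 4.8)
The plan is to let the new behaviors simply be $T \otimes C$ itself. Set $B_{\intr} \coloneqq T \otimes C$ and $\eval_{\intr} \coloneqq \id_{T \otimes C}$; since functors preserve identities we get $\Beh(B_{\intr}) = \Beh(T \otimes C)$ and $\Beh(\eval_{\intr}) = \id_{\Beh(T \otimes C)}$, so this data is intrinsic in the sense of \cref{def:intrinsic_beh}. The only nontrivial choice is a preorder $\brel'$ on $\Beh(T \otimes C)$. Writing $\behim{\eval}$ for the (partial) function $\Beh(T \otimes C) \to \behim{B}$ of the given behavior structure, I would put, for $p,q \in \Beh(T \otimes C)$,
\[
  p \brelop' q \quad \coloniff \quad p = q \ \text{ or } \ \bigl( p,q \in \dom(\behim{\eval}) \ \text{ and } \ \behim{\eval}(p) \brelop \behim{\eval}(q) \bigr).
\]
A one-line check shows $\brel'$ is reflexive (first disjunct) and transitive (transitivity of $\brel$ for the second-disjunct cases, reflexivity of $\brel$ to absorb the diagonal cases), hence a preorder. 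Therefore $(\Beh, \Beh(B_{\intr}), \Beh(\eval_{\intr}), \brel')$ is a behavior structure (\cref{def:beh_structure}), and — since every behavior structure yields a target--context category with behaviors via \cref{lem:brel_precomp} and \cref{prop:beh_restriction} — the induced ambient imitation relations $\mrel'$ make $(\cC,T,C,\Beh,B_{\intr},\eval_{\intr},\brel')$ a target--context category with intrinsic behaviors.

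All the content then sits in \eqref{eq:intrinsification_weakening}. The key property of $\brel'$ I would isolate first: \emph{if $p \brelop' q$ and at least one of $p,q$ lies in $\dom(\behim{\eval})$, then both do and $\behim{\eval}(p) \brelop \behim{\eval}(q)$} — immediate from the definition together with reflexivity of $\brel$. Now suppose $f \mrelop' g$ for $f,g \colon A \to T\otimes C$. Since $\behim{\eval_{\intr}} = \id$, unfolding \cref{def:mrel} and \cref{def:brel} this says exactly that for every $a \in \dom(\behim{g})$ there exist an enhancement map $\mathrm{enh}_a \colon \behim{g}(a) \to \behim{f}(a)$ and a degradation map $\mathrm{deg}_a \colon \behim{f}(a) \to \behim{g}(a)$ with respect to $\brel'$. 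To prove $f \mrelop g$, fix $a \in \dom(\behim{\eval}\circ\behim{g})$; because post-composing with a partial function only shrinks domains, $a \in \dom(\behim{g})$, so $\mathrm{enh}_a$ and $\mathrm{deg}_a$ are available. I would then build an enhancement map $\behim{\eval}\circ\behim{g}(a) \to \behim{\eval}\circ\behim{f}(a)$ with respect to $\brel$ as follows: for $b$ in the source, choose $x \in \behim{g}(a) \cap \dom(\behim{\eval})$ with $\behim{\eval}(x) = b$, set $x' \coloneqq \mathrm{enh}_a(x) \in \behim{f}(a)$; then $x' \brelop' x$ with $x \in \dom(\behim{\eval})$ forces $x' \in \dom(\behim{\eval})$ and $\behim{\eval}(x') \brelop b$, so $\behim{\eval}(x') \in \behim{\eval}\circ\behim{f}(a)$ is a legitimate image of $b$. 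The degradation map $\behim{\eval}\circ\behim{f}(a) \to \behim{\eval}\circ\behim{g}(a)$ is produced symmetrically from $\mathrm{deg}_a$. Hence $\behim{\eval}\circ\behim{f}$ imitates $\behim{\eval}\circ\behim{g}$, i.e.\ $f \mrelop g$.

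I expect the main obstacle to be exactly the bookkeeping around the partiality of $\behim{\eval}$: the set of inputs over which the imitation condition is tested jumps from $\dom(\behim{g})$ to the smaller $\dom(\behim{\eval}\circ\behim{g})$ on post-composition, and the maps $\mathrm{enh}_a,\mathrm{deg}_a$ act on all of $\behim{g}(a)$ and $\behim{f}(a)$ — including points outside $\dom(\behim{\eval})$ that must be discarded. The definition of $\brel'$ is tuned precisely so that these maps descend to $\brel$-enhancement and $\brel$-degradation maps after post-composition with $\behim{\eval}$, and checking this descent (and that all chosen images land in the correct subsets of $\behim{B}$) is the only step that needs genuine care. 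If one does not assume $\behim{\eval}$ is single-valued, the same argument works after replacing the equalities ``$\behim{\eval}(x) = b$'' by memberships ``$b \in \behim{\eval}(x)$'' and reading $\behim{\eval}(p) \brelop \behim{\eval}(q)$ as the corresponding statement between subsets; since $\behim{\eval}$ is a partial function in every example of the paper, I would present the single-valued case in the text.
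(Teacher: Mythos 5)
Your proof is correct and follows essentially the same route as the paper's: the same choice $B_{\intr} = T \otimes C$ and $\eval_{\intr} = \id_{T\otimes C}$, a behavioral relation obtained by pulling $\brel$ back along $\behim{\eval}$, and the implication \eqref{eq:intrinsification_weakening} established by descending the enhancement and degradation maps along $\behim{\eval}$. The only (harmless) difference is in the choice of $\brel'$: you take the reflexive closure of the pullback of $\brel$ restricted to $\dom(\behim{\eval})$, whereas the paper sets $b_1 \brelop' b_2 \coloniff \behim{\eval}\circ b_1 \imrelop \behim{\eval}\circ b_2$, which additionally places every point above any point outside $\dom(\behim{\eval})$ \textemdash{} both are preorders and both make the descent argument go through.
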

	Note that implication \eqref{eq:intrinsification_weakening} equivalently means that the identity is a target--context functor from $\cC_{\intr}$ to $\cC$. 
	\begin{proof}
		We define 
		\begin{align}
			B_{\intr} &\coloneqq T \otimes C,  & \eval_{\intr} &\coloneqq \id_{T \otimes C},
		\end{align}
		and for $b_1, b_2 \in \behim{B_{\intr}}$ also
		\begin{equation}\label{eq:newrel}
			b_1 \brelop' b_2 \quad \coloniff \quad \behim{\eval} \circ b_1 \imrelop \behim{\eval} \circ b_2,
		\end{equation}
		where by abuse of notion we denote by $b_1$ and $b_2$ both the elements of $\behim{B_{\intr}}$ as well as the corresponding deterministic states. 
		
		Let us now prove that $f \mrelop' g$ implies $f \mrelop g$.
		First, we show that if $\behim{\eval}\circ \behim{g}(a)$ is defined\footnotemark{} for an $a \in \behim{A}$ (which implies that $\behim{g}(a)$ is defined), then so is $\behim{\eval} \circ \behim{f}(a)$.
		\footnotetext{When we say that a relation of type $I \to X$ is \emph{defined}, we mean that it is non-empty as a subset of $I \times X \cong X$, or in other words that it is total as a morphism in $\cat{Rel}$.}%
		The assumption $f \mrelop' g$ is equivalent to $\behim{f} \imrelop' \behim{g}$ since $\eval_{\intr}$ is just the identity on $T \otimes C$.
		Hence, by \cref{def:brel} of the imitation relation $\imrelop'$, $\behim{f}(a)$ is defined whenever $\behim{g}(a)$ is, and there exists a $\brel'$-enhancement 
		\begin{equation}\label{eq:old enh type}
			\mathrm{enh}_a \colon \behim{g}(a) \to \behim{f}(a).
		\end{equation}
		This means that for any $y \in \behim{g}(a)$ we have $\mathrm{enh}_a(y) \brelop' y$, which is equivalent to
		\begin{equation}\label{eq:enhy}
			 \behim{\eval} \circ \mathrm{enh}_a(y) \imrelop \behim{\eval} (y)
		\end{equation}
		by the definition of $\brelop'$ in \eqref{eq:newrel}.
		Since $\behim{\eval}\circ \behim{g}(a)$ is defined by assumption, there exists at least one $y \in \behim{g}(a)$ such that $\behim{\eval}(y)$ is defined.
		But then, by relation \eqref{eq:enhy}, $\behim{\eval} \circ \mathrm{enh}_a(y)$ is also defined. 
		Since $\mathrm{enh}_a(y)$ is an element of $\behim{f}(a)$, this in particular implies that that $\behim{\eval}\circ \behim{f}(a)$ is defined. 
	
		Next, we prove that $f \mrelop g$ holds.
		We consider an arbitrary $a \in \behim{A}$ such that $\behim{\eval} \circ \behim{g}(a)$ is defined.
		Then, the task is to construct a $\brel$-enhancement 
		\begin{equation}
			\mathrm{Enh}_a \colon \behim{\eval}\circ \behim{g}(a) \to \behim{\eval}\circ \behim{f}(a)
		\end{equation}
		as well as a $\brel$-degradation 
		\begin{equation}
			\mathrm{Deg}_a \colon \behim{\eval}\circ \behim{f}(a) \to \behim{\eval}\circ \behim{g}(a).
		\end{equation}
		We only show the construction of $\mathrm{Enh}_a$ explicitly; the one of $\mathrm{Deg}_a$ is analogous.
		
		By assumption, we have $f \mrelop' g$ and as before this implies that there exists a $\brel'$-enhancement as in \eqref{eq:old enh type}.
		That is, relation \eqref{eq:enhy} holds for every $y \in \behim{g}(a)$, which implies that there exists a $\brel$-enhancement 
		\begin{equation}\label{eq:double enh}
			\mathrm{enh}_{a,y} \colon \behim{\eval}(y) \to \behim{\eval} \circ \mathrm{enh}_a(y).
		\end{equation}
		Let $b$ be an arbitrary element of the domain $\behim{\eval}\circ \behim{g}(a)$ of the desired enhancement $\mathrm{Enh}_a$.
		We can choose a $y \in \behim{g}(a)$ such that $b$ is within its image under $\behim{\eval}$.
		Using it, we then define 
		\begin{equation}
			\mathrm{Enh}_a(b) \coloneqq \mathrm{enh}_{a,y}(b).
		\end{equation}
		This indeed defines a $\brel$-enhancement of correct type. 
	\end{proof}

\section{Polynomially Bounded Relations}\label{sec:relpoly}

	In this section, we construct the ambient category $\cat{Rel}_{\mathrm{poly}}$ used in \cref{sec:spinmodel} to describe universality of spin models.
	An objects of $\cat{Rel}_{\mathrm{poly}}$ is a set $S$ equipped with a size function $\abs{\ph}_S \colon S \to \R_{\geq 0}$.
	A generic morphism $f \colon S \to R$ is given by a relation $f \in \cat{Rel}( S, R)$ that is polynomially bounded, i.e.\ such that there exists a non-decreasing polynomial $\mathfrak{p}_f\in \R_{\geq 0}[x]$ satisfying
	\begin{equation}\label{eq:poly constr}
		r \in f(s) \quad \implies \quad \abs{r}_R \leq \mathfrak{p}_f(\abs{s}_S)
	\end{equation} 
	for all $r\in R$ and all $s\in S$.
	It is straightforward to check that this condition is preserved by the composition of relations.
	Moreover, choosing $\mathfrak{p}_{\id}$ to be the identity function shows that $\cat{Rel}_{\mathrm{poly}}$ contains all identities and is thus a category. 
	
	Next we equip $\cat{Rel}_{\mathrm{poly}}$ with the relevant structure to turn it into a gs-monoidal category.
	This structure is inherited from $\cat{Rel}$ as in \cref{ex:Rel}, we just need to show that all relevant morphisms involved in the construction satisfy implication \ref{eq:poly constr}.  
	Specifically, we define the tensor product of objects to be
	\begin{equation}
		\left( S, \abs{\ph}_S \right) \otimes \left(R, \abs{\ph}_R \right) \coloneqq \left(S \times R, \abs{\ph}_{S\otimes R} \right),
	\end{equation}
	where $\times$ is the cartesian product of sets and the size function is
	\begin{equation}
		\abs{\ph}_{S \otimes R}(s,r) \coloneqq \max \left\{ \abs{s}_S, \abs{r}_R \right\} .
	\end{equation}
	The tensor product of morphisms $f \colon S \to M$ and $ g \colon R \to N$ is defined by 
	\begin{equation}
		(m,n) \in (f \otimes g)(s,r) \quad \iff \quad  m \in f(s)  \text{ and } n \in g(r).
	\end{equation} 
	Note that this defines a morphism in $\cat{Rel}_{\mathrm{poly}}$, since we can derive
	\begin{equation}
		\abs{(m,n)}_{M \otimes N} \leq (\mathfrak{p}_f + \mathfrak{p}_g) \left( \abs{(s,r)}_{S \otimes R} \right)
	\end{equation}
	via the definition of $\abs{\ph}_{S \otimes R}$ and the fact that $\mathfrak{p}_f$ and $\mathfrak{p}_g$ are non-decreasing.
	The unit of $\otimes$ is given by $I \coloneqq \{\sngltn \}$ with $\abs{\sngltn}_I = 0$.
	Finally, we define $\cop_S(s)=(s,s)$ and $\discard_S(s)=\{\sngltn\}$ as in $\cat{Rel}$. 
	It is straightforward to see that these morphisms satisfy implication \eqref{eq:poly constr}.
	Moreover, $\cat{Rel}_{\mathrm{poly}}$ is quasi-total for the very same reason as $\cat{Rel}$.
 
\section{Monoidal Computers}
\label{sec:monoidal_computer}

	A related framework to the one we present here is that of monoidal computers \cite{pavlovic2013monoidal,pavlovic2018monoidal,pavlovic2023programs}, which provides a categorical model of computation and in this sense also shares many motivations with our approach.
	Here, we discuss this relation more precisely. 
	In particular, we show what additional data, on top of a target--context category with intrinsic behaviors, is sufficient to obtain a monoidal computer.
	
	The ambient category $\cC$ in works on monoidal computers is assumed to be gs-monoidal, just like in our case.
	A \textbf{monoidal computer}, as characterized by \cite[Proposition 3.2]{pavlovic2018monoidal}, consists of 
	\begin{enumerate}
		\item a fixed object $\prog$,
		\item a morphism $\{\} \colon \prog \otimes W \to B$ called program evaluator for all objects $W$ and $B$, and
		\item a deterministic morphism $[] \colon \prog \otimes X \to \prog$ called partial evaluator for all objects $X$,
	\end{enumerate}
	which satisfy compatibility conditions
	\begin{equation}\label{eq:pavlovic_compatibility}
		\tikzfig{pavlovic_compatibility}
	\end{equation}
	and the following `surjectivity' conditions:
	\begin{equation}\label{eq:pavlovic_universality}
		\tikzfig{pavlovic_universality}
	\end{equation}
	where the search for $r_f \colon A \to \prog$ is restricted to the deterministic subcategory $\cC_\det$ of $\cC$.
	Equations \eqref{eq:pavlovic_universality} are the programmability conditions that justify calling the program evaluator $\{\}$ a \emph{universal} evaluator.
	
	This notion does not exactly coincide with the concept of universality introduced in \cref{sec:Reductions}.
	Instead, it corresponds to reachability in the sense of \cref{sec:wps}, since conditions \eqref{eq:pavlovic_compatibility} and \eqref{eq:pavlovic_universality} are demanding that $\{\} \colon \prog \otimes W \to B$ is an $A$-complete parametrization of maps $W \to B$ (\cref{def:wps_intrinsic})
	
	Properties \eqref{eq:pavlovic_compatibility} and \eqref{eq:pavlovic_universality} allow one to express any morphism $f \colon A \otimes W \to B$ as an application of the universal evaluator to some program $r_f(a)$ in $\prog$ and to reduce computations with $n$ inputs to computations with $n-1$ inputs via the partial evaluator.
	While there is no explicit partial evaluator in our approach, we can compare a universal evaluator with the evaluation of a given simulator $s$ in the form of $\eval \circ s \in \cC(\prog \otimes C,  B)$. 
	We formalize this observation in \cref{prop:road_to_monoidal_computer} below.
	
	One of the consequences of the definition of monoidal computers is that every object $A \in \cC$ has an associated idempotent $p_A$
	\begin{equation}\label{eq:objects_from_idempotents}
		\tikzfig{objects_from_idempotents}
	\end{equation}
	that splits through $A$ and moreover $\sigma_A$ is deterministic.
	\begin{proposition}\label{prop:road_to_monoidal_computer}
		Consider a target--context category with intrinsic behaviors $(\cC,T,C,\eval,\brel)$ such that $C$ is identical to $B$ and denoted by $\prog$. 
		For every $A \in \cC$, assume that 
		\begin{itemize}
			\item the ambient relation $\mrel$ on $\cC(A,T \otimes \prog)$ is trivial, i.e.\ it coincides with equality,
			\item $\eval \colon T \otimes \prog \to \prog$ is an $A$-complete parametrization of maps $\prog \to \prog$, and
			
			\item there is a split idempotent $p_A \colon \prog \to \prog$ as in \cref{eq:objects_from_idempotents}.
		\end{itemize}
		If there is a universal simulator $s$ of type $\prog \otimes \prog \to T \otimes \prog$, then $\cC$ is a monoidal computer. 
	\end{proposition}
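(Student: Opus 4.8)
The plan is to exhibit the three ingredients of a \textbf{monoidal computer} — the fixed object (which is $\prog$ itself), a family of program evaluators $\{\}_{W,B} \colon \prog \otimes W \to B$, and a family of deterministic partial evaluators $[]_X \colon \prog \otimes X \to \prog$ — and then to check the compatibility and programmability conditions \eqref{eq:pavlovic_compatibility} and \eqref{eq:pavlovic_universality}. The key observation is that, by \cref{prop:weak pt surj}, the composite $\{\} \coloneqq \eval \circ s \colon \prog \otimes \prog \to \prog$ is an $A$-complete parametrization of maps $\prog \to \prog$ for every object $A$, because $s$ is universal and $\eval$ is such a parametrization by hypothesis. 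This single morphism serves as the program evaluator on the object $\prog$; all other objects are reached using the split idempotents $p_A$ of \eqref{eq:objects_from_idempotents}.

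First I would extend $\{\}$ to arbitrary objects. Writing $\sigma_A \colon A \to \prog$ (deterministic) and $\rho_A \colon \prog \to A$ for the section and retraction of $p_A = \sigma_A \circ \rho_A$, with $\rho_A \circ \sigma_A = \id_A$, I set
\[
	\{\}_{W,B} \;\coloneqq\; \rho_B \circ \{\} \circ (\id_\prog \otimes \sigma_W).
\]
To get programmability \eqref{eq:pavlovic_universality}, given $f \colon A \otimes W \to B$ I would apply the complete-parametrization property of $\{\}$ to $\tilde f \coloneqq \sigma_B \circ f \circ (\id_A \otimes \rho_W)$, obtaining $r_f \colon A \to \prog$ in $\cC_{\det}$ with $\{\} \circ (r_f \otimes \id_\prog) = \tilde f$; a short diagram chase using $\rho_B \circ \sigma_B = \id_B$ and $\rho_W \circ \sigma_W = \id_W$ then gives $\{\}_{W,B} \circ (r_f \otimes \id_W) = f$. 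Next I would build the partial evaluators: for an object $X$, use that $X \otimes \prog$ is a retract of $\prog$ and apply programmability with $A \coloneqq \prog \otimes X$ to the morphism $\mu \coloneqq \{\} \circ (\id_\prog \otimes \sigma_{X \otimes \prog}) \colon (\prog \otimes X) \otimes \prog \to \prog$, obtaining a deterministic $[]_X \coloneqq r_\mu \colon \prog \otimes X \to \prog$ with $\{\} \circ ([]_X \otimes \id_\prog) = \mu$. Unwinding the encodings this is the compatibility condition \eqref{eq:pavlovic_compatibility} for $W = B = \prog$, and the general $W$, $B$ case follows by pre- and post-composing with the appropriate $\sigma$'s and $\rho$'s, provided the family of retracts is chosen coherently with respect to $\otimes$ (which can be arranged, e.g.\ defining $\sigma_{X\otimes Y}$ in terms of $\sigma_X$ and $\sigma_Y$). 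The remaining coherence in \eqref{eq:pavlovic_compatibility} — associativity and naturality of partial evaluation — is verified by the same program-and-retract technique.

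The step I expect to be the main obstacle is upgrading the parametrization property supplied by \cref{def:wps_intrinsic} and \cref{prop:weak pt surj}, which holds only up to the imitation relation $\imrel$ on shadows, to the strict equalities of $\cC$-morphisms demanded by the monoidal-computer axioms. Here one must exploit that the ambient relation $\mrel$ on every hom-set $\cC(A, T \otimes \prog)$ is trivial: together with intrinsicness of the behavior structure and $B = C = \prog$, this should force the relevant behaviors — and hence, through the shadow functor, the underlying morphisms — to be compared by equality, so that in particular the universality equation $s \circ (r \otimes \id_\prog) = \id_{T \otimes \prog}$ holds on the nose and the witnessing programs above satisfy the required equalities rather than mere $\imrel$-relations. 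A secondary point to be careful about is totality: one has to check that the $r_f$ (and therefore the $[]_X$) can be taken in $\cC_{\det}$ and not merely functional, which is where the determinism of the sections $\sigma_A$ and any totality carried by $\eval$ comes in. Everything else is bookkeeping with the retract data $p_A$, which is smooth because $\sigma_A$ is deterministic and compatible with the monoidal structure.
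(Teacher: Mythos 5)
Your proposal matches the paper's own proof: you define the basic program evaluator as $\eval \circ s$, invoke \cref{prop:weak pt surj} to get the $A$-complete parametrization for every $A$, extend to arbitrary objects via the split idempotents $p_A$, obtain $r_f$ by parametrizing a conjugated $\tilde f$, and construct the partial evaluators by applying programmability to the program evaluator itself — which is exactly the paper's construction together with the remark that follows its proof. Your explicit flagging of the need to upgrade the $\imrel$-relations to strict equalities via the triviality of $\mrel$, and of the coherence of the retract family, only spells out details the paper leaves implicit.
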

	\begin{proof}
		We need to construct program evaluators that satisfy conditions \eqref{eq:pavlovic_universality}.
		We define the basic one of type $\prog \otimes \prog \to \prog$ from the universal simulator and the evaluation morphism.
		Other program evaluators can be obtained from this one by split idempotents.
		
		That is, we define
		\begin{equation}\label{eq:program_evaluator_1}
			\tikzfig{program_evaluator_1}
		\end{equation}
		By \cref{prop:weak pt surj}, this program evaluator is an $A$-complete parametrization of maps $\prog \to \prog$ for any $A$, so that it satisfies conditions \eqref{eq:pavlovic_universality}.
		We define other partial evaluators as
		\begin{equation}\label{eq:program_evaluator_2}
			\tikzfig{program_evaluator_2}
		\end{equation}
		For any $f \colon A \otimes W \to B$, define 
		\begin{equation}\label{eq:program_evaluator_test}
			\tikzfig{program_evaluator_test}
		\end{equation}
		Then the deterministic morphism $r_f \colon A \to \prog$ for which we have \cref{eq:pavlovic_universality} is given by $r_{\tilde f}$, which exists by the fact that the program evaluator defined by \cref{eq:program_evaluator_1} is an $A$-complete parametrization of maps $\prog \to \prog$.
	\end{proof}
	Partial evaluators can also be constructed from the data provided by the assumptions of \cref{prop:road_to_monoidal_computer}, as one can use the reachability of program evaluators to define them so that the compatibility conditions are satisfied.
	This is essentially because in condition \eqref{eq:pavlovic_universality}, we can choose $f$ to be $\{  \} \colon (\prog \otimes A) \otimes W \to B$ and the partial evaluator $[] \colon \prog \otimes A \to \prog$ is then the corresponding morphism $r_{\{\}}$ coming from \cref{def:wps_intrinsic}.
	
	\begin{remark}
		While \cref{prop:road_to_monoidal_computer} is technically correct, it is not clear whether its direct application is of particular interest.
		In particular, note that one of its assumptions, that the ambient relation $\mrel$ is equality, can only be satisfied if all morphisms in $\cC$ are total, i.e.\ if it is a Markov category.
		Additionally, the assumption of intrinsic behaviors is not satisfied for the target--context category $\cat{Tur}$ in which singleton universal simulators coincide with universal Turing machines.
		
		Nevertheless, we believe that the result is of interest as it can inspire strategies to build monoidal computers even in case its assumptions are not in fact satisfied.
		Let us illustrate this with $\cat{Tur}$ from \cref{ex:TM,ex:TM_behaviors}.
		There contexts $C$ are given by strings in $\Sigma^\star$, but behaviors do not.
		Instead, $\behim{B}$ is the set $\Sigma^\star \times \pwrset(\Sigma^\star)$.
		Let us identify $\prog$ as the set of finite strings, so that $\prog = C$ holds.
		Then, instead of \cref{eq:program_evaluator_1}, we use
		\begin{equation}\label{eq:program_evaluator_3}
			\tikzfig{program_evaluator_3}
		\end{equation}
		to define the program evaluator, where $\eval^\intr$ is the evaluation morphism in the target--context category $\cat{Tur}^\intr$ (see \cref{ex:TM_behaviors}), which can be expressed as
		\begin{equation}
			\tikzfig{eval_intr}
		\end{equation}
		in terms of the evaluation function $\behim{\eval}$ in $\cat{Tur}$.
	\end{remark}

\bibliographystyle{abbrvnat}
\bibliography{all-my-bibliography,references_tomas}{}
	
\end{document}